\newtheorem{theorem}{Theorem}
\newtheorem{proposition}[theorem]{Proposition}
\newtheorem{corollary}[theorem]{Corollary}
\newtheorem{lemma}[theorem]{Lemma}
\theoremstyle{definition}
\newtheorem*{remark}{Remark}
\newtheoremstyle{red}{}{}{\normalfont}{}{\color{red!80!black}\bfseries}{.}{ }{}
\theoremstyle{red}
\let\nc\newcommand
\nc{\NegW}{W_\tau}
\nc{\RW}{\Omega_\FF}
\renewcommand{\*}{\textup{*}}
\newcommand{\aster}{{\mbox{$*$}}}
\newcommand{\<}{\left\langle}
\renewcommand{\>}{\right\rangle}
\renewcommand{\bar}{\;\rule{0pt}{9.5pt}\right|\;}
\newcommand{\lset}{\left\{\left.}
\newcommand{\rset}{\right\}}
\nc{\lsetr}{\left\{\,}
\nc{\rsetr}{\right.\right\}}
\nc{\barr}{\;\rule{0pt}{9.5pt}\left|\;}
\DeclareMathOperator{\Tr}{Tr}
\DeclareMathOperator{\supp}{supp}
\newcommand{\norm}[2]{\left\lVert#1\right\rVert_{\,#2}}
\newcommand{\proj}[1]{\ket{#1}\!\bra{#1}}
\nc{\id}{\mathbbm{1}}
\newcommand{\E}{\mathcal{E}}
\newcommand{\N}{\mathcal{N}}
\newcommand{\D}{\mathcal{D}}
\newcommand{\M}{\mathcal{M}}
\renewcommand{\O}{\mathcal{O}}
\newcommand{\V}{\mathcal{V}}
\newcommand{\C}{\mathcal{C}}
\renewcommand{\S}{\mathcal{S}}
\newcommand{\omin}{\!\underset{\smash{\min}}{\otimes}\!}
\newcommand{\omax}{\!\underset{\smash{\max}}{\otimes}\!}
\newcommand{\RR}{\mathbb{R}}
\newcommand{\DD}{\mathbb{D}}
\newcommand{\NN}{\mathbb{N}}
\newcommand{\FF}{\mathcal{F}}
\nc{\Dmax}{D_{\max}}
\let\O\OO
\nc{\SEP}{\mathrm{SEP}}
\nc{\STAB}{\mathrm{STAB}}
\nc{\PPT}{\mathrm{PPT}}
\nc{\PPTP}{\mathrm{PPTP}}
\nc{\SEPP}{\mathrm{SEPP}}
\nc{\SP}{\mathrm{KP}}
\nc{\FP}{\mathrm{FP}}
\nc{\CPTP}{{\mathrm{CPTP}}}
\let\wt\widetilde
\nc{\Op}{\mathcal{O}}
\nc{\idc}{\mathrm{id}}
\nc{\ve}{\varepsilon}
\nc{\Omax}{\O_{\mathrm{max}}}
\let\texteq\relax
\newcommand{\texteq}[1]{\stackrel{\mathclap{\scriptsize \mbox{#1}}}{=}}
\nc{\sminfty}{{\infty,\bullet}}
\renewenvironment{boxed}[1]%
	{\expandafter\ifstrequal\expandafter{#1}{orange}{\begin{tcolorbox}[colback=orange!5,colframe=orange!15,breakable,enhanced]}{\begin{tcolorbox}[colback=white,colframe=gray!10,breakable,enhanced]}}%
	{\end{tcolorbox}}
\nc{\regrob}{\DD_{s,\FF}^{\infty}}
\nc{\regrobs}{\DD_{s,\FF}^{\sminfty}}
\renewcommand\onecolumngrid{%
\do@columngrid{one}{\@ne}%
\def\set@footnotewidth{\onecolumngrid}
\def\footnoterule{\kern-6pt\hrule width 1.5in\kern6pt}%
}
\let\oldproofname\proofname
\renewcommand{\proofname}{\rm\bf{\oldproofname}}
\newcommand{\deff}[1]{\textbf{\emph{#1}}}
\newcommand{\para}[1]{\textbf{{#1}.} }
\newcommand{\cbeta}{\overline{\beta}}
\newcommand{\calpha}{\overline{\alpha}}
\newcommand{\cperr}{\overline{p}_{\rm err}}
\newcommand{\cDH}{\overline{D}_{H}}
\newcommand{\cbetaAny}{\cbeta_\ve^{\,\rm any}}
\newcommand{\cbetaPar}{\cbeta_\ve^{\,\rm parallel}}
\newcommand{\cperrAny}{\cperr^{\,\rm any}}
\newcommand{\cperrPar}{\cperr^{\,\rm parallel}}
\newcommand{\Thomp}{\Xi}
\newcommand{\pbar}{\,|\,}
\def\DD{D}
\begin{document}

\title{Postselected quantum hypothesis testing}
 
\author{Bartosz Regula}
\email{bartosz.regula@gmail.com}
\affiliation{Mathematical Quantum Information RIKEN Hakubi Research Team, RIKEN Cluster for Pioneering Research (CPR) and RIKEN Center for Quantum Computing (RQC), Wako, Saitama 351-0198, Japan}
\affiliation{Department of Physics, Graduate School of Science, The University of Tokyo, Bunkyo-ku, Tokyo 113-0033, Japan}

\author{Ludovico Lami}
\email{ludovico.lami@gmail.com}
\affiliation{Institut f\"{u}r Theoretische Physik und IQST, Universit\"{a}t Ulm, Albert-Einstein-Allee 11, D-89069 Ulm, Germany}

\author{Mark M. Wilde}
\email{wilde@cornell.edu}
\affiliation{School of Electrical and Computer Engineering, Cornell University, Ithaca, New York 14850, USA}


\begin{abstract}
We study a variant of quantum hypothesis testing wherein an additional `inconclusive' measurement outcome is added, allowing one to abstain from attempting to discriminate the hypotheses. The error probabilities are then conditioned on a successful attempt, with inconclusive trials disregarded. 
We completely characterise this task in both the single-shot and asymptotic regimes, providing exact formulas for the optimal error probabilities. In particular, we prove that the asymptotic error exponent of discriminating any two quantum states $\rho$ and $\sigma$ is given by the Hilbert projective metric $D_{\max}(\rho\|\sigma) + D_{\max}(\sigma \| \rho)$ in asymmetric hypothesis testing, and by the Thompson metric $\max\! \big\{ D_{\max}(\rho\|\sigma),\, D_{\max}(\sigma \| \rho) \big\}$ in symmetric hypothesis testing. This endows these two quantities with fundamental operational interpretations in quantum state discrimination. Our findings extend to composite hypothesis testing, where we show that the asymmetric error exponent with respect to any convex set of density matrices is given by a regularisation of the Hilbert projective metric.
We apply our results also to quantum channels, showing that no advantage is gained by employing adaptive or even more general discrimination schemes over parallel ones, in both the asymmetric and symmetric settings. Our state discrimination results make use of no properties specific to quantum mechanics and are also valid in general probabilistic theories.
\end{abstract}

\maketitle


\section{Introduction}\label{sec:intro}

\subsection{Background}

Quantum hypothesis testing --- a generalisation of the crucial statistical primitive of hypothesis testing to the non-commutative setting --- is a fundamental task that directly underlies many protocols in quantum information science~\cite{hayashi_2016,watrous_2018,wilde_2017}. It is concerned with the problem of distinguishing different quantum states or channels, and in particular asks about the ultimate limits of such distinguishability.

In quantum state discrimination, after receiving one of two possible states, $\rho$ (null hypothesis) or $\sigma$ (alternative hypothesis), one aims to determine which of the two was actually obtained. This is typically done by performing a quantum measurement, or positive operator-valued measure (POVM), given by the tuple $M \coloneqq \{M_1, M_2 \}$ where $M_2 = \id - M_1$. If measurement outcome 1 is obtained (corresponding to the operator $M_1$), one guesses that the state is $\rho$; if outcome 2 is obtained, one guesses $\sigma$. There are then two types of errors that can occur: the \emph{type I error} with probability
\begin{equation}\begin{aligned}
  \alpha(M) \coloneqq \Tr M_2 \rho,
\end{aligned}\end{equation}
which corresponds to incorrectly guessing $\sigma$ when $\rho$ was true (false positive), and the \emph{type II error} with probability
\begin{equation}\begin{aligned}
  \beta(M) \coloneqq \Tr M_1 \sigma,
\end{aligned}\end{equation}
which is the case of guessing $\rho$ when $\sigma$ was true (false negative). Two settings naturally emerge here: one is \emph{asymmetric hypothesis testing}, which asks about how small one of the errors can be subject to constraints on the other error; another setting is \emph{symmetric hypothesis testing}, where the two errors are treated equally and one aims to minimise their average. Specifically, the asymmetric case is concerned with the study of the optimised type II error probability
\begin{equation}\begin{aligned}
  \beta_\ve(\rho, \sigma) \coloneqq& \min_{M \in \M_2} \lset \beta(M) \bar \alpha(M) \leq \ve \rset\\
  =&\!\! \min_{M_1,M_2 \geq 0} \lset \Tr M_1 \sigma \bar M_1 + M_2 = \id,\; \Tr M_2 \rho \leq \ve \rset,
\end{aligned}\end{equation}
where $\M_2$ denotes the set of all two-outcome measurements. The symmetric error depends also on the prior probabilities associated with the two hypotheses, that is, the probability that the state to be discriminated is $\rho$ or that it is $\sigma$. Denoting the respective priors by $p$ and $q \coloneqq 1-p$, we define the symmetric error probability as
\begin{equation}\begin{aligned}
  p_{\rm err} (\rho, \sigma \pbar p, q) \coloneqq& \min_{M \in \M_2} \,\big[ p \alpha(M) + q \beta(M)\big]\\
  =&\!\! \min_{M_1,M_2 \geq 0} \lset p \Tr M_2 \rho + q \Tr M_1 \sigma \bar M_1 + M_2 = \id \rset.
\end{aligned}\end{equation}
The two optimised error probabilities defined above are not too difficult to characterise: they are both efficiently computable semidefinite programs, and the symmetric error even admits an exact solution given by the celebrated Helstrom--Holevo theorem~\cite{helstrom_1969,holevo_1972} in terms of the trace distance (total variation distance),
\begin{equation}\begin{aligned}
  p_{\rm err} (\rho, \sigma \pbar p, q) = \frac12 \left( 1 - \norm{p \rho - q \sigma}{1} \right),
\end{aligned}\end{equation}
where $\norm{\cdot}{1}$ denotes the trace norm (Schatten 1-norm).

However, a limitation of the above formulations is that they assume access to only a single copy of the state to be discriminated. This is not reflective of situations encountered in practice, where one typically deals with sources that emit many copies of quantum states, which we can use to our advantage in attempting to discriminate them. That is, we can assume that we receive either $\rho^{\otimes n}$ or~$\sigma^{\otimes n}$ and then perform a joint measurement over all $n$ systems. The crucial question then is: how exactly does the performance in discriminating two states improve when we have access to more copies of them? The ultimate limit of such protocols is described by the asymptotic setting with $n \to \infty$, and the figures of merit then are the decay rates of $\beta_\ve(\rho^{\otimes n},\sigma^{\otimes n})$ and $p_{\rm err} (\rho^{\otimes n}, \sigma^{\otimes n} \pbar p, q)$, known as the error exponents. These two limits give rise to two of the most fundamental quantities in quantum information. The first one is the \emph{quantum relative entropy} $D(\rho \| \sigma) \coloneqq \Tr \rho (\log \rho - \log \sigma)$~\cite{umegaki_1962}, which is known to be exactly equal to the error exponent in asymmetric hypothesis testing:
\begin{equation}\begin{aligned}
  \lim_{n \to \infty} - \frac1n \log \beta_\ve(\rho^{\otimes n},\sigma^{\otimes n}) = D(\rho \| \sigma) \qquad \forall \ve \in (0,1).
\end{aligned}\end{equation}
This was shown in the works of Hiai and Petz~\cite{hiai_1991} (achievability and weak converse%
\footnote{In the setting of asymptotic asymmetric hypothesis testing, a `weak converse' for the error exponent is understood as one that holds in the limit $\ve \to 0$, while a `strong converse' holds for every $\ve \in (0,1)$. The motivation for considering the latter is that, for every type~II error exponent larger than a strong converse bound, the type~I error must be prohibitively large --- it must asymptotically converge to~1, making discrimination effectively impossible.}) and Ogawa and Nagaoka~\cite{ogawa_2000} (strong converse).
The above result, known as the quantum Stein's lemma, generalises a seminal result due to Stein and Chernoff~\cite{stein_unpublished,chernoff_1956}. 
It provides an important operational interpretation of the quantum relative entropy, and in a large part motivates its extensive use in quantum information as a quantifier of the distinguishability between two quantum states and a meaningful generalisation of the Kullback--Leibler divergence~\cite{kullback_1951}. 

An analogous result in symmetric hypothesis testing, generalising a finding of Chernoff~\cite{chernoff_1952}, was obtained in the works of Nussbaum and Szko\l{}a~\cite{nussbaum_2009} (converse) and Audenaert \textit{et al}.~\cite{audenaert_2007} (achievability), yielding
\begin{equation}\begin{aligned}
  \lim_{n \to \infty} - \frac1n \log p_{\rm err} (\rho^{\otimes n}, \sigma^{\otimes n} \pbar p, q) = \xi(\rho \| \sigma) \coloneqq - \log \inf_{0\leq s\leq1} \Tr \rho^{1-s} \sigma^{s} \qquad \forall p, q =  1-p \in (0,1).
\end{aligned}\end{equation}
This is commonly known as the quantum Chernoff bound, with the Chernoff divergence $\xi(\rho\|\sigma)$ constituting another important measure of distinguishability between quantum states.

Due to the intricate mathematical methods needed to establish the quantum Stein's lemma and the quantum Chernoff bound, the results have found few generalisations. There are, for instance, important cases where the alternative hypothesis is not given simply by i.i.d.\ quantum states $(\sigma^{\otimes n})_n$, but rather by a sequence $(\sigma_n)_n$ of states taken from some subset of density matrices --- this is known as composite hypothesis testing. A generalised quantum Stein's lemma in this setting was proposed~\cite{brandao_2010-1}, but so far no valid proof of such a relation has been found~\cite{berta_2022} beyond some special cases~\cite{tomamichel_2018,berta_2021} or related but incomparable composite settings~\cite{brandao_2020,berta_2021,mosonyi_2021,berta_2022}.

Perhaps even more challenging is the case of quantum channel discrimination. This task is inherently much more complex than state discrimination: any manipulation of $n$ copies of a quantum state $\rho$ is fully equivalent to having access to the tensor product $\rho^{\otimes n}$, while being able to invoke $n$~uses of  a channel $\mathcal{M}$  is not the same as manipulating $n$~parallel copies $\M^{\otimes n}$, since schemes for channel discrimination much more general than na\"ive parallel protocols can be employed. A common example are the so-called adaptive protocols, which work by processing the channel $n$~times, such that information obtained in previous rounds can be used to enhance the successive rounds of processing~\cite{chiribella_2008-1}. It is known that adaptive discrimination of channels can yield strict improvements over parallel protocols, although a very recent line of work~\cite{wilde_2020,wang_2019-4,fang_2020-2} established that the advantage disappears asymptotically in asymmetric hypothesis testing, where the (weak converse) error exponent is given by the regularised channel relative entropy:
\begin{equation}\begin{aligned}
 \lim_{\ve\to0} \liminf_{n\to\infty} - \frac1n \log \beta_\ve(\M^{\otimes n},\N^{\otimes n}) = D^\infty(\M \| \N) \coloneqq& \lim_{n\to\infty} \frac1n \sup_{\rho} D(\idc \otimes \M^{\otimes n}(\rho) \| \idc \otimes \N^{\otimes n}(\rho))\\
=&{} \sup_{\rho,\omega} \big[ D(\idc \otimes \M(\rho) \| \idc \otimes \N(\omega)) - D(\rho\|\omega) \big].
\end{aligned}\end{equation}
In symmetric hypothesis testing, it is known that adaptive strategies \emph{can} improve over parallel ones both non-asymptotically and asymptotically~\cite{harrow_2010-1,salek_2022}, although in general no exact form of the asymptotic error exponent --- and thus no extension of the Chernoff bound --- is known in the setting of quantum channels~\cite{yu_2021}.
The landscape of channel discrimination is further complicated by the fact that there exist even more general ways of manipulating quantum channels than adaptive protocols, for instance ones without a definite causal order structure~\cite{chiribella_2013,oreshkov_2012}, and they can yield strict improvements in some settings~\cite{chiribella_2013,ebler_2018,quintino_2019,bavaresco_2021}; the asymptotic performance of such strategies, however, has not been characterised yet. 

We note that all definitions of this section reduce to their corresponding classical counterparts when the two states $\rho$ and $\sigma$ commute (corresponding to probability distributions) and when the channels $\M$ and $\N$ are classical~\cite{blahut_1974,hayashi_2009}.

\subsection{Setting}

In this work, we propose a variant of quantum hypothesis testing that adds an additional, `inconclusive' measurement outcome, allowing us to abstain from discriminating the states 
if we are not sufficiently confident in the result. This can be motivated, for instance, by an experimental situation in which a `no detection' event occurs, that is, a detector responsible for distinguishing two states fails to recognise either of them.
More generally, a motivation to consider such a setting could be any situation where avoiding a wrong result is of utmost importance, even if this means having to repeat the trial many times, such as would be the case e.g.\ with medical diagnoses.

To be specific, let us consider the case of quantum state discrimination. Given either of the two states $\rho$ or $\sigma$, we perform a \emph{three}-outcome measurement $M = \{ M_1, M_2, M_? \}$. The first two outcomes are as before --- outcome 1 corresponds to the guess `$\rho$' and outcome 2 corresponds to the guess `$\sigma$' --- while the third outcome `?' is designated as an inconclusive result, that is, a case in which we are unable to decisively distinguish the hypotheses. There is therefore some probability that our attempt to perform hypothesis testing fails: this probability is $\Tr M_? \rho$ when measuring $\rho$, and $\Tr M_? \sigma$ when measuring $\sigma$.

Such an approach dates back to the early works of Ivanovic~\cite{ivanovic_1987}, Dieks~\cite{dieks_1988}, and Peres~\cite{peres_1988}, where it was shown that all linearly independent pure states can be discriminated with no error, but only through inconclusive protocols whose probability of success may be strictly less than one. Extensions of this idea were studied in a number of settings~\cite{chefles_1998, fiurasek_2003, rudolph_2003, croke_2006, herzog_2009, bagan_2012, zhuang_2020, barnett_2009, bae_2015}, but mostly restricted to non-asymptotic, symmetric state discrimination. A related class of problems was also studied in classical hypothesis testing~\cite{nikulin_1986,gutman_1989}, although the way that discrimination errors and the consequent rates were defined there means that the setting is not directly comparable with ours.

Here we establish a complete and rigorous formalisation of inconclusive state discrimination schemes in quantum hypothesis testing. Our approach is based on characterising the corresponding errors conditioned on a conclusive result; that is, assuming that either of the outcomes `1' or `2' was obtained, how likely is it that we made an incorrect guess? This can be understood as postselecting on a conclusive outcome of the measurement and ignoring all measurement results where the outcome `?' is obtained. Depending on whether discussing the asymmetric or symmetric setting, we are concerned with the error probability when measuring a particular state $\rho$ or $\sigma$, or the average probability when measuring either of them.

To clarify how our setting differs from the conventional one of conclusive hypothesis testing, let us first consider the case of asymmetric error. Observe that the conventional type I error, for instance, is defined as $\alpha(M) = P(\text{error}\pbar\rho,M)$, i.e.\ the probability of guessing incorrectly given that the measured state is $\rho$. In our approach here, we further condition the incorrect guess event on the attainment of a conclusive outcome.
We thus define the following counterparts of the conventional hypothesis testing errors: the \emph{conditional type I error} and \emph{conditional type II error} probabilities are given by
\begin{equation}\begin{aligned}
  \calpha(M) &\coloneqq P(\text{error}\pbar\rho,M,\text{conclusive}) = \frac{\alpha(M)}{1-\Tr M_? \rho}= \frac{\Tr M_2 \rho}{\Tr \left[(M_1+M_2) \rho\right]} , \\
  \cbeta(M) &\coloneqq P(\text{error}\pbar\sigma,M,\text{conclusive}) = \frac{\beta(M)}{1- \Tr M_? \sigma} = \frac{\Tr M_1 \sigma}{\Tr \left[(M_1+M_2) \sigma\right]} ,
\end{aligned}\end{equation}
where `conclusive' denotes the event of a conclusive measurement outcome. The minimal type II error probability is then
\begin{equation}\begin{aligned}
  \cbeta_\ve (\rho, \sigma) \coloneqq&\, \inf_{M \in \M_3} \lset \cbeta(M) \bar \calpha(M) \leq \ve \rset\\
  =&\! \inf_{M_1,M_2 \geq 0} \lset \frac{\Tr M_1 \sigma}{\Tr \left[(M_1+M_2) \sigma\right]} \bar M_1 + M_2 \leq \id,\; \frac{\Tr M_2 \rho}{\Tr \left[(M_1+M_2) \rho\right]} \leq \ve \rset,
\end{aligned}\end{equation}
where $\mathcal{M}_3$ denotes the set of all three-outcome measurements of the form $\{M_1,M_2, M_?\}$, and to avoid pathological cases the optimisation is implicitly restricted to be over operators such that $\Tr \left[(M_1+M_2) \sigma\right], \Tr \left[(M_1+M_2) \rho\right] > 0$.
By construction, this error can never be larger than the conventional error $\beta_\ve(\rho,\sigma)$, 
since choosing $M_? = 0$ reduces to the usual definitions of $\alpha(M)$ and $\beta(M)$.

The analogous symmetric error is defined as follows. First, given two states, $\rho$ with prior probability $p$ and $\sigma$ with prior $q$, a measurement $M = \{ M_1, M_2, M_? \}$ has the following probability of obtaining an incorrect outcome:
\begin{equation}\begin{aligned}
  p_{\rm err}(\rho, \sigma \pbar p,q \pbar M) \coloneqq{}& P(\text{error} \pbar M) \\
  ={}& p \Tr M_2 \rho + q \Tr M_1 \sigma,
\end{aligned}\end{equation}
while the probability of obtaining a conclusive outcome is 
\begin{equation}\begin{aligned}
  p_{\rm conc}(\rho, \sigma \pbar p,q \pbar M) \coloneqq{}& P(\text{conclusive} \pbar M)\\
  ={}& 1 - \big( p \Tr M_? \rho + q \Tr M_? \sigma \big)\\
  ={}& p \Tr [(M_1 + M_2)\rho] + q \Tr [(M_1 + M_2)\sigma]\\
   ={} &  \Tr [(M_1 + M_2)(p\rho+  q \sigma)].
\end{aligned}\end{equation}
We then define the postselected symmetric probability of error $\cperr$ as the optimal conditional probability of error given a conclusive outcome --- in other words, the smallest ratio of incorrect outcomes to all conclusive outcomes on average:
\begin{equation}\begin{aligned}
    \cperr(\rho,\sigma \pbar p,q) &\coloneqq \inf_{M \in \M_3} \frac{p_{\rm err}(\rho, \sigma \pbar p,q \pbar M)}{p_{\rm conc}(\rho, \sigma \pbar p,q \pbar M)}\\
    &= \inf_{\substack{M_1, M_2 \geq 0,\\ M_1 + M_2 \leq \id}} \, \frac{p\, \Tr M_2 \rho + q\, \Tr M_1 \sigma}{ \Tr [(M_1 + M_2)(p\rho+  q \sigma)]}.
\end{aligned}\end{equation}

Before stating our main results, we need to introduce several quantities which we will find to characterise the hypothesis testing errors defined above.

Underlying all of them is the \emph{max-relative entropy}~\cite{datta_2009}, defined as
\begin{equation}\begin{aligned}\label{eq:dmax_definition}
  D_{\max} (\rho \| \sigma) &\coloneqq \log \inf \lset \lambda \in \RR_+ \bar \rho \leq \lambda \sigma \rset\\
  &\hphantom{:}= \begin{cases} \log \norm{\sigma^{-1/2} \rho \sigma^{-1/2}}{\infty} & \supp(\rho) \subseteq \supp(\sigma)\\
     \infty & \text{otherwise.} \end{cases}
\end{aligned}\end{equation}
Here, the inequality between operators is understood in terms of the positive semidefinite cone, i.e.\ $\rho \leq \lambda \sigma$ if and only if $\lambda \sigma - \rho$ is a positive semidefinite operator, and $\norm{\cdot}{\infty}$ denotes the operator norm (largest singular value). Eq.~\eqref{eq:dmax_definition} is typically defined for normalised quantum states $\rho$ and $\sigma$, although we will extend the same definition also to unnormalised positive semidefinite operators. This quantity can be understood as the sandwiched R\'enyi divergence of order $\infty$~\cite{muller-lennert_2013}; in the case where $\rho = P$ and $\sigma = Q$ are classical discrete probability distributions, this reduces to the $\infty$--R\'enyi divergence $D_{\max} (P\|Q) = D_{\infty}(P\|Q) = \log \sup_x P(x) / Q(x)$.

The max-relative entropy gives rise to two important quantities: the \deff{Hilbert projective metric}~\cite{bushell_1973}
\begin{align}
  \DD_\Omega(\rho \| \sigma) &\coloneqq D_{\max} (\rho \| \sigma) + D_{\max} (\sigma \| \rho)
  \label{eq:D_Omega}
\end{align}
and the \deff{Thompson metric}~\cite{thompson_1963}
\begin{align}
  \DD_\Thomp(\rho \| \sigma) &\coloneqq \max \big\{ D_{\max} (\rho \| \sigma),\, D_{\max} (\sigma \| \rho) \big\}.
\end{align}
The two quantities have enjoyed a number of applications in various branches of geometry and analysis~\cite{bushell_1973,nussbaum_2004}. Within quantum information theory, the Hilbert projective metric found operational use in characterising the convertibility of quantum states in different scenarios~\cite{reeb_2011,regula_2022,regula_2022-1}, often constituting the unique function that determines the existence of probabilistic transformations between quantum resources~\cite{regula_2021-4,regula_2022-1}; the Thompson metric, on the other hand, was recently identified with the rate of a task known as symmetric distinguishability dilution~\cite{salzmann_2021}, concerned with transformations of pairs of quantum states. To date, they have not been connected with rates of hypothesis testing tasks.

Throughout this work, we assume that the considered quantum systems are finite dimensional.


\subsection{Results}

Our main results are the exact evaluation of the asymptotic error exponents in the setting of postselected hypothesis testing, both for asymmetric and symmetric discrimination, and both for quantum states and channels. The framework is shown to enjoy remarkably simplified properties compared to conventional hypothesis testing, allowing us to completely characterise the errors at the one-shot level as well as their asymptotic exponents. Notably, we show that the asymptotic exponents in asymmetric and symmetric postselected hypothesis testing are given exactly by the Hilbert and Thompson metrics, respectively, endowing the two quantities with operational meanings mirroring that of the quantum relative entropy and the Chernoff divergence. Our methods straightforwardly extend to the setting of composite hypothesis testing and channel discrimination, establishing results that are strictly stronger and more general than corresponding known results in conventional quantum hypothesis testing.

To our knowledge, this is the first time that the asymptotic setting of postselected hypothesis testing has been studied, even in the classical case; our results are therefore new also when the discrimination of quantum states is replaced with the discrimination of probability distributions (equivalently, commuting states).

We provide an overview of our main findings below.

\vspace{.5\baselineskip}\noindent\para{Asymmetric hypothesis testing} In Section~\ref{sec:asymmetric} we show that the asymptotic exponent of the conditional type II error in discriminating any two quantum states is given by the Hilbert projective metric $\DD_\Omega$ between them; specifically,
\begin{equation}\begin{aligned}
  \lim_{n\to\infty} - \frac1n \log \cbeta_{\ve}(\rho^{\otimes n},\sigma^{\otimes n}) = \DD_{\Omega} (\rho \| \sigma) \qquad \forall \ve \in (0,1).
\end{aligned}\end{equation}
This establishes an equivalent of quantum Stein's lemma in the setting of postselected hypothesis testing. 
We in fact establish a closed formula for the asymmetric hypothesis testing error probability already at the 
one-shot level, leading directly to the above asymptotic result. Specifically, our main result in Theorem~\ref{thm:asymmetric} shows that
\begin{equation}\begin{aligned}
  \cbeta_\ve (\rho,\sigma) = \left( \frac{\ve}{1-\ve} 2^{D_\Omega(\rho\|\sigma)} + 1 \right)^{-1} \qquad \forall \ve \in (0,1).
\end{aligned}\end{equation}

In Section~\ref{sec:composite}, we extend the above results to composite hypotheses, showing that when the alternative hypothesis $\sigma^{\otimes n}$ is replaced by a family of convex sets of density matrices $(\FF_n)_n$ that is closed under tensor product, then the asymptotic exponent is given by
\begin{equation}\begin{aligned}
  \lim_{n\to\infty} - \frac1n \log \cbeta_{\ve,\FF_n}(\rho^{\otimes n}) = \DD_{\Omega,\FF}^\infty (\rho) \coloneqq \lim_{n\to\infty} \frac1n \min_{\sigma_n \in \FF_n} \DD_\Omega(\rho^{\otimes n} \| \sigma_n ) \qquad \forall \ve \in (0,1),
\end{aligned}\end{equation}
where $\cbeta_{\ve,\FF_n}$ denotes the least conditional type II error in discriminating a state against \emph{all} states in the set $\FF_n$. 
The above is the 
analogue of the conjectured generalised quantum Stein's lemma of~\cite{brandao_2010-1}, which is not known to hold in the setting of conventional quantum hypothesis testing, but which we show to be valid in the postselected setting.

Analogous findings are also valid for the discrimination of quantum channels. We show in Section~\ref{sec:channels} that
\begin{equation}\begin{aligned}
  \lim_{n\to\infty} - \frac1n \log \cbeta_{\ve}(\M^{\otimes n}, \N^{\otimes n}) = \lim_{n\to\infty} - \frac1n \log \cbetaAny(\M, \N \pbar n) = \DD_{\Omega} (\M \| \N) \qquad \forall \ve \in (0,1),
\end{aligned}\end{equation}
where $\cbetaAny(\M, \N \pbar n)$ denotes the error exponent obtained using the most general channel discrimination schemes that use $n$ channel evaluations, and where $\DD_{\Omega}(\M\|\N)$ is given by the Hilbert projective metric between the Choi operators of the two channels, which does not require any regularisation and can be readily evaluated.
Importantly, our results show that the performance of channel hypothesis testing is the same regardless of whether the channels are used in parallel or discriminated using broader schemes such as adaptive protocols. This equivalence holds both asymptotically and in the finite-copy regime.

\vspace{.5\baselineskip}\noindent\para{Symmetric hypothesis testing} In Section~\ref{sec:symmetric} we establish that the asymptotic exponent of the postselected symmetric error in distinguishing any two states equals their Thompson metric:
\begin{equation}\begin{aligned}
  \lim_{n\to\infty} - \frac1n \log \cperr(\rho^{\otimes n}, \sigma^{\otimes n} \pbar p, q) = \DD_{\Thomp} (\rho \| \sigma) \qquad \forall p, q = 1-p \in (0,1).
\end{aligned}\end{equation}
This is the equivalent of the Chernoff bound in the setting of postselected quantum hypothesis testing and gives a direct operational interpretation of the Thompson metric in state discrimination. 
As in the asymmetric case, the result is based on an exact quantification of the one-shot symmetric error probability, which we show in Theorem~\ref{thm:symmetric} to be
\begin{equation}\begin{aligned}
    \cperr(\rho,\sigma \pbar p,q) = \big( 2^{D_\Thomp (p \rho \| q \sigma )} + 1 \big)^{-1}.
\end{aligned}\end{equation}
The above is generalised to quantum channels in Section~\ref{sec:channels}, where we show that
\begin{equation}\begin{aligned}
  \lim_{n\to\infty} - \frac1n \log \cperr(\M^{\otimes n}, \N^{\otimes n} \pbar p, q) = \lim_{n\to\infty} - \frac1n \log \cperrAny(\M, \N \pbar p, q \pbar n) = \DD_{\Thomp} (\M \| \N), 
\end{aligned}\end{equation}
again demonstrating that no advantage is gained by employing protocols more general than parallel ones.

\vspace{.5\baselineskip}Section~\ref{sec:gpts} discusses how the above results in asymmetric and symmetric state discrimination are actually valid in physical theories broader than quantum mechanics, namely, settings under the umbrella of general probabilistic theories. This is, to the best of our knowledge, the first time that asymptotic error exponents in hypothesis testing have been computed in such general theories --- previous such results made explicit use of the properties of quantum mechanics or classical probability theory.


\section{Asymmetric case}\label{sec:asymmetric}


\subsection{One-shot discrimination error}\label{sec:asymmetric_oneshot}

Let us recall the definition of the minimal postselected type II error,
\begin{equation}\begin{aligned}\label{eq:DH_def}
  \cbeta_\ve (\rho, \sigma) = \inf_{M_1,M_2 \geq 0} \lset \frac{\Tr M_1 \sigma}{\Tr \left[(M_1+M_2) \sigma\right]} \bar M_1 + M_2 \leq \id,\; \frac{\Tr M_2 \rho}{\Tr \left[(M_1+M_2) \rho\right]} \leq \ve \rset.
\end{aligned}\end{equation}
Our first result is to connect it with the Hilbert projective metric. We will frequently encounter the non-logarithmic variant of the latter, for which we introduce the notation
\begin{equation}\begin{aligned}
  \Omega(\rho \| \sigma) \coloneqq 2^{\DD_\Omega(\rho\|\sigma)}.
\end{aligned}\end{equation}
\begin{boxed}{white}
\begin{theorem}\label{thm:asymmetric}
For all $\ve \in (0,1)$, we have
\begin{equation}\begin{aligned}
    \cbeta_\ve (\rho,\sigma) = \left( \frac{\ve}{1-\ve}\, \Omega(\rho\|\sigma) + 1 \right)^{-1}.
\end{aligned}\end{equation}
\end{theorem}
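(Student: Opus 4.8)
My plan is to strip the fractional program down to elementary inequalities among the four scalars $a \coloneqq \Tr M_1\rho$, $b \coloneqq \Tr M_2\rho$, $c \coloneqq \Tr M_1\sigma$, $d \coloneqq \Tr M_2\sigma$. In these terms $\calpha(M) = b/(a+b)$ and $\cbeta(M) = c/(c+d) = (1+d/c)^{-1}$, while the constraint $\calpha(M)\le\ve$ rearranges to $b\le\tfrac{\ve}{1-\ve}\,a$. I would then prove matching lower and upper bounds, treating first the generic case $\DD_\Omega(\rho\|\sigma)<\infty$, i.e.\ $\supp\rho=\supp\sigma$, and abbreviating $\lambda\coloneqq 2^{\Dmax(\rho\|\sigma)}$ and $\mu\coloneqq 2^{\Dmax(\sigma\|\rho)}$, so that $\Omega(\rho\|\sigma)=\lambda\mu$.

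For the converse I expect a clean chaining of the two defining operator inequalities of the max-relative entropy. From $\rho\le\lambda\sigma$ one gets $a\le\lambda c$, and from $\sigma\le\mu\rho$ one gets $d\le\mu b$; together with $b\le\tfrac{\ve}{1-\ve}a$ this gives $d\le\mu b\le\tfrac{\ve}{1-\ve}\mu a\le\tfrac{\ve}{1-\ve}\lambda\mu\, c=\tfrac{\ve}{1-\ve}\Omega(\rho\|\sigma)\,c$, whence $\cbeta(M)=(1+d/c)^{-1}\ge\big(\tfrac{\ve}{1-\ve}\Omega(\rho\|\sigma)+1\big)^{-1}$. Here I would record that the implicit non-degeneracy condition $a+b>0$ together with the constraint forces $a>0$ (otherwise $b\le 0$, contradicting $a+b>0$), and then $c\ge a/\lambda>0$, so every division is legitimate.

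For achievability I would exhibit a measurement saturating all three inequalities at once. Working on $\supp\sigma$, let $u$ be a top eigenvector of $\sigma^{-1/2}\rho\sigma^{-1/2}$ and set $\ket{e_1}\coloneqq\sigma^{-1/2}u$, so that $\rho\ket{e_1}=\lambda\sigma\ket{e_1}$; symmetrically choose $\ket{e_2}$ with $\sigma\ket{e_2}=\mu\rho\ket{e_2}$. Taking $M_1=x\proj{e_1}$ and $M_2=y\proj{e_2}$ with $x,y>0$ makes $a=\lambda c$ and $d=\mu b$ hold with equality (both inner products $\bra{e_1}\rho\ket{e_1}$, $\bra{e_2}\rho\ket{e_2}$ are strictly positive by construction), tuning the ratio $y/x$ enforces $b/a=\tfrac{\ve}{1-\ve}$ and hence $\calpha(M)=\ve$, and finally a uniform rescaling $x,y\mapsto sx,sy$ secures $M_1+M_2\le\id$ without altering either ratio. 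A one-line computation then returns $\cbeta(M)=\big(\tfrac{\ve}{1-\ve}\Omega(\rho\|\sigma)+1\big)^{-1}$, matching the converse and showing the infimum is attained.

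The only genuinely delicate point, and the one I expect to demand the most care, is the degenerate regime $\DD_\Omega(\rho\|\sigma)=\infty$, where the claimed value is $0$. If $\supp\rho\not\subseteq\supp\sigma$ I would select a vector $\ket w$ with $\bra w\sigma\ket w=0<\bra w\rho\ket w$ and put $M_1\propto\proj w$, forcing $c=0$ and thus $\cbeta(M)=0$, while a small admixture in $M_2$ keeps $\calpha(M)$ below $\ve$ and the problem non-degenerate. The complementary case $\supp\sigma\not\subseteq\supp\rho$ (with $\supp\rho\subseteq\supp\sigma$) is instead handled by driving $c\to 0$ through a vanishing $M_1$ together with an $M_2$ having $b=0<d$, yielding infimum $0$. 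I anticipate that this support bookkeeping, rather than the main inequality chain, will be the only part requiring sustained attention.
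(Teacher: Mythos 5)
Your proof is correct, but it takes a genuinely different route from the paper's. The paper never splits the argument into converse and achievability: it removes the superfluous constraint $M_1+M_2\leq\id$ by rescaling, and then identifies the resulting fractional program for $\cbeta_\ve(\rho,\sigma)^{-1}$ with the known dual (variational) formulation of the Hilbert projective metric,
\begin{equation*}
\Omega(\rho\|\sigma)=\sup\left\{\frac{\Tr A\sigma}{\Tr B\sigma}\;\middle|\;A,B\geq 0,\ \frac{\Tr A\rho}{\Tr B\rho}\leq 1\right\},
\end{equation*}
via the exact change of variables $A=(1-\ve)M_2$, $B=\ve M_1$; feasible points of the two problems are in correspondence, so equality of optimal values follows at once, with the infinite-support case absorbed automatically. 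You instead work entirely on the primal side: your converse chains the operator inequalities $\rho\leq\lambda\sigma$ and $\sigma\leq\mu\rho$ through the four scalars $a,b,c,d$ (and your non-degeneracy bookkeeping, $a>0$ then $c\geq a/\lambda>0$, is exactly right), while your achievability constructs the optimal eigenvector measurement explicitly --- essentially the same measurement the paper exhibits separately in its ``Achievability'' remark after the theorem --- and you treat the case $\supp\rho\neq\supp\sigma$ by hand, correctly noting that in the sub-case $\supp\rho\subseteq\supp\sigma\not\subseteq\supp\rho$ the value $0$ is an unattained infimum. What the paper's duality route buys is brevity (given the external result of Regula--Takagi--Gour cited as its Eq.~(10)) and, more importantly, direct generalisability: the identical argument proves the composite-hypothesis Theorem~2, where one needs $\min_{\sigma\in\FF}\Omega(\rho\|\sigma)$ and no single pair of operator inequalities can be chained. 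What your route buys is self-containedness --- no black-box duality from prior work --- plus an optimal measurement and an explicit treatment of the degenerate supports built into the proof itself.
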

\end{boxed}
\begin{proof}
We begin by noticing that the constraint $M_1 + M_2 \leq \id$ in~\eqref{eq:DH_def} is superfluous: for all $M_1, M_2 \geq 0$ that satisfy $\frac{ \Tr M_2 \rho }{ \Tr[(M_1+M_2)\rho] } \leq \ve$, one can always rescale them as $M'_i \coloneqq M_i / \norm{M_1 + M_2}{\infty}$ and they are then feasible for the original program with the same value of the objective function. Thus,%
\footnote{A careful reader might wonder whether there is no danger of dividing by zero in this rewriting. That is not the case, as we can always constrain the optimisation to be over $M_1$ such that $\Tr M_1 \sigma > 0$. This can be seen by noting that, for every $M_1$ such that $\Tr M_1 \sigma = 0$, $M_1 + \delta\id$ is also feasible for all $\delta > 0$, and the error value of the original $M_1$ is recovered in the limit $\delta \to 0$.}
\begin{equation}\begin{aligned}\label{eq:DH_def2}
    \cbeta_\ve(\rho,\sigma)^{-1} = \sup_{M_1,M_2 \geq 0} \lset \frac{ \Tr[(M_1+M_2)\sigma] }{ \Tr M_1 \sigma } \bar \frac{ \Tr M_2 \rho }{ \Tr[(M_1+M_2)\rho] } \leq \ve \rset.
\end{aligned}\end{equation}
In order to relate this optimisation with the Hilbert projective metric, we will use the dual form of the optimisation problem $\Omega(\rho\|\sigma)$. Writing
\begin{equation}\begin{aligned}
\Omega(\rho\|\sigma) = \Omega(\sigma\|\rho) = \inf \lset \gamma \in \RR_+ \bar \sigma \leq \lambda \rho \leq \gamma \sigma, \, \lambda \in \RR_+ \rset,
\end{aligned}\end{equation}
the dual optimisation problem can be obtained as~\cite{regula_2021-4}
\begin{equation}\begin{aligned}\label{eq:omega_dual}
    \Omega(\rho\|\sigma) = \sup \lset \frac{\Tr A \sigma}{\Tr B \sigma} \bar A,B \geq 0,\; \frac{\Tr A \rho}{\Tr B \rho} \leq 1 \rset.
\end{aligned}\end{equation}
The similarity between the two problems can be noticed by rewriting the objective function of~\eqref{eq:DH_def2} as
\begin{equation}\begin{aligned}
  \frac{ \Tr[(M_1+M_2)\sigma] }{ \Tr M_1 \sigma } = 1 + \frac{ \Tr M_2\sigma }{ \Tr M_1 \sigma }.
\end{aligned}\end{equation}
To make the relation explicit, let $M_1, M_2 \geq 0$ be arbitrary operators feasible for~\eqref{eq:DH_def2}, and define
\begin{equation}\begin{aligned}
    A &\coloneqq (1-\ve) M_2,\\
    B &\coloneqq \ve M_1.
\end{aligned}\end{equation}
Notice now that
\begin{equation}\begin{aligned}
    \frac{\Tr A \rho}{\Tr B \rho} = \frac{(1-\ve) \Tr M_2 \rho}{\ve \Tr M_1 \rho} \leq 1 ,
\end{aligned}\end{equation}
using the fact that
\begin{equation}\begin{aligned}
  \frac{ \Tr M_2 \rho }{ \Tr[(M_1+M_2)\rho] } \leq \ve \iff \frac{ \Tr M_1\rho }{ \Tr M_2 \rho } \geq \frac{1-\ve}{\ve}.
\end{aligned}\end{equation}
$A$ and $B$ are therefore feasible operators for the dual optimisation problem for $\Omega(\rho\|\sigma)$ in~\eqref{eq:omega_dual}, giving
\begin{equation}\begin{aligned}\label{eq:omega_sdp}
    \Omega(\rho \| \sigma) \geq \frac{1-\ve}{\ve} \frac{ \Tr M_2\sigma }{ \Tr M_1 \sigma } = \frac{1-\ve}{\ve} \left(  \frac{ \Tr[(M_1+M_2)\sigma] }{ \Tr M_1 \sigma}  - 1\right).
\end{aligned}\end{equation}

Working backwards through the above reasoning, we can analogously show that any feasible solution $\{A,B\}$ for $\Omega(\rho \| \sigma)$ in~\eqref{eq:omega_dual} gives a feasible solution $\{M_1,M_2\}$ for $\cbeta_\ve(\rho,\sigma)^{-1}$ in~\eqref{eq:DH_def2}. The two problems are therefore equivalent, in the sense that
\begin{equation}\begin{aligned}
    \Omega(\rho \| \sigma) = \frac{1-\ve}{\ve} \left( \cbeta_\ve(\rho ,\sigma)^{-1} - 1\right),
\end{aligned}\end{equation}
which is precisely the statement of the theorem.
\end{proof}

\para{Achievability} As long as $\cbeta_\ve(\rho,\sigma) \neq 0$, i.e.\ $\Omega(\rho\|\sigma) < \infty$, a measurement that achieves the optimal error $\cbeta_\ve(\rho,\sigma)$ can be readily constructed using the fact that $D_{\max} (\sigma \| \rho) = \log \norm{\rho^{-1/2} \sigma \rho^{-1/2}}{\infty}$.  Let $\ket{\psi_{\max}}$ and $\ket{\psi_{\min}}$ be eigenvectors corresponding, respectively, to the largest and the smallest non-zero eigenvalue of the operator $\rho^{-1/2} \sigma \rho^{-1/2}$. Then, define
\begin{equation}\begin{aligned}
  \wt M_1 \coloneqq \frac{\rho^{-1/2} \proj{\psi_{\min}} \rho^{-1/2}}{\ve},\qquad
  \wt M_2 \coloneqq \frac{\rho^{-1/2} \proj{\psi_{\max}}\rho^{-1/2}}{1 - \ve}.
\end{aligned}\end{equation}
Now we only need to rescale these operators into a valid POVM, that is, define the three-outcome measurement
\begin{equation}\begin{aligned}\label{eq:measurement_single_copy}
  M \coloneqq \left\{ \frac{\wt M_1}{\norm{\wt M_1 + \wt M_2}{\infty}},\, \frac{\wt M_2}{\norm{\wt M_1 + \wt M_2}{\infty}},\, \id - \frac{\wt M_1 + \wt M_2}{\norm{\wt M_1 + \wt M_2}{\infty}} \right\}.
\end{aligned}\end{equation}
It can be verified that $\calpha(M) = \ve$ and
\begin{equation}\begin{aligned}
  \cbeta(M) =   \left( 1 + \frac{\ve}{1-\ve} \norm{\rho^{-1/2} \sigma \rho^{-1/2}}{\infty} \norm{\sigma^{-1/2} \rho \sigma^{-1/2}}{\infty} \right)^{-1}  = \cbeta_\ve(\rho, \sigma),
\end{aligned}\end{equation}
meaning that the measurement is indeed optimal.

An interesting point for the discrimination of many-copy states, i.e.\ $\rho^{\otimes n}$ and $\sigma^{\otimes n}$, is as follows. An optimal measurement strategy can readily be constructed following the above recipe; remarkably, the operational procedure corresponding to the measurement one obtains in this way is a product strategy, i.e.\ it involves
separate measurements on each individual copy of the state, which is much simpler to implement from a practical and technological perspective. To see this, let us define a single-copy measurement similarly to~\eqref{eq:measurement_single_copy}, except that we replace $\wt M_1$ and $\wt M_2$ with
\begin{equation}\begin{aligned}
  \wt M_1' \coloneqq \frac{\rho^{-1/2} \proj{\psi_{\min}} \rho^{-1/2}}{\ve^{1/n}},\qquad
  \wt M_2' \coloneqq \frac{\rho^{-1/2} \proj{\psi_{\max}}\rho^{-1/2}}{(1 - \ve)^{1/n}}.
\end{aligned}\end{equation}
We then measure each individual copy of $\rho$ or $\sigma$. If the outcome `1' is obtained $n$ times, we guess $\rho^{\otimes n}$; if the outcome `2' is obtained $n$ times, we guess $\sigma^{\otimes n}$; any other combination of outcomes is deemed inconclusive. Conditioned on a conclusive result, the probability of success of such a strategy is clearly the same as if we used the measurement~\eqref{eq:measurement_single_copy} directly.


\subsection{Asymptotic error exponent}\label{sec:asymmetric_exp}

The extension of Theorem~\ref{thm:asymmetric} to the asymptotic setting is almost immediate. The crucial property of the Hilbert projective metric, and indeed also the Thompson metric, is that they are additive quantities: for all states $\rho$ and $\sigma$,
\begin{equation}\begin{aligned}
  \DD_\Omega(\rho^{\otimes n} \| \sigma^{\otimes n}) = n \, \DD_\Omega(\rho \| \sigma),\\
  \DD_\Thomp(\rho^{\otimes n} \| \sigma^{\otimes n}) = n \, \DD_\Thomp(\rho \| \sigma).
\end{aligned}\end{equation}
This follows from the additivity of $D_{\max}$, which in turn follows directly from the definition and the fact that
\begin{equation}\begin{aligned}
  \norm{\left(\sigma^{\otimes n}\right)^{-1/2} \rho^{\otimes n} \left(\sigma^{\otimes n}\right)^{-1/2}}{\infty} = \norm{\left(\sigma^{-1/2} \rho \sigma^{-1/2}\right)^{\otimes n}}{\infty} = \norm{\sigma^{-1/2} \rho \sigma^{-1/2}}{\infty}^{\, n}.
\end{aligned}\end{equation}

Using this additivity and the 
two-sided inequality
\begin{equation}\begin{aligned}
    \frac{\ve}{1-\ve}\, \Omega(\rho\|\sigma) \leq \cbeta_\ve (\rho,\sigma)^{-1} \leq \frac{1}{1-\ve}\, \Omega(\rho\|\sigma) ,
    \label{eq:two-sided}
\end{aligned}\end{equation}
itself a consequence of the fact that $\Omega(\rho\|\sigma)\geq 1$ for all $\rho,\sigma$, 
we obtain
\begin{equation}\begin{aligned}\label{eq:stein_secondorder}
   n \DD_\Omega(\rho\|\sigma) + \log\frac{\ve}{1-\ve} \leq  -\log \cbeta_\ve (\rho^{\otimes n} \| \sigma^{\otimes n}) \leq n \DD_\Omega(\rho\|\sigma) + \log\frac{1}{1-\ve}.
\end{aligned}\end{equation}
This immediately yields an expression for the asymptotic error exponent, which is the equivalent of quantum Stein's lemma in the setting of this work.

\begin{boxed}{white}
\begin{corollary}[Asymptotic error exponent of asymmetric postselected hypothesis testing]\label{cor:asymmetric_exp}
$\,$\\For all $\ve \in (0,1)$, we have that
\begin{equation}\begin{aligned}
  \lim_{n\to\infty} - \frac1n \log \cbeta_{\ve}(\rho^{\otimes n},\sigma^{\otimes n}) = \DD_{\Omega} (\rho \| \sigma).
\end{aligned}\end{equation}
\end{corollary}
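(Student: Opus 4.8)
The plan is to derive the corollary directly from the exact one-shot formula of Theorem~\ref{thm:asymmetric} together with the additivity of the Hilbert projective metric, using nothing more than a squeeze argument. The key observation is that the entire $n$-dependence of $\cbeta_\ve(\rho^{\otimes n},\sigma^{\otimes n})$ is carried by $\Omega(\rho^{\otimes n}\|\sigma^{\otimes n}) = \Omega(\rho\|\sigma)^n = 2^{n\DD_\Omega(\rho\|\sigma)}$, whereas the only $\ve$-dependent factors in the closed form are multiplicative constants; these contribute merely an $O(1)$ additive term to $-\log\cbeta_\ve$ and so wash out after division by $n$.

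Concretely, first I would instantiate the two-sided bound~\eqref{eq:two-sided} at the tensor-power states $\rho^{\otimes n}$ and $\sigma^{\otimes n}$, substituting $\Omega(\rho^{\otimes n}\|\sigma^{\otimes n}) = 2^{n\DD_\Omega(\rho\|\sigma)}$ via additivity of $D_{\max}$. Taking logarithms and negating then yields precisely the sandwich~\eqref{eq:stein_secondorder}, in which $-\log\cbeta_\ve(\rho^{\otimes n},\sigma^{\otimes n})$ is trapped between $n\DD_\Omega(\rho\|\sigma) + \log\frac{\ve}{1-\ve}$ and $n\DD_\Omega(\rho\|\sigma) + \log\frac{1}{1-\ve}$.

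The final step is to divide through by $n$ and let $n\to\infty$: since $\ve$ is fixed, both $\frac1n\log\frac{\ve}{1-\ve}$ and $\frac1n\log\frac{1}{1-\ve}$ tend to zero, so the lower and upper bounds both converge to $\DD_\Omega(\rho\|\sigma)$, and the squeeze theorem pins the middle quantity to the same value. The degenerate case $\Omega(\rho\|\sigma)=\infty$ is handled by the same chain in the extended reals, where $\cbeta_\ve=0$ forces both bounds to $+\infty$, so no separate argument is needed.

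In truth there is no genuine obstacle at this stage: all the substantive work lies in the exact one-shot identity of Theorem~\ref{thm:asymmetric} and in the additivity of $D_{\max}$, both already in hand. The only point worth emphasising is that the $\ve$-dependence enters purely additively and independently of $n$ --- guaranteed by the explicit closed form --- so the resulting exponent is the same for every $\ve\in(0,1)$, making this a strong-converse statement rather than merely a weak one.
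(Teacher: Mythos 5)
Your proposal is correct and takes essentially the same route as the paper: it instantiates the two-sided bound~\eqref{eq:two-sided} at tensor powers using the additivity of $D_{\max}$, arrives at the sandwich~\eqref{eq:stein_secondorder}, and concludes by dividing by $n$ and letting $n\to\infty$, with the $\ve$-dependent terms vanishing. This is precisely the argument given in Section~\ref{sec:asymmetric_exp}, including the observation that the exponent is independent of $\ve\in(0,1)$, so nothing further is needed.
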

\end{boxed}

What is noteworthy about the expression in~\eqref{eq:stein_secondorder} is that there are no lower-order terms in $n$: the convergence to the asymptotic limit of Corollary~\ref{cor:asymmetric_exp} 
happens at least as fast as $O(1/n)$, 
which strongly contrasts with the case encountered in conventional quantum hypothesis testing~\cite{li_2014,tomamichel_2013}.

\subsection{Properties of the asymmetric error}\label{sec:asymmetric_prop}

As the asymptotic error exponent is given exactly by $\DD_\Omega$, while the one-shot error $\cbeta_\ve$ is an inversely monotonic function of it, it suffices to characterise the properties of $\DD_\Omega$ to fully understand the behaviour of the errors in both the one-shot and exponential settings.

\begin{proposition}\label{prop:asymmetric_prop}
For all states $\rho$ and $\sigma$, and for all $\ve \in (0,1)$:
\begin{enumerate}[(i)]
\item $\DD_\Omega(\rho \| \sigma) \geq 0$, with equality if and only if $\rho = \sigma$. Consequently, $\cbeta_\ve(\rho,\sigma) \leq 1-\ve$, with equality if and only if $\rho=\sigma$.

\item Both $\DD_\Omega(\rho\|\sigma)$ and $\cbeta_\ve(\rho,\sigma)$ are symmetric under the exchange of $\rho$ and $\sigma$, i.e.\ $\DD_\Omega(\rho\|\sigma) = \DD_\Omega(\sigma\|\rho)$ and $\cbeta_\ve(\rho,\sigma) = \cbeta_\ve(\sigma,\rho)$.

\item For all positive numbers $\lambda,\mu$, we have $\DD_\Omega(\lambda \rho \| \mu \sigma) = \DD_\Omega(\rho \| \sigma)$, and analogously $\cbeta_\ve(\lambda\rho, \mu\sigma) = \cbeta_\ve(\rho,\sigma)$.

\item $\DD_\Omega(\rho \| \sigma) = \infty$ if and only if $\supp \rho \neq \supp \sigma$. Hence, $\cbeta_\ve (\rho,\sigma) = 0$  if and only if $\rho$ and $\sigma$ have non-identical supports.

\item $\DD_\Omega$ satisfies the data-processing inequality under all positive maps; specifically, for every positive linear map $\E$,
\begin{equation}\begin{aligned}
  \DD_\Omega\!\left(\left.\frac{\E(\rho)}{\Tr \E(\rho)} \right\| \frac{\E(\sigma)}{\Tr \E(\sigma)}\right) = \DD_\Omega(\E(\rho) \| \E(\sigma)) \leq \DD_{\Omega} (\rho \| \sigma).
\end{aligned}\end{equation}
Hence, $\cbeta_\ve \!\left(\frac{\E(\rho)}{\Tr \E(\rho)} ,\, \frac{\E(\sigma)}{\Tr \E(\sigma)}\right) \geq \cbeta_\ve (\rho, \sigma)$.

\end{enumerate}
\end{proposition}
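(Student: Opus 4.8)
The plan is to establish each property for $\DD_\Omega$ first and then transfer it to $\cbeta_\ve$ using Theorem~\ref{thm:asymmetric}. The key observation is that $\cbeta_\ve(\rho,\sigma) = \big(\frac{\ve}{1-\ve}\Omega(\rho\|\sigma)+1\big)^{-1}$ is a strictly decreasing function of $\Omega(\rho\|\sigma) = 2^{\DD_\Omega(\rho\|\sigma)}$ on $[1,\infty]$. Hence every (in)equality proved for $\DD_\Omega$ yields the reversed (in)equality for $\cbeta_\ve$, with equality cases preserved, and the whole proposition reduces to elementary statements about $D_{\max}$.

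For (i), I would take traces in $\rho \leq \lambda\sigma$ for normalised states to force $\lambda \geq 1$, so $D_{\max}(\rho\|\sigma)\geq 0$ and thus $\DD_\Omega \geq 0$; equality requires both summands to vanish, and since the feasible set $\{\lambda : \lambda\sigma-\rho\geq 0\}$ is closed, $D_{\max}(\rho\|\sigma)=0$ gives $\rho\leq\sigma$, so $\rho\leq\sigma$ and $\sigma\leq\rho$ force $\rho=\sigma$. The bound $\cbeta_\ve\leq 1-\ve$ then follows by substituting $\Omega=1$ into the formula. Part (ii) is immediate from the manifest symmetry of $\DD_\Omega = D_{\max}(\rho\|\sigma)+D_{\max}(\sigma\|\rho)$ under exchange, which passes to $\cbeta_\ve$ through $\Omega$. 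For (iii), a change of variable in the defining infimum gives $D_{\max}(\lambda\rho\|\mu\sigma) = \log(\lambda/\mu) + D_{\max}(\rho\|\sigma)$; summing with the analogous expression for $D_{\max}(\mu\sigma\|\lambda\rho)$ cancels the logarithmic terms and yields scale invariance. Part (iv) follows directly from the support characterisation of $D_{\max}$ recorded in the excerpt: $\DD_\Omega=\infty$ iff at least one of the inclusions $\supp\rho\subseteq\supp\sigma$ or $\supp\sigma\subseteq\supp\rho$ fails, which is exactly $\supp\rho\neq\supp\sigma$.

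The only part requiring a genuine argument is the data-processing inequality (v). The first equality there is just (iii) applied with the rescalings $1/\Tr\E(\rho)$ and $1/\Tr\E(\sigma)$. For the inequality, the central step is monotonicity of $D_{\max}$ under any positive map: if $\rho\leq\lambda\sigma$, i.e.\ $\lambda\sigma-\rho\geq 0$, then positivity of $\E$ gives $\E(\lambda\sigma-\rho)=\lambda\E(\sigma)-\E(\rho)\geq 0$, so $\E(\rho)\leq\lambda\E(\sigma)$; taking the infimum over feasible $\lambda$ gives $D_{\max}(\E(\rho)\|\E(\sigma))\leq D_{\max}(\rho\|\sigma)$. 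Applying this to both orderings and summing yields $\DD_\Omega(\E(\rho)\|\E(\sigma))\leq\DD_\Omega(\rho\|\sigma)$, and the reversed inequality for $\cbeta_\ve$ follows from the monotone relation above.

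The step I expect to be worth emphasising is precisely the monotonicity argument in (v). Unlike the Umegaki relative entropy, whose data-processing inequality needs complete positivity, the operator-inequality definition of $D_{\max}$ makes mere positivity of $\E$ sufficient, since positive maps preserve by definition the positive semidefinite order underlying $\rho\leq\lambda\sigma$. No convexity, operator-monotone function, or semidefinite-programming duality is required; the entire proposition is essentially a transcription of basic facts about $D_{\max}$ through the monotone identity of Theorem~\ref{thm:asymmetric}.
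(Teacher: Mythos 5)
Your proof is correct and takes essentially the same route as the paper's: each property is established at the level of $D_{\max}$ (scale covariance $D_{\max}(\lambda\rho\|\mu\sigma)=\log\frac{\lambda}{\mu}+D_{\max}(\rho\|\sigma)$, the support criterion for finiteness, and monotonicity under positive maps via $\lambda\E(\sigma)-\E(\rho)=\E(\lambda\sigma-\rho)\geq 0$) and then transferred to $\cbeta_\ve$ through the strictly monotone formula of Theorem~\ref{thm:asymmetric}. Your extra care in (i) --- the trace argument forcing $\lambda\geq 1$ and the closedness of the feasible set guaranteeing that $D_{\max}(\rho\|\sigma)=0$ yields $\rho\leq\sigma$ --- only makes explicit steps the paper leaves implicit.
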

\begin{proof}
(i)~holds because $\rho \leq \sigma$ and $\sigma \leq \rho$ can only be true if $\rho = \sigma$. (ii)~is an immediate consequence of the manifestly symmetric definition~\eqref{eq:D_Omega}, together with Theorem~\ref{thm:asymmetric}. (iii)~follows from the fact that $D_{\max}(\lambda \rho \| \mu \sigma) = \log\frac{\lambda}{\mu} + D_{\max}(\rho \| \sigma)$, which is immediate from the definition. (iv)~holds because $D_{\max}(\rho \| \sigma) < \infty$ if and only if $\supp \rho \subseteq \supp \sigma$. (v)~is a consequence of~(iii) coupled with the fact that for all $\lambda$ such that $\rho \leq \lambda \sigma$, the positivity of $\E$ ensures that $\lambda \E(\sigma) - \E(\rho) = \E(\lambda \sigma - \rho) \geq 0$, so $D_{\max}(\E(\rho) \| \E(\sigma)) \leq D_{\max}(\rho \| \sigma)$.
\end{proof}

\begin{remark}
    The symmetry shown in Proposition~\ref{prop:asymmetric_prop}(ii) is \textit{a priori} highly non-obvious, and indeed one of the remarkable consequences of Theorem~\ref{thm:asymmetric}. Indeed, we started by looking at the error in \emph{asymmetric} postselected hypothesis testing, but 
    the expression we obtained is \emph{symmetric} under the exchange of the two states. 
    This can be understood as a consequence of the fact that, for all states $\rho, \sigma$ with $\supp \rho = \supp \sigma$, there exists a completely positive map $\E$ such that $\E(\rho) = \sigma$ and $\E(\sigma) = k \rho$ for some $k \in (0,\infty)$~\cite[Theorem~21]{reeb_2011}. Applying the data-processing inequality of Proposition~\ref{prop:asymmetric_prop}(v) together with the scaling invariance~(iii), we see that we must have $\beta_\ve(\rho \| \sigma) = \beta_\ve(\sigma \| \rho)$.
\end{remark}

In all of the above, we have assumed that $\ve \in (0,1)$. The case $\ve = 0$ requires a separate treatment.

\begin{proposition}
It holds that
\begin{equation}\begin{aligned}
    \cbeta_0 (\rho ,\sigma) = 
     \begin{cases} 1 & \supp \sigma \subseteq \supp \rho, \\ 0 & \text{\rm otherwise.} \end{cases}
\end{aligned}\end{equation}
\end{proposition}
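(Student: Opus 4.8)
The plan is to set $\ve=0$ in the definition~\eqref{eq:DH_def} and observe that the type-I constraint collapses to an exact support condition. Since all operators are positive semidefinite and the relevant denominators are positive, the requirement $\frac{\Tr M_2\rho}{\Tr[(M_1+M_2)\rho]}\le 0$ forces $\Tr M_2\rho=0$, equivalently $\sqrt{\rho}\,M_2\sqrt{\rho}=0$, so that every feasible $M_2$ is supported on the orthogonal complement of $\supp\rho$. I would then split the argument according to the two cases in the statement, writing $\Pi_\rho$ for the projector onto $\supp\rho$.

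In the case $\supp\sigma\subseteq\supp\rho$, I would note that this inclusion is equivalent to $\supp\rho^{\perp}\subseteq\supp\sigma^{\perp}$, whence any feasible $M_2$ also satisfies $\Tr M_2\sigma=0$. The objective $\frac{\Tr M_1\sigma}{\Tr M_1\sigma+\Tr M_2\sigma}$ then equals $1$ at every feasible point, the implicit non-degeneracy restriction ensuring $\Tr M_1\sigma>0$; hence $\cbeta_0(\rho,\sigma)=1$.

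In the case $\supp\sigma\not\subseteq\supp\rho$, I would exhibit an explicit family pushing the objective to $0$. Here the hypothesis is equivalent to $\Tr[(\id-\Pi_\rho)\sigma]>0$, so I would take $M_2\coloneqq\id-\Pi_\rho$ and $M_1\coloneqq\delta\,\Pi_\rho$ with $\delta\in(0,1]$. Feasibility is immediate: $M_1+M_2=\id-(1-\delta)\Pi_\rho\le\id$, while $\Tr M_2\rho=0$ makes the type-I ratio vanish. The objective becomes $\frac{\delta\Tr(\Pi_\rho\sigma)}{\delta\Tr(\Pi_\rho\sigma)+\Tr[(\id-\Pi_\rho)\sigma]}$, which tends to $0$ as $\delta\to0^{+}$; together with nonnegativity of the objective this gives $\cbeta_0(\rho,\sigma)=0$.

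The one point I would verify with care — and the main (if modest) obstacle — is that both denominators $\Tr[(M_1+M_2)\rho]$ and $\Tr[(M_1+M_2)\sigma]$ remain strictly positive along the limiting family, as demanded by the implicit restriction on the optimisation. This is exactly why I choose $M_1=\delta\Pi_\rho$ rather than $M_1=0$: it keeps $\Tr[(M_1+M_2)\rho]=\delta\,\Tr\rho>0$ for all $\delta>0$, whereas $\Tr[(M_1+M_2)\sigma]\ge\Tr[(\id-\Pi_\rho)\sigma]>0$ is guaranteed by the case hypothesis. The remaining steps are routine support algebra.
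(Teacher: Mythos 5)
Your proof is correct and follows essentially the same route as the paper's: in the case $\supp\sigma\subseteq\supp\rho$ you show every feasible pair has $\Tr M_2\sigma=0$ and hence objective value $1$, and in the complementary case you exhibit an explicit $\delta$-family of feasible measurements whose conditional type~II error tends to $0$. The only differences are cosmetic — the paper takes $M_1=\delta\id$ and $M_2=(1-\delta)\psi$ for a pure state $\psi$ outside $\supp\rho$, whereas you take $M_1=\delta\Pi_\rho$ and $M_2=\id-\Pi_\rho$ — and your care about the positivity of both denominators along the limiting family is sound and matches the paper's implicit non-degeneracy restriction.
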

\begin{proof}
If $\supp \sigma \subseteq \supp \rho$, then $\Tr M_2 \rho = 0$ implies that $\Tr M_2 \sigma = 0$. Any feasible pair $\{M_1,M_2\}$ such that $\Tr M_1 \sigma \neq 0$ then satisfies
\begin{equation}\begin{aligned}
   \frac {\Tr M_1 \sigma }{ \Tr[(M_1+M_2)\sigma] } = \frac {\Tr M_1 \sigma }{ \Tr M_1\sigma } = 1.
\end{aligned}\end{equation}

For the case $\supp \sigma \not\subseteq \supp \rho$, by hypothesis we know that there exists a pure state $\psi$ such that $\Tr \psi \sigma > 0$ but $\Tr \psi \rho = 0$. Pick $M_1 = \delta \id, M_2 = (1-\delta) \psi$ for some $\delta \in (0,1)$ to conclude that any value of
\begin{equation}\begin{aligned}
   \frac {\Tr M_1 \sigma }{ \Tr[(M_1+M_2)\sigma] } = \frac{\delta}{\delta + (1-\delta) \Tr \psi \sigma}
\end{aligned}\end{equation}
is achievable. Taking the limit $\delta \to 0$ gives the stated result.
\end{proof}
It is a curious fact that $\cbeta_\ve (\rho , \sigma)$ is symmetric in its arguments for $\ve > 0$ (Proposition~\ref{prop:asymmetric_prop}(ii)), but no longer so for $\ve = 0$.
This asymmetry is due to the situation where $\supp \sigma$ is strictly contained in $\supp \rho$. In this case, $\rho$ can be distinguished from $\sigma$, but not the other way around: consider the POVM $\{\id - \Pi_\sigma,0,\Pi_\sigma\}$ which allows us to conclude, whenever we obtain a conclusive outcome, that the state in our possession is $\rho$; however, the probability of a conclusive outcome when measuring $\sigma$ is zero. This problem is avoided for $\ve > 0$.


\subsection{Composite postselected hypothesis testing}\label{sec:composite}

In the setting of composite hypothesis testing, we wish to distinguish between a given state $\rho$ and a whole set of quantum states, denoted $\FF$. Let us first consider the case of conventional (non-postselected) hypothesis testing. Here, for a given measurement $M$, the performance 
is quantified with the worst-case type II error, i.e.
\begin{equation}\begin{aligned}
  \beta_{\FF} (M) \coloneqq \sup_{\sigma \in \FF} \,\beta(M) = \sup_{\sigma \in \FF} \,\Tr M_1 \sigma,
\end{aligned}\end{equation}
and the optimised error is defined as $\beta_{\ve,\FF}(\rho) \coloneqq \inf \lset \beta_\FF(M) \bar \alpha(M) \leq \ve \rset$.
Brand\~ao and Plenio~\cite{brandao_2010-1}  conjectured a remarkable extension of the quantum Stein's lemma to composite hypothesis testing: namely, that for every family of sets $(\FF_n)_n$ satisfying a small number of regularity properties (including convexity and closedness under tensor product, partial trace, and permutations of systems), it holds that
\begin{equation}\begin{aligned}\label{eq:gen_steins}
  \lim_{n\to\infty} - \frac1n \log \beta_{\ve, \FF_n} (\rho^{\otimes n}) \texteq{?} D_{\FF}^\infty(\rho),
\end{aligned}\end{equation}
where $D_{\FF}^\infty$ is the regularised relative entropy, $D_{\FF}^\infty(\rho) \coloneqq \lim_{n\to\infty} \frac1n \min_{\sigma_n \in \FF_n} D(\rho^{\otimes n}\|\sigma_n)$. Although a proof of this result was claimed in~\cite{brandao_2010-1}, issues with the derivation were recently uncovered~\cite{berta_2022}, and~\eqref{eq:gen_steins} is therefore not known to hold in general.

We show that an 
analogue of the above result holds true in the setting of postselected hypothesis testing. To this end, define 
\begin{equation}\begin{aligned}
  \cbeta_{\ve,\FF}(\rho) \coloneqq& \inf_{M \in \M_3} \lset \cbeta_\FF(M) \bar \calpha(M) \leq \ve \rset
  \\=& \inf_{M_1,M_2 \geq 0} \lset \sup_{\sigma \in \FF} \frac{\Tr M_1 \sigma}{\Tr \left[(M_1+M_2) \sigma\right]} \bar M_1 + M_2 \leq \id,\; \frac{\Tr M_2 \rho}{\Tr \left[(M_1+M_2) \rho\right]} \leq \ve \rset.
\end{aligned}\end{equation}
We then show that the asymptotic error exponent of composite postselected hypothesis testing is given exactly by the regularisation of the Hilbert projective metric $\DD_\Omega(\rho\|\sigma)$ optimised over the sets~$\FF_n$.

\begin{boxed}{white}
\begin{theorem}\mbox{}\label{thm:postselected_steins}
For every convex and closed set of quantum states $\FF$, 
\begin{equation}\begin{aligned}
  \cbeta_{\ve,\FF} (\rho) = \left(\frac{\ve}{1-\ve} \min_{\sigma \in \FF} \Omega(\rho\|\sigma)  + 1\right)^{-1}.
\end{aligned}\end{equation}

As a consequence, for all  $\ve \in (0,1)$ and for all families of sets $(\FF_n)_n$ that are convex and closed, we have that
\begin{equation}\begin{aligned}\label{eq:generalised_asymmetric1}
  \liminf_{n\to\infty} - \frac1n \log \cbeta_{\ve,\FF_n}(\rho^{\otimes n}) = \DD_{\Omega,\FF}^\infty (\rho),
\end{aligned}\end{equation}
where
\begin{equation}\begin{aligned}\label{eq:generalised_asymmetric2}
  \DD_{\Omega,\FF}^\infty(\rho) = \liminf_{n\to\infty} \frac1n \min_{\sigma_n \in \FF_n} \DD_{\Omega} (\rho^{\otimes n} \| \sigma_n).
\end{aligned}\end{equation}
If the sets $(\FF_n)_n$ are closed under tensor product, i.e.\ if $\sigma_n \otimes \sigma_m \in \FF_{n+m}$ for all $\sigma_n \in \FF_n$ and $\sigma_m \in \FF_m$, then $\liminf$ in~\eqref{eq:generalised_asymmetric1}--\eqref{eq:generalised_asymmetric2} can be replaced with $\lim$, because the limits are achieved.
\end{theorem}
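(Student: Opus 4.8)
The plan is to first nail down the one-shot identity and then read off both asymptotic statements from it. For the one-shot formula, I would argue exactly as in the proof of Theorem~\ref{thm:asymmetric}: the constraint $M_1+M_2\leq\id$ is superfluous because both the objective and the $\rho$-constraint are invariant under $M_i\mapsto cM_i$, and writing $\frac{\Tr[(M_1+M_2)\sigma]}{\Tr M_1\sigma}=1+\frac{\Tr M_2\sigma}{\Tr M_1\sigma}$ and taking reciprocals gives
\begin{equation*}
  \cbeta_{\ve,\FF}(\rho)^{-1}=\sup_{M}\ \inf_{\sigma\in\FF}\left(1+\frac{\Tr M_2\sigma}{\Tr M_1\sigma}\right),
\end{equation*}
where $M_1,M_2\geq0$ range over operators obeying $\frac{\Tr M_2\rho}{\Tr[(M_1+M_2)\rho]}\leq\ve$. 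The elementary max--min inequality already gives $\cbeta_{\ve,\FF}(\rho)^{-1}\leq\inf_{\sigma}\sup_M(\cdots)$, and since for each fixed $\sigma$ the inner supremum equals $1+\frac{\ve}{1-\ve}\Omega(\rho\|\sigma)$ by Theorem~\ref{thm:asymmetric} (the reformulation in \eqref{eq:DH_def2}), this yields the easy inequality $\cbeta_{\ve,\FF}(\rho)\geq\big(\frac{\ve}{1-\ve}\min_{\sigma\in\FF}\Omega(\rho\|\sigma)+1\big)^{-1}$.

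The reverse inequality is the heart of the proof, and is where I would invoke Sion's minimax theorem to exchange $\sup_M\inf_\sigma=\inf_\sigma\sup_M$. The set $\FF$ is convex and, being a closed subset of the compact set of density operators, compact; after normalising the measurement operators (e.g.\ re-imposing $M_1+M_2\leq\id$) the feasible $M$-set is convex and compact too. The function $g(M,\sigma)=\frac{\Tr M_2\sigma}{\Tr M_1\sigma}$ is a ratio of linear functionals, hence quasi-linear: its superlevel sets $\{\,\Tr M_2\sigma\geq t\,\Tr M_1\sigma\,\}$ are half-spaces, so $g$ is quasi-concave in $(M_1,M_2)$ for fixed $\sigma$ and quasi-convex in $\sigma$ for fixed $M$. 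Sion's theorem then gives, the constant $1$ passing through trivially, $\sup_M\inf_\sigma g=\inf_\sigma\sup_M g=\frac{\ve}{1-\ve}\min_{\sigma\in\FF}\Omega(\rho\|\sigma)$, so that $\cbeta_{\ve,\FF}(\rho)^{-1}=1+\frac{\ve}{1-\ve}\min_{\sigma\in\FF}\Omega(\rho\|\sigma)$, the claimed identity (the fully support-mismatched case, where $\min_\sigma\Omega=\infty$ and $\cbeta_{\ve,\FF}=0$, is consistent with the convention). I expect the main obstacle to be precisely the degenerate configurations in which $\Tr M_1\sigma\to0$ and $g$ blows up: these must be handled either by restricting to $M_1$ with $\Tr M_1\sigma>0$ and passing to the closure as in the footnote following \eqref{eq:DH_def2}, or by verifying upper/lower semicontinuity of the extended-real-valued $g$ directly, so that Sion's hypotheses are genuinely met.

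With the one-shot formula in hand the asymptotics follow immediately. Since $\Omega=2^{\DD_\Omega}$, one has $\min_{\sigma_n\in\FF_n}\Omega(\rho^{\otimes n}\|\sigma_n)=2^{D_n}$ with $D_n:=\min_{\sigma_n\in\FF_n}\DD_\Omega(\rho^{\otimes n}\|\sigma_n)$, and because $\DD_\Omega\geq0$ we have $2^{D_n}\geq1$. The same sandwich used for \eqref{eq:two-sided}--\eqref{eq:stein_secondorder} then gives
\begin{equation*}
  \tfrac{D_n}{n}+\tfrac1n\log\tfrac{\ve}{1-\ve}\ \leq\ -\tfrac1n\log\cbeta_{\ve,\FF_n}(\rho^{\otimes n})\ \leq\ \tfrac{D_n}{n}+\tfrac1n\log\tfrac{1}{1-\ve}.
\end{equation*}
The two $\ve$-dependent terms vanish as $n\to\infty$, so taking $\limsup$ on all sides proves \eqref{eq:generalised_asymmetric1}--\eqref{eq:generalised_asymmetric2}.

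Finally, when the $(\FF_n)_n$ are closed under tensor product I would upgrade the $\limsup$ to a genuine limit by Fekete's subadditivity lemma applied to $(D_n)_n$. Choosing optimisers $\sigma_n^\ast\in\FF_n$ and $\sigma_m^\ast\in\FF_m$ for $D_n$ and $D_m$, closure under tensor product gives $\sigma_n^\ast\otimes\sigma_m^\ast\in\FF_{n+m}$, and the additivity of $D_{\max}$ recorded in Section~\ref{sec:asymmetric_exp} makes $\DD_\Omega$ additive on tensor products, so
\begin{equation*}
  D_{n+m}\ \leq\ \DD_\Omega\!\big(\rho^{\otimes(n+m)}\,\big\|\,\sigma_n^\ast\otimes\sigma_m^\ast\big)\ =\ D_n+D_m.
\end{equation*}
Subadditivity of $(D_n)_n$ forces $\lim_n D_n/n=\inf_n D_n/n$ to exist, whence the $\limsup$ in \eqref{eq:generalised_asymmetric1}--\eqref{eq:generalised_asymmetric2} is in fact a limit.
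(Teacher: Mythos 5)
Your easy direction (the max--min inequality combined with Theorem~\ref{thm:asymmetric}), the sandwich argument for the asymptotics, and the Fekete step are all correct, and the latter two coincide with the paper's own proof (you even prove the subadditivity $D_{n+m}\leq D_n+D_m$ directly from additivity of $D_{\max}$, where the paper cites a reference). But note that the paper never invokes a minimax theorem for the one-shot identity: it repeats the change-of-variables chain of Theorem~\ref{thm:asymmetric} and then cites, as a black box, the duality formula
\begin{equation*}
  \min_{\sigma\in\FF}\Omega(\rho\|\sigma)
  =\sup\left\{\frac{\Tr A\rho}{\Tr B\rho}\ \colon\ A,B\geq 0,\ \Tr A\sigma\leq\Tr B\sigma\ \ \forall\,\sigma\in\FF\right\}
\end{equation*}
from~\cite{regula_2021-4}, which is exactly where convexity and closedness of $\FF$ do their work. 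Your Sion step is meant to replace that citation, and this is where your proposal has a genuine gap, not a deferrable technicality: with the domain you chose (all $M_1,M_2\geq0$ with $M_1+M_2\leq\id$ and the linear $\rho$-constraint) and the unbounded objective $g(M,\sigma)=1+\Tr M_2\sigma/\Tr M_1\sigma$, Sion's hypotheses genuinely fail, and neither of the two fixes you sketch suffices as stated.

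The obstruction sits at configurations with $\Tr M_1\sigma=\Tr M_2\sigma=0$, which your compact $M$-domain contains whenever $\FF$ has rank-deficient elements (and the theorem quantifies over all closed convex $\FF$). If you set $g=+\infty$ whenever $\Tr M_1\sigma=0$ --- the only convention compatible with upper semicontinuity and quasi-concavity in $M$ --- then for fixed $M$ the sublevel set $\{\sigma\in\FF\colon \Tr M_2\sigma\leq(t-1)\Tr M_1\sigma,\ \Tr M_1\sigma>0\}$ is not closed, since a sequence in it can converge to a state annihilating both $M_1$ and $M_2$; so $g(M,\cdot)$ is not lower semicontinuous and Sion does not apply. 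Assigning instead a finite value at those doubly degenerate points restores lower semicontinuity in $\sigma$ but destroys upper semicontinuity in $M$ at the very same points; hence ``verifying semicontinuity of the extended-real-valued $g$ directly'' cannot succeed under any convention. Your other fix, the footnote's regularisation $M_1\mapsto M_1+\delta\id$, also breaks in the composite case: $\sup_{\sigma\in\FF}\frac{\Tr[(M_1+\delta\id)\sigma]}{\Tr[(M_1+M_2+\delta\id)\sigma]}$ need not converge to $\sup_{\sigma\in\FF}\frac{\Tr M_1\sigma}{\Tr[(M_1+M_2)\sigma]}$, because for $\sigma$ with $\Tr[(M_1+M_2)\sigma]\ll\delta$ the perturbed ratio is near $1$ regardless of the unperturbed value --- the limit $\delta\to0$ does not commute with $\sup_{\sigma\in\FF}$. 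The minimax route can be repaired, but it needs different choices: apply Sion to the bounded error $h(M,\sigma)=\Tr M_1\sigma/\Tr[(M_1+M_2)\sigma]$ over the convex (not necessarily compact; Sion only needs $\FF$ compact) domain of $M$ obeying the paper's implicit restrictions $\Tr[(M_1+M_2)\sigma']>0$ for all $\sigma'\in\FF$ and $\Tr[(M_1+M_2)\rho]>0$, on which $h$ is continuous and quasi-linear in each argument; the $\delta\id$-perturbation is then needed only to identify $\inf_M h(M,\sigma)$ with $\cbeta_\ve(\rho,\sigma)$ for a single fixed $\sigma$, where it is harmless. Without such a repair (or without invoking the cited duality theorem as the paper does), the crux of the one-shot formula --- and hence of the whole theorem --- remains unproven.
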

\end{boxed}
We remark here that the quantity $\min_{\sigma \in \FF} \Omega(\rho\|\sigma)$ can be computed as the optimal value of a conic linear optimisation problem~\cite{regula_2021-4}; when membership in the set $\FF$ can be expressed using linear matrix inequalities, this reduces to a semidefinite program.
\begin{proof}
The basic idea is exactly the same as in the proof of Theorem~\ref{thm:asymmetric}. Let us first consider the one-shot case. We have that
\begin{equation}\begin{aligned}
\cbeta_{\ve,\FF} (\rho) &\texteq{(i)} \inf_{M_1,M_2 \geq 0} \lset \sup_{\sigma \in \FF} \frac{\Tr M_1 \sigma}{\Tr \left[(M_1+M_2) \sigma\right]} \bar \frac{\Tr M_2 \rho}{\Tr \left[(M_1+M_2) \rho\right]} \leq \ve \rset\\
&= \inf_{\substack{M_1,M_2 \geq 0\\t > 0}} \lsetr t \barr \frac{\Tr M_1 \sigma}{\Tr \left[(M_1+M_2) \sigma\right]} \leq t \; \forall \sigma \in \FF,\; \frac{\Tr M_2 \rho}{\Tr \left[(M_1+M_2) \rho\right]} \leq \ve\rsetr\\
&\texteq{(ii)} \inf_{\substack{M_1,M_2 \geq 0\\t > 0}} \lsetr t \barr \frac{\Tr M_2 \sigma}{\Tr M_1 \sigma} \geq \frac1t - 1 \; \forall \sigma \in \FF,\; \frac{\Tr M_1 \rho}{\Tr M_2 \rho}  \geq \frac1\ve - 1 \rsetr\\
&\texteq{(iii)} \inf_{\substack{M_1,M'_2 \geq 0\\t > 0}} \lsetr t \barr \frac{\Tr M'_2 \sigma}{\Tr M_1 \sigma} \geq 1 \; \forall \sigma \in \FF,\; \frac{\Tr M_1 \rho}{\Tr M'_2 \rho} \geq \frac{1-\ve}{\ve} \left(\frac1t - 1\right) \rsetr\\
&\texteq{(iv)} \inf_{\substack{M_1,M'_2 \geq 0\\t' > 0}} \lsetr \frac{1}{t'} \barr \frac{\Tr M_1 \sigma}{\Tr M'_2 \sigma} \leq 1 \; \forall \sigma \in \FF,\; \frac{\Tr M_1 \rho}{\Tr M'_2 \rho} \geq \frac{1-\ve}{\ve} \left(t' - 1\right)\rsetr,
\end{aligned}\end{equation}
where: (i) we removed the superfluous constraint $M_1 + M_2 \leq \id$ as before; (ii) we rewrote $\frac{\Tr M_1 \sigma}{\Tr \left[(M_1+M_2) \sigma\right]} = \left( 1 + \frac{\Tr M_2 \rho}{\Tr M_1 \sigma}\right)^{-1}$ and analogously for $\rho$; (iii) we made the change of variables $M'_2 \coloneqq \frac{t}{1-t} M_2$; (iv) we made the change of variables $t' \coloneqq \frac1t$. 
Thus, 
\begin{equation}\begin{aligned}
\cbeta_{\ve,\FF} (\rho)^{-1} &= \sup_{\substack{M_1,M'_2 \geq 0\\t' > 0}} \lsetr t' \barr \frac{\Tr M_1 \sigma}{\Tr M'_2 \sigma} \leq 1 \; \forall \sigma \in \FF,\; \frac{\Tr M_1 \rho}{\Tr M'_2 \rho} \geq \frac{1-\ve}{\ve} \left(t' - 1\right)\rsetr\\
&= \sup_{M_1,M'_2 \geq 0} \lsetr \frac{\ve}{1-\ve}\frac{\Tr M_1 \rho}{\Tr M'_2 \rho} + 1  \barr \frac{\Tr M_1 \sigma}{\Tr M'_2 \sigma} \leq 1 \; \forall \sigma \in \FF\rsetr.
\end{aligned}\end{equation}
Now, for every convex and closed set $\FF$, we have~\cite[Theorem~1]{regula_2021-4}
\begin{equation}\begin{aligned}
  \min_{\sigma \in \FF} \Omega(\rho\|\sigma) &=  \sup_{A, B\geq 0} \lsetr \frac{\Tr A \rho}{\Tr B \rho}  \barr \frac{\Tr A \sigma}{\Tr B \sigma} \leq 1 \; \forall \sigma \in \FF\rsetr,
\end{aligned}\end{equation}
from which we immediately get that
\begin{equation}\begin{aligned}
  \cbeta_{\ve,\FF} (\rho)^{-1} = \frac{\ve}{1-\ve} \min_{\sigma \in \FF} \Omega(\rho\|\sigma)  + 1.
\end{aligned}\end{equation}
For the $n$-copy case, this means that
\begin{equation}\begin{aligned}
  \cbeta_{\ve,\FF_n} (\rho^{\otimes n})^{-1} = \frac{\ve}{1-\ve} \min_{\sigma_n \in \FF_n} \Omega(\rho^{\otimes n}\|\sigma_n)  + 1,
\end{aligned}\end{equation}
implying, as in~\eqref{eq:two-sided}, that
\begin{equation}\begin{aligned}
    \frac{\ve}{1-\ve} \min_{\sigma_n \in \FF_n} \Omega(\rho^{\otimes n}\|\sigma_n) \leq \cbeta_{\ve,\FF_n} (\rho^{\otimes n})^{-1} \leq \frac{1}{1-\ve} \min_{\sigma_n \in \FF_n} \Omega(\rho^{\otimes n}\|\sigma_n) .
\end{aligned}\end{equation}
The result follows by taking the logarithm of the above, dividing by $n$, and taking the $\liminf$.

When the sets $\FF_n$ are closed under tensor product, then $\DD_{\Omega,\FF_{m+n}}(\rho^{\otimes m+n}) \leq \DD_{\Omega,\FF_m}(\rho^{\otimes m}) + \DD_{\Omega,\FF_n}(\rho^{\otimes n})$~\cite[Theorem~1]{regula_2021-4}. By Fekete's lemma~\cite{fekete_1923}, this means that the limit $\lim_{n\to\infty} \frac1n \DD_{\Omega,\FF_n}(\rho^{\otimes n})$
exists.
\end{proof}

In addition to the conjectured generalisation of quantum Stein's lemma, the works of Brand\~ao and Plenio~\cite{brandao_2010-1,brandao_2010} established an interesting connection between quantum hypothesis testing and operational aspects of quantum resource transformations. It was shown, in particular, that the asymptotic error exponent in (conventional) asymmetric hypothesis testing (Eq.~\eqref{eq:gen_steins}) gives exactly the rate at which quantum entanglement can be distilled, i.e.\ converted to pure maximally entangled states, under the class of non-entangling quantum channels. This relation is mirrored exactly in postselected hypothesis testing: the quantity $\DD_{\Omega,\FF}^\infty (\rho)$ in Theorem~\ref{thm:postselected_steins} was recently shown to give the rate of distillation under probabilistic non-entangling operations~\cite{regula_2022-1}, and here we have shown that it is also precisely the asymptotic error exponent.


\section{Symmetric case}\label{sec:symmetric}


\subsection{One-shot symmetric hypothesis testing error}\label{sec:symmetric_oneshot}

Consider the discrimination of two states, $\rho$ and $\sigma$, with their respective priors being $p$ and $q$. Let us recall the definition of the postselected probability of error $\cperr$:
\begin{equation}\begin{aligned}
    \cperr(\rho,\sigma \pbar p,q) = \inf_{M_1, M_2 \geq 0,\; M_1 + M_2 \leq \id} \, \frac{p\, \Tr M_2 \rho + q\, \Tr M_1 \sigma}{ p \Tr [(M_1 + M_2)\rho] + q \Tr [(M_1 + M_2)\sigma]}.
\end{aligned}\end{equation}
We note that a closely related figure of merit appeared previously in~\cite{fiurasek_2003}, motivated by a framework generalising the setting of unambiguous state discrimination (cf.\ also~\cite{bagan_2012}). There, the authors computed the postselected probability of \emph{success}, which is just $1 - \cperr(\rho,\sigma \pbar p,q)$ in our notation. However, the asymptotic properties of this quantity were not studied, nor was its relation with the Thompson metric observed.

Our first result establishes that the one-shot postselected discrimination error is given by a simple function of the Thompson metric, or rather its non-logarithmic variant
\begin{equation}\begin{aligned}
  \Thomp(\rho \| \sigma) \coloneqq 2^{\DD_\Thomp(\rho\|\sigma)}.
\end{aligned}\end{equation}
Note that below we will apply the Thompson metric also to rescaled quantum states as $\Thomp(p \rho \| q \sigma)$. In such a case, we use the exact same definition of the max-relative entropy as for normalised states, which gives
\begin{equation}\begin{aligned}
  D_{\max} (p \rho \| q \sigma) &= \log \inf \lset \lambda \in \RR_+ \bar p \rho \leq \lambda q \sigma \rset\\
  &= \log \frac{p}{q} + D_{\max}(\rho \| \sigma).
\end{aligned}\end{equation}

\begin{boxed}{white}
\begin{theorem}\label{thm:symmetric}
For all states $\rho, \sigma$ and for all priors $p,q =  1-p \in (0,1)$, we have
\begin{equation}\begin{aligned}
    \cperr(\rho,\sigma \pbar p,q) = \big( \Thomp (p \rho \| q \sigma ) + 1 \big)^{-1}.
\end{aligned}\end{equation}
\end{theorem}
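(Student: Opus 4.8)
The plan is to follow the template of the proof of Theorem~\ref{thm:asymmetric}, reducing the optimisation to a single ratio that is squeezed between the two max-relative entropies. First I would absorb the priors by setting $\tilde\rho\coloneqq p\rho$ and $\tilde\sigma\coloneqq q\sigma$, so that $\Thomp(p\rho\|q\sigma)=\Thomp(\tilde\rho\|\tilde\sigma)$ and the objective reads $\frac{\Tr M_2\tilde\rho+\Tr M_1\tilde\sigma}{\Tr[(M_1+M_2)(\tilde\rho+\tilde\sigma)]}$. As in the asymmetric case the constraint $M_1+M_2\leq\id$ is superfluous, since the objective is invariant under the simultaneous rescaling $M_i\mapsto cM_i$; dropping it, passing to the reciprocal, and expanding $\Tr[(M_1+M_2)(\tilde\rho+\tilde\sigma)]$ in the denominator, I would simplify to
\[
\cperr(\rho,\sigma \pbar p,q)^{-1} = 1 + \sup_{M_1,M_2\geq 0}\frac{\Tr M_1\tilde\rho + \Tr M_2\tilde\sigma}{\Tr M_1\tilde\sigma + \Tr M_2\tilde\rho}.
\]
Everything then hinges on showing that this supremum --- which plays the role of the dual form~\eqref{eq:omega_dual} used for the Hilbert metric --- equals $\Thomp(\tilde\rho\|\tilde\sigma)$.

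For the upper bound I would invoke the operator inequalities $\tilde\rho\leq\Thomp(\tilde\rho\|\tilde\sigma)\,\tilde\sigma$ and $\tilde\sigma\leq\Thomp(\tilde\rho\|\tilde\sigma)\,\tilde\rho$, both valid because the Thompson metric is, by definition, the smallest $\lambda$ with $\lambda^{-1}\tilde\sigma\leq\tilde\rho\leq\lambda\tilde\sigma$. Testing these against $M_1,M_2\geq 0$ gives $\Tr M_1\tilde\rho\leq\Thomp(\tilde\rho\|\tilde\sigma)\,\Tr M_1\tilde\sigma$ and $\Tr M_2\tilde\sigma\leq\Thomp(\tilde\rho\|\tilde\sigma)\,\Tr M_2\tilde\rho$, so the numerator is at most $\Thomp(\tilde\rho\|\tilde\sigma)$ times the denominator and the supremum is bounded by $\Thomp(\tilde\rho\|\tilde\sigma)$.

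For achievability I would exploit the asymmetry of the two terms in the ratio: assuming without loss of generality that $D_{\max}(\tilde\rho\|\tilde\sigma)\geq D_{\max}(\tilde\sigma\|\tilde\rho)$, so that $\Thomp(\tilde\rho\|\tilde\sigma)=2^{D_{\max}(\tilde\rho\|\tilde\sigma)}$, I would set $M_2=0$ and take $M_1=\tilde\sigma^{-1/2}\proj{\psi}\tilde\sigma^{-1/2}$, with $\ket{\psi}$ the top eigenvector of $\tilde\sigma^{-1/2}\tilde\rho\tilde\sigma^{-1/2}$, exactly as in the achievability construction following Theorem~\ref{thm:asymmetric}. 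This makes the ratio equal to $\frac{\Tr M_1\tilde\rho}{\Tr M_1\tilde\sigma}=\norm{\tilde\sigma^{-1/2}\tilde\rho\tilde\sigma^{-1/2}}{\infty}=2^{D_{\max}(\tilde\rho\|\tilde\sigma)}=\Thomp(\tilde\rho\|\tilde\sigma)$, saturating the bound. Combining the two directions yields the stated formula.

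The reciprocal manipulation and the weighted-average estimate are routine; the only genuine care is needed in the degenerate case $\Thomp(\tilde\rho\|\tilde\sigma)=\infty$, i.e.\ $\supp\rho\neq\supp\sigma$, where the theorem predicts $\cperr=0$. Here at least one of $\supp\tilde\rho\not\subseteq\supp\tilde\sigma$ or $\supp\tilde\sigma\not\subseteq\supp\tilde\rho$ holds; in the former I would feed a vector $\ket{\phi}$ orthogonal to $\supp\tilde\sigma$ with $\bra{\phi}\tilde\rho\ket{\phi}>0$ into $M_1$ (and symmetrically use $M_2$ in the latter), forcing $\Tr M_1\tilde\sigma+\Tr M_2\tilde\rho=0$ while the numerator stays positive, so the supremum diverges. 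I expect this support bookkeeping, rather than the main inequality, to be the part demanding the most attention.
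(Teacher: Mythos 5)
Your proof is correct, and it follows the same skeleton as the paper's: absorb the priors, drop the superfluous constraint $M_1+M_2\leq\id$ by scale invariance, pass to the reciprocal, and reduce everything to showing that $\sup_{M_1,M_2\geq 0}\frac{\Tr M_1\tilde\rho+\Tr M_2\tilde\sigma}{\Tr M_1\tilde\sigma+\Tr M_2\tilde\rho}=\Thomp(\tilde\rho\|\tilde\sigma)$. Where you genuinely diverge is in how that key identity is established. The paper writes $\Thomp(p\rho\|q\sigma)$ as the conic program $\inf\{\lambda : q\sigma\leq\lambda p\rho,\ p\rho\leq\lambda q\sigma\}$ and passes to its Lagrangian dual, invoking Slater's condition (with $A=B=\id$ strictly feasible) to get strong duality abstractly; you instead prove the identity constructively: the upper bound is a weak-duality argument (testing the attained operator inequalities $\tilde\rho\leq\Thomp\,\tilde\sigma$ and $\tilde\sigma\leq\Thomp\,\tilde\rho$ against $M_1,M_2\geq0$), and the lower bound comes from the explicit rank-one choice $M_1=\tilde\sigma^{-1/2}\proj{\psi}\tilde\sigma^{-1/2}$, $M_2=0$ --- which is precisely the optimal measurement the paper only exhibits \emph{after} the theorem, in its achievability remark. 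Each route buys something: the paper's duality argument is shorter given the machinery and is the form that scales to composite hypotheses (where no explicit optimizer is available, cf.\ Theorem~\ref{thm:postselected_steins}); yours is more elementary and self-contained, produces the optimal measurement inside the proof, and --- unlike the paper, which leaves this implicit in the Slater argument --- handles the degenerate case $\supp\rho\neq\supp\sigma$ explicitly, where your support bookkeeping (a vector in $\ker\tilde\sigma$ with $\bra{\phi}\tilde\rho\ket{\phi}>0$, guaranteed to exist since $\bra{\phi}\tilde\rho\ket{\phi}=0$ on all of $\ker\tilde\sigma$ would force $\supp\tilde\rho\subseteq\supp\tilde\sigma$) correctly yields $\cperr=0$.
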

\end{boxed}
\begin{proof}
We first derive the dual formulation of the Thompson metric. Writing
\begin{equation}\begin{aligned}\label{eq:thompson_primal}
   \Thomp (p \rho \| q \sigma ) &= \inf \lsetr \max\{ \lambda_1, \lambda_2 \} \barr \frac{1}{\lambda_1} q \sigma \leq  p \rho \leq \lambda_2 \,q \sigma \rsetr \\
   &= \inf \lsetr \lambda \barr q \sigma \leq \lambda \, p \rho,\; p \rho \leq \lambda q \sigma \rsetr,
\end{aligned}\end{equation}
we obtain the dual program as
\begin{equation}\begin{aligned}\label{eq:thompson_dual}
    \Thomp (p \rho \| q \sigma ) &= \sup \lset q \Tr A\sigma + p \Tr B\rho \bar A, B \geq 0,\; q \Tr B\sigma + p \Tr A\rho = 1 \rset\\
    &= \sup_{A,B \geq 0} \,\frac{q \Tr A\sigma + p \Tr B\rho}{q \Tr B\sigma + p \Tr A\rho}.
\end{aligned}\end{equation}
The fact that we were justified in claiming that the optimal value of~\eqref{eq:thompson_primal} equals the optimal value of~\eqref{eq:thompson_dual} follows from the strong duality of the optimisation problem, easily seen by noticing that $A = B =\id$ are strongly feasible solutions and invoking Slater's theorem~\cite[Thm.~28.2]{rockafellar_1970}.

As before, we see that the constraint $M_1 + M_2 \leq \id$ in the definition of $\cperr$ is superfluous and we can ignore it without loss of generality. Notice then that
\begin{equation}\begin{aligned}
   \cperr(\rho,\sigma \pbar p,q)^{-1} &= \sup_{M_1,M_2\geq 0} \frac{q \Tr [(M_1 + M_2)\sigma] + p \Tr [(M_1 + M_2)\rho]}{q \Tr M_1 \sigma + p \Tr M_2 \rho}\\
   &= \sup_{M_1,M_2\geq 0} \left( 1 + \frac{q \Tr M_2\sigma + p \Tr M_1 \rho}{q \Tr M_1 \sigma + p \Tr M_2 \rho} \right)\\
   &= \big( 1 +\Thomp (p \rho \| q \sigma ) \big),
\end{aligned}\end{equation}
where we used the dual program for the Thompson metric~\eqref{eq:thompson_dual}. This establishes the desired equivalence between the two problems.
\end{proof}

\para{Achievability} A measurement that achieves the error $\cperr(\rho,\sigma \pbar p,q)$ can be constructed again using the fact that $D_{\max} (p \rho \| q \sigma) = \log \left( \frac{p}{q} \norm{\sigma^{-1/2} \rho \sigma^{-1/2}}{\infty} \right)$. Below, we will assume that $\DD_\Thomp (p \rho \| q \sigma) = D_{\max}( p \rho \| q \sigma)$ without loss of generality. Let $\ket{\psi_{\max}}$ be an eigenvector corresponding to the largest eigenvalue of the operator $\sigma^{-1/2} \rho \sigma^{-1/2}$. Then, define
\begin{equation}\begin{aligned}
  W = \sigma^{-1/2} \proj{\psi_{\max}} \sigma^{-1/2}.
\end{aligned}\end{equation}
This operator satisfies $\Tr W \sigma = 1$ and $\Tr W \rho = \norm{\sigma^{-1/2} \rho \sigma^{-1/2}}{\infty}$. Then, define the three-outcome POVM
\begin{equation}\begin{aligned} \label{eq:achieving_symmetric}
  \left\{ \frac{W}{\bra{\psi_{\max}} \sigma^{-1} \ket{\psi_{\max}}},\, 0,\, \id - \frac{W}{\bra{\psi_{\max}} \sigma^{-1} \ket{\psi_{\max}}} \right\},
\end{aligned}\end{equation}
where we used that the largest (and the only non-zero) eigenvalue of $W$ is $\bra{\psi_{\max}} \sigma^{-1} \ket{\psi_{\max}}$. Although the probability of conclusively discriminating $\rho$ is zero, the average error $\cperr$ indeed matches the optimal value of $(1+\frac{p}{q} \norm{\sigma^{-1/2} \rho \sigma^{-1/2}}{\infty})^{-1}$. Note that when $\DD_\Thomp (p \rho \| q \sigma) = D_{\max}( q \sigma \| p \rho)$, the role of $\rho$ and $\sigma$ is interchanged in the above discussion.


\subsection{Symmetric hypothesis testing error exponent}\label{sec:symmetric_exp}

Using the additivity of $D_{\max}$ and the property that $D_{\max}(p \rho \| q \sigma) = D_{\max}(\rho \| \sigma) +  \log \frac{p}{q}$, we have that
\begin{equation}\begin{aligned}
  \DD_\Thomp(p \rho^{\otimes n} \| q \sigma^{\otimes n}) &= \max \left\{ \log \frac{p}{q} + n D_{\max}(\rho \| \sigma),\,  \log \frac{q}{p} + n D_{\max}(\sigma \| \rho) \right\}.
\end{aligned}\end{equation}
For large enough $n$, the maximum on the right-hand side will be determined only by the maximum between $D_{\max}(\rho \| \sigma)$ and $D_{\max}(\sigma \| \rho)$, which implies that
\begin{equation}\begin{aligned}
  \lim_{n\to\infty} \frac1n \DD_\Thomp(p \rho^{\otimes n} \| q \sigma^{\otimes n}) = \DD_\Thomp(\rho \| \sigma),
\end{aligned}\end{equation}
as long as the priors $p,q$ are non-zero. Theorem~\ref{thm:symmetric} then immediately gives the asymptotic regularisation of the symmetric distinguishability error.

\begin{boxed}{white}
\begin{corollary}[Asymptotic error exponent of symmetric postselected hypothesis testing]\label{cor:symmetric}
$\,$\\For all states $\rho, \sigma$ and for all priors $p,q =  1-p \in (0,1)$,
\begin{equation}\begin{aligned}
  \lim_{n\to\infty} - \frac1n \log \cperr(\rho^{\otimes n}, \sigma^{\otimes n} \pbar p,q) = \DD_{\Thomp} (\rho \| \sigma).
\end{aligned}\end{equation}
\end{corollary}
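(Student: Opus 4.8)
The plan is to extract the asymptotic exponent from the exact one-shot identity of Theorem~\ref{thm:symmetric} applied to the tensor powers, in complete analogy with how Corollary~\ref{cor:asymmetric_exp} was obtained from Theorem~\ref{thm:asymmetric} via the two-sided bound~\eqref{eq:two-sided}. First I would substitute $\rho \mapsto \rho^{\otimes n}$ and $\sigma \mapsto \sigma^{\otimes n}$ into Theorem~\ref{thm:symmetric}, which holds verbatim for every finite $n$, giving
\[
  -\frac1n \log \cperr(\rho^{\otimes n}, \sigma^{\otimes n} \pbar p, q) = \frac1n \log\!\big( \Thomp(p\rho^{\otimes n} \| q\sigma^{\otimes n}) + 1 \big).
\]
The only discrepancy between the right-hand side and the desired exponent is the additive constant $+1$ inside the logarithm.

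I would dispose of this constant with a uniform sandwich. The key observation is that $\DD_\Thomp(p\rho^{\otimes n}\|q\sigma^{\otimes n}) \geq 0$: it is the maximum of $D_{\max}(p\rho^{\otimes n}\|q\sigma^{\otimes n})$ and $D_{\max}(q\sigma^{\otimes n}\|p\rho^{\otimes n})$, and using the scaling property the $\log(p/q)$ offsets cancel in their sum, leaving $\DD_\Omega(\rho^{\otimes n}\|\sigma^{\otimes n}) \geq 0$ by Proposition~\ref{prop:asymmetric_prop}(i); the maximum of two reals with nonnegative sum is itself nonnegative. Hence $\Thomp(p\rho^{\otimes n}\|q\sigma^{\otimes n}) \geq 1$ for every $n$, so that
\[
  \Thomp(p\rho^{\otimes n}\|q\sigma^{\otimes n}) \leq \Thomp(p\rho^{\otimes n}\|q\sigma^{\otimes n}) + 1 \leq 2\,\Thomp(p\rho^{\otimes n}\|q\sigma^{\otimes n}).
\]
Taking $\tfrac1n \log$ of this chain and recalling $\Thomp = 2^{\DD_\Thomp}$ yields
\[
  \frac1n \DD_\Thomp(p\rho^{\otimes n} \| q\sigma^{\otimes n}) \leq -\frac1n \log \cperr(\rho^{\otimes n},\sigma^{\otimes n}\pbar p,q) \leq \frac1n \DD_\Thomp(p\rho^{\otimes n} \| q\sigma^{\otimes n}) + \frac{\log 2}{n}.
\]

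To conclude, both flanking quantities converge to $\DD_\Thomp(\rho\|\sigma)$ by the limit $\lim_{n\to\infty}\frac1n \DD_\Thomp(p\rho^{\otimes n}\|q\sigma^{\otimes n}) = \DD_\Thomp(\rho\|\sigma)$ established immediately above the corollary, while $\tfrac{\log 2}{n} \to 0$, so the squeeze theorem delivers the claim. I do not expect any genuine obstacle here: the argument is entirely parallel to the asymmetric case and, as in~\eqref{eq:stein_secondorder}, exhibits no sub-linear corrections beyond the harmless $\tfrac{\log 2}{n}$ term. The single point demanding care is that the displayed convergence of the rescaled Thompson metric, and hence the whole argument, genuinely requires the priors to be strictly positive: it is precisely the non-degeneracy of $p$ and $q$ that ensures the constant $\log(p/q)$ offsets sitting inside $\DD_\Thomp(p\rho^{\otimes n}\|q\sigma^{\otimes n})$ are dominated by the linear-in-$n$ terms for large $n$ and thus vanish after dividing by $n$.
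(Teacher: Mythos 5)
Your proposal is correct and takes essentially the same route as the paper: apply Theorem~\ref{thm:symmetric} to $\rho^{\otimes n},\sigma^{\otimes n}$, use the additivity of $D_{\max}$ together with the scaling $D_{\max}(p\rho\|q\sigma)=\log\frac{p}{q}+D_{\max}(\rho\|\sigma)$ to conclude $\lim_{n\to\infty}\frac1n \DD_\Thomp(p\rho^{\otimes n}\|q\sigma^{\otimes n})=\DD_\Thomp(\rho\|\sigma)$ for non-degenerate priors, and pass to the limit. Your explicit sandwich $\Thomp\leq\Thomp+1\leq 2\,\Thomp$, justified by $\Thomp(p\rho^{\otimes n}\|q\sigma^{\otimes n})\geq 1$, merely spells out the step the paper dismisses as ``immediate,'' so there is no substantive difference.
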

\end{boxed}

As in the asymmetric case, the expression for $- \log \cperr\left(\rho^{\otimes n},\sigma^{\otimes n} \pbar p,q\right)$ contains no $n$-dependent terms of order lower than 1. Furthermore, and again in analogy with what happens in the asymmetric case, there is a product strategy achieving the asymptotically optimal conditional error probability in symmetric discrimination. It consists in making the measurement $\{M_1,M_2,M_?\}$ detailed in~\eqref{eq:achieving_symmetric} on each copy of the state, guessing $\rho$ if the outcome `1' is obtained $n$ times, and declaring an inconclusive result in any other case.

Finally, we remark that the exact expression for many-copy discrimination error simplifies slightly for the case $p=q=\frac12$, where the fact that $\DD_\Thomp(\rho^{\otimes n}\|\sigma^{\otimes n}) = n\,\DD_\Thomp(\rho\|\sigma)$ yields
\begin{equation}\begin{aligned}
    -\log \cperr\left(\rho^{\otimes n},\sigma^{\otimes n} \,\left|\, \frac12, \frac12\right.\right) = \log \big( \Thomp (\rho \| \sigma )^n + 1 \big)
\end{aligned}\end{equation}
for every $n$.


\subsection{Properties of the symmetric error}\label{sec:symmetric_prop}

\begin{proposition}\label{prop:symmetric_prop}
For all states $\rho$ and $\sigma$, and all priors $p, q = 1-p \in (0,1)$:
\begin{enumerate}[(i)]
\item 
$\DD_\Thomp(p \rho \| q \sigma) \geq 0$, with equality if and only if $\rho = \sigma$ and $p = q$. Hence, $\cperr(\rho,\sigma \pbar p,q) \leq \frac12$, with equality if and only if $\rho = \sigma$ and $p = q$.

\item $\DD_\Thomp(p \rho \| q \sigma) = \infty$ if and only if $\supp \rho \neq \supp \sigma$. Hence, $\cperr(\rho,\sigma \pbar p,q) = 0$ if and only if $\rho$ and $\sigma$ have different supports.

\item For every positive linear map $\E$, the 
inequality $\DD_\Thomp(p \E(\rho) \| q\E(\sigma)) \leq \DD_{\Thomp} (p \rho \| q \sigma)$ holds true.

\end{enumerate}
\end{proposition}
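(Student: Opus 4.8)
The plan is to prove all three statements at the level of the Thompson metric $\DD_\Thomp(p\rho\|q\sigma)$ and then transfer each conclusion to $\cperr$ through Theorem~\ref{thm:symmetric}, which gives $\cperr(\rho,\sigma\pbar p,q) = \big(2^{\DD_\Thomp(p\rho\|q\sigma)}+1\big)^{-1}$; since this is a strictly decreasing function of $\DD_\Thomp$, every statement about the metric translates directly into a statement about the error. Throughout, I would expand the max-relative entropy of rescaled states using the identity $D_{\max}(p\rho\|q\sigma) = \log\frac{p}{q} + D_{\max}(\rho\|\sigma)$ recorded just above the theorem.

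For part~(i) the key observation is that, although the two terms $D_{\max}(p\rho\|q\sigma)$ and $D_{\max}(q\sigma\|p\rho)$ may each be negative (the shifts $\log\frac{p}{q}$ and $\log\frac{q}{p}$ have opposite signs), their \emph{sum} is exactly the Hilbert projective metric $\DD_\Omega(\rho\|\sigma) = D_{\max}(\rho\|\sigma)+D_{\max}(\sigma\|\rho)$, since the two shifts cancel. By Proposition~\ref{prop:asymmetric_prop}(i) this sum is non-negative, and the maximum of two numbers whose sum is non-negative is itself non-negative, giving $\DD_\Thomp(p\rho\|q\sigma)\geq 0$. For the equality case I would note that $\DD_\Thomp=0$ forces both terms to be $\leq 0$, which combined with the non-negative sum forces both to vanish; then $D_{\max}(p\rho\|q\sigma)=0=D_{\max}(q\sigma\|p\rho)$ means $p\rho\leq q\sigma$ and $q\sigma\leq p\rho$, i.e.\ $p\rho=q\sigma$, whence taking the trace yields $p=q$ and therefore $\rho=\sigma$. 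The bound $\cperr\leq\frac12$ together with its equality condition then follows immediately from Theorem~\ref{thm:symmetric}.

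Part~(ii) is a direct support analysis: $\DD_\Thomp(p\rho\|q\sigma)=\infty$ exactly when at least one of the two terms diverges, and $D_{\max}(p\rho\|q\sigma)<\infty$ if and only if $\supp\rho\subseteq\supp\sigma$ (rescaling by the positive constants $p,q$ does not affect supports), with the symmetric statement for the other term. Hence the metric is infinite precisely when $\supp\rho\not\subseteq\supp\sigma$ or $\supp\sigma\not\subseteq\supp\rho$, that is, when $\supp\rho\neq\supp\sigma$; the corresponding statement $\cperr=0$ again follows from Theorem~\ref{thm:symmetric}. For part~(iii) I would invoke the data-processing inequality for $D_{\max}$ under positive maps already established in Proposition~\ref{prop:asymmetric_prop}(v): using linearity to write $p\,\E(\rho)=\E(p\rho)$ and $q\,\E(\sigma)=\E(q\sigma)$, each term obeys $D_{\max}(\E(p\rho)\|\E(q\sigma))\leq D_{\max}(p\rho\|q\sigma)$ and likewise with the two arguments swapped, so taking the maximum of these two inequalities yields the claimed monotonicity $\DD_\Thomp(p\E(\rho)\|q\E(\sigma))\leq\DD_\Thomp(p\rho\|q\sigma)$.

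None of the three parts presents a serious obstacle given the tools assembled earlier; the only point requiring genuine care is the equality characterisation in~(i), where the sign cancellation that produces $\DD_\Omega$ from the two shifted terms, and the passage from the pair of operator inequalities to $p\rho=q\sigma$ and thence to $p=q$ and $\rho=\sigma$, must be handled precisely.
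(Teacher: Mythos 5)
Your proof is correct and takes essentially the same route as the paper's: parts (ii) and (iii) coincide in substance with the paper's argument (support analysis of $D_{\max}$ for (ii), and positivity of $\E$ applied to $\lambda q\sigma - p\rho$, i.e.\ data processing for $D_{\max}$, for (iii)), and the equality case of (i) via $p\rho \leq q\sigma$, $q\sigma \leq p\rho$, hence $p\rho = q\sigma$ and then $p=q$, $\rho=\sigma$ by taking traces, is exactly the paper's reasoning. The only (minor, equally valid) variation is your derivation of non-negativity in (i) from the cancellation identity $D_{\max}(p\rho\|q\sigma)+D_{\max}(q\sigma\|p\rho)=\DD_\Omega(\rho\|\sigma)\geq 0$ together with Proposition~\ref{prop:asymmetric_prop}(i), whereas the paper obtains non-negativity and the equality case in one stroke from the antisymmetry of the operator order.
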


We note that our conclusion in point~(ii) was previously shown in~\cite{rudolph_2003}.

\begin{proof}
(i)~follows from the fact that $p \rho \leq q \sigma$ and $p \rho \geq q \sigma$ can both be true iff $p \rho = q \sigma$, which is possible if and only if $p = q$ and $\rho = \sigma$ due to $\rho$ and $\sigma$ being normalised states. (ii)~is a consequence of the fact that $D_{\max}(\rho \| \sigma) < \infty \iff \supp \rho \subseteq \supp \sigma$. (iii)~is immediate because $\E(\lambda q \sigma - p \rho) = \lambda q \E(\sigma) - p \E(\rho) \geq 0$ for every $\lambda$ such that $p \rho \leq \lambda q \sigma$ due to the positivity of $\E$. 
\end{proof}

Note that the monotonicity $\DD_{\Thomp} (p \rho \| q \sigma) \geq \DD_\Thomp(p \E(\rho) \| q \E(\sigma))$ does not mean that we can renormalise the state at the output, as we did in Proposition~\ref{prop:asymmetric_prop} for the asymmetric error. Indeed, in such a case we only get the inequality
\begin{equation}\begin{aligned}
  \Thomp\left(\left. p \frac{\E(\rho)}{\Tr \E(\rho)} \right\| q \frac{\E(\sigma)}{\Tr \E(\sigma)} \right) &\leq \max \left\{ \frac{\Tr \E(\sigma)}{\Tr \E(\rho)} 2^{D_{\max}(p \rho\|q \sigma)},\, \frac{\Tr \E( \rho)}{\Tr \E( \sigma)} 2^{D_{\max}(q \sigma\|p \rho)} \right\}\\
  &\leq \max\left\{  \frac{\Tr \E(\sigma)}{\Tr \E(\rho)},\,  \frac{\Tr \E(\rho)}{\Tr \E(\sigma)} \right\} \Thomp(p \rho \| q \sigma).
\end{aligned}\end{equation}

To see that $\Thomp\left(\left.\frac{\E(\rho)}{\Tr \E(\rho)} \right\| \frac{\E(\sigma)}{\Tr \E(\sigma)} \right) \not\leq \Thomp(\rho \| \sigma)$ in general, consider e.g.\ $\rho = \left(\begin{smallmatrix}2/3 & 0\\0 & 1/3\end{smallmatrix}\right)$ and $\sigma=\id/2$, in which case it is easy to verify that $\Thomp(\rho \|\sigma)=3/2$. Setting $\E(\cdot)\coloneqq D(\cdot)D$ with $D \coloneqq \left(\begin{smallmatrix}\sqrt{\gamma} &0\\0& 1/\sqrt{\gamma}\end{smallmatrix}\right)$ and $\gamma > 1$, we find $\Thomp\left(\left.\frac{\E(\rho)}{\Tr \E(\rho)} \right\| \frac{\E(\sigma)}{\Tr \E(\sigma)} \right) = \frac{2\gamma^2+1}{\gamma^2+1} > 3/2 = \Thomp(\rho \|\sigma)$.


\section{Quantum channel discrimination}\label{sec:channels}

\subsection{One-shot channel hypothesis testing}

We will now consider quantum channels, that is, completely positive and trace-preserving (CPTP) linear maps acting between the spaces of operators on the Hilbert spaces of two quantum systems, $A$ and $B$. We denote such channels as $\M : A \to B$ for simplicity. Complete positivity refers here to the property that not only is the map $\mathcal{M}$ positive, i.e. it maps positive operators to positive operators, but so is $\mathrm{id} \otimes \mathcal{M}$, where the ancillary space $R$ is arbitrary. The set of quantum states of systems $A$ and $B$ will be denoted $\D_A$ and $\D_B$, respectively.

If we only have access to a single copy of two channels $\M, \N : A \to B$, there is only one thing we can do to discriminate them: feed in some state $\rho$ into the channel, and then perform a distinguishing measurement at the output. However, a crucial insight here is that using a larger, entangled state and measurement can help discriminate the channels better than simply measuring $\N(\rho)$ for some $\rho \in \D_A$~\cite{kitaev_1997}. Because of this, one performs channel hypothesis testing by measuring $\idc \otimes \M(\rho)$ for some state $\rho \in \D_{RA}$, where $\idc$ denotes the identity channel on the ancillary space $R$, with the latter space arbitrary. We therefore define the postselected asymmetric and symmetric hypothesis testing errors as
\begin{equation}\begin{aligned}
  \cbeta_\ve (\M, \N) &\coloneqq \inf_{\rho \in \D_{RA}} \cbeta_\ve ( \idc \otimes \M (\rho),\, \idc\otimes \N(\rho)),\\
  \cperr(\M, \N \pbar p, q) &\coloneqq \inf_{\rho \in \D_{RA}} \cperr ( \idc \otimes \M (\rho),\, \idc\otimes \N(\rho) \pbar p ,q ).
\end{aligned}\end{equation}

From Theorems~\ref{thm:asymmetric} and~\ref{thm:symmetric}, we immediately have that
\begin{equation}\begin{aligned}
  \cbeta_\ve (\M, \N) &= \inf_{\rho \in \D_{RA}} \left[ \frac{\ve}{1-\ve}\, \Omega( \idc \otimes \M (\rho) \| \idc\otimes \N(\rho)) + 1\right]^{-1},\\
  \cperr(\M, \N \pbar p, q) &= \inf_{\rho \in \D_{RA}} \left[ \Thomp\big( p \,\idc \otimes \M (\rho) \| q \,\idc\otimes \N(\rho)\big) + 1\right]^{-1}.
\end{aligned}\end{equation}
It may not be immediately obvious if these optimisation problems can be evaluated. However, by exploiting a property of the max-relative entropy  noticed in~\cite{wilde_2020}, we can show that it suffices to take as input the maximally entangled state $\Phi_+ = \proj{\Phi_+}$ with $\ket{\Phi_+} = \frac{1}{\sqrt{d_A}} \sum_{i=1}^{d_A} \ket{ii}$, which reduces the problem to evaluating $D_{\max}$ between two fixed states. We state this in the form of the following lemma.

\begin{boxed}{white}
\begin{lemma}\label{lem:dmax_channels}
For all channels $\M, \N : A \to B$, define
\begin{equation}\begin{aligned}
  \Omega ( \M \| \N ) &\coloneqq \sup_{\rho \in \D_{RA}} \Omega( \idc \otimes \M (\rho) \| \idc\otimes \N(\rho))\\
  \Thomp ( \M \| \N ) &\coloneqq \sup_{\rho \in \D_{RA}} \Thomp\big( \idc \otimes \M (\rho) \| \idc\otimes \N(\rho)\big).
\end{aligned}\end{equation}
Then,
\begin{equation}\begin{aligned}
  \Omega ( \M \| \N ) &= \Omega ( J_\M \| J_\N )\\
  \Thomp ( \M \| \N )  &= \Thomp (J_\M \| J_\N),
\end{aligned}\end{equation}
where $J_\M$ and $J_\N$ denote the Choi states of the channels, defined as $J_\M \coloneqq \idc \otimes \M (\Phi_+)$.
\end{lemma}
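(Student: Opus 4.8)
The plan is to reduce both identities to a single pointwise bound on the max-relative entropy and then let the structure of $\Omega$ and $\Thomp$ do the rest. The key observation is that $\DD_\Omega$ and $\DD_\Thomp$ --- and hence their non-logarithmic versions $\Omega$ and $\Thomp$ --- are assembled from the two quantities $D_{\max}(\idc\otimes\M(\rho)\|\idc\otimes\N(\rho))$ and $D_{\max}(\idc\otimes\N(\rho)\|\idc\otimes\M(\rho))$ by a sum and a maximum, both of which are non-decreasing in each argument. It therefore suffices to show, for every input $\rho\in\D_{RA}$ with arbitrary reference $R$, the two bounds
\begin{equation}\begin{aligned}
  D_{\max}(\idc\otimes\M(\rho)\,\|\,\idc\otimes\N(\rho)) &\leq D_{\max}(J_\M\|J_\N),\\
  D_{\max}(\idc\otimes\N(\rho)\,\|\,\idc\otimes\M(\rho)) &\leq D_{\max}(J_\N\|J_\M),
\end{aligned}\end{equation}
which is the channel single-letterisation of $D_{\max}$ exploited in~\cite{wilde_2020}. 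The decisive feature is that both bounds are saturated at once by the single input $\rho=\Phi_+$, whose output is exactly the Choi pair $(J_\M,J_\N)$.

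To establish the pointwise bounds I would argue in two steps. First, since $D_{\max}$ satisfies data processing and the partial trace over a purifying system commutes with the channel acting on $A$, purifying any mixed $\rho_{RA}$ can only increase both max-relative entropies, so it is enough to treat pure inputs $\ket{\psi}_{RA}$. Second, I would use the transpose trick to write $\ket{\psi}_{RA}=\sqrt{d_A}\,(M_{A'\to R}\otimes\id_A)\ket{\Phi_+}_{A'A}$ for a suitable linear operator $M$, with $A'$ a copy of $A$; under the paper's normalisation $\ket{\Phi_+}=\tfrac{1}{\sqrt{d_A}}\sum_i\ket{ii}$ the operator $M$ is just the matricisation of the amplitudes of $\ket{\psi}$. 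Since $M$ acts only on the reference factor, it commutes with both channels, giving
\begin{equation}\begin{aligned}
  \idc_R\otimes\M(\psi) &= d_A\,(M\otimes\id_B)\,J_\M\,(M\otimes\id_B)^\dagger,\\
  \idc_R\otimes\N(\psi) &= d_A\,(M\otimes\id_B)\,J_\N\,(M\otimes\id_B)^\dagger.
\end{aligned}\end{equation}

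Because conjugation $X\mapsto KXK^\dagger$ by the fixed operator $K=M\otimes\id_B$ is a positive map, any operator inequality $J_\M\leq\lambda J_\N$ is transported to $\idc_R\otimes\M(\psi)\leq\lambda\,\idc_R\otimes\N(\psi)$; optimising over $\lambda$ yields the first pointwise bound, and the second follows by the symmetric argument with $\M$ and $\N$ exchanged. Feeding these into the sum and maximum then gives $\Omega(\idc\otimes\M(\rho)\|\idc\otimes\N(\rho))\leq\Omega(J_\M\|J_\N)$ and $\Thomp(\idc\otimes\M(\rho)\|\idc\otimes\N(\rho))\leq\Thomp(J_\M\|J_\N)$ for every $\rho$, so the suprema defining $\Omega(\M\|\N)$ and $\Thomp(\M\|\N)$ are at most the Choi-state values. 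The matching lower bounds are immediate, since $\rho=\Phi_+$ is a feasible input whose channel outputs are precisely $J_\M$ and $J_\N$, closing both equalities.

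The main obstacle is conceptual rather than computational. Because $\Omega$ and $\Thomp$ depend on \emph{both} orderings of $D_{\max}$, it is not obvious a priori that a single input should optimise them: the state that best discriminates $\M$ from $\N$ in one error direction need not be best in the other. What makes the argument work is that the transpose-trick representation produces a bound that holds for every input and every admissible $\lambda$ simultaneously, so the maximally entangled input $\Phi_+$ saturates both directions at once. The remaining care is routine bookkeeping: checking that the representation $\ket{\psi}_{RA}=\sqrt{d_A}(M\otimes\id_A)\ket{\Phi_+}$ and the purification step are well defined for an arbitrary reference dimension, and that conjugation by $M\otimes\id_B$ indeed preserves the relevant semidefinite inequalities.
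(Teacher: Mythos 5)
Your proof is correct, and its engine is the same as the paper's (which follows~\cite[Lemma~12]{wilde_2020}): write a pure input via the transpose trick as $\ket{\psi}=\sqrt{d_A}\,(M\otimes\id_A)\ket{\Phi_+}$, observe that conjugation by $M\otimes\id_B$ commutes with the channels and preserves positive semidefiniteness, so every operator inequality defining the Choi-state quantities is transported to the outputs on $\psi$, and note that $\rho=\Phi_+$ gives the matching lower bound. Where you genuinely diverge is in the two supporting steps. First, the reduction from mixed to pure inputs: the paper invokes joint quasiconvexity of $\Omega(\cdot\|\cdot)$~\cite[Theorem~1]{regula_2021-4}, whereas you purify and use the data-processing inequality of $D_{\max}$ under the partial trace over the purifying system; both are valid, and yours is more self-contained, since data processing of $D_{\max}$ under positive (in particular CPTP) maps is already established in the paper's Proposition~\ref{prop:asymmetric_prop}(v), while the quasiconvexity step rests on an external result. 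Second, the organisation: the paper bounds $\Omega$ directly by propagating the joint chain $J_\M\leq\lambda J_\N\leq\lambda\mu J_\M$ and declares the Thompson case analogous, whereas you factor everything through the two pointwise bounds $D_{\max}(\idc\otimes\M(\rho)\|\idc\otimes\N(\rho))\leq D_{\max}(J_\M\|J_\N)$ and its reverse, then assemble $\Omega$ and $\Thomp$ by monotonicity of sum and maximum. This treats both identities in one stroke and makes explicit the one conceptually delicate point --- that a single input, $\Phi_+$, simultaneously saturates both orderings of $D_{\max}$ --- which the paper leaves implicit.
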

\end{boxed}
\begin{proof}
We will consider the Hilbert projective metric~$\Omega$, with the case of the Thompson metric $\Thomp$ being analogous.

First, the bound $\Omega (\M \| \N ) \geq \Omega ( J_\M \| J_\N )$ follows from the definition. For the other direction, observe that for any state $\rho$ it holds that
\begin{equation}\begin{aligned}
  D_\Omega \left(\left.  \idc \otimes \M (\rho) \right\| \idc \otimes \N (\rho) \right) &= D_{\max}\! \left(\left.  \idc \otimes \M (\rho) \right\| \idc \otimes \N (\rho) \right) \\&\quad + D_{\max} \! \left(\left.  \idc \otimes \N (\rho) \right\| \idc \otimes \M (\rho) \right)\\
  &\leq \sup_{\rho} D_{\max}\! \left(\left.  \idc \otimes \M (\rho) \right\| \idc \otimes \N (\rho) \right) \\&\quad + \sup_{\rho'} D_{\max} \! \left(\left.  \idc \otimes \N (\rho') \right\| \idc \otimes \M (\rho') \right)\\
  &= D_{\max} ( J_\M \| J_\N ) + D_{\max} ( J_\N \| J_\M )\\
  &= D_\Omega ( J_\M \| J_\N ),
\end{aligned}\end{equation}
where the second-to-last line follows because
\begin{equation}\begin{aligned}
  \sup_{\rho} D_{\max} \left(\left. \idc \otimes \M (\rho) \right\| \idc \otimes \N (\rho) \right) = D_{\max}  ( J_\M \| J_\N )
\end{aligned}\end{equation}
holds for all channels $\M, \N$~\cite[Lemma~12]{wilde_2020}. Since this upper bound applies to any $\rho$, it holds also for the supremum, thus matching the lower bound.
\end{proof}

Lemma~\ref{lem:dmax_channels} then immediately gives the following result, which establishes closed-form expressions for the asymmetric and symmetric error in postselected channel hypothesis testing.

\begin{boxed}{white}
\begin{corollary}\mbox{}\label{cor:channel_oneshot}
For all $\ve \in (0,1)$ and  $p, q = 1-p \in (0,1)$, we have that
\begin{equation}\begin{aligned}
  \cbeta_\ve (\M, \N) &= \left( \frac{\ve}{1-\ve}\, \Omega(J_\M \| J_\N) + 1 \right)^{-1} , \\
  \cperr(\M, \N \pbar p, q) &= \Big( \Thomp(p J_\M \| q J_\N) + 1\Big)^{-1}.
\end{aligned}\end{equation}
\end{corollary}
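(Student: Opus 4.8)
The plan is to combine the two single-shot formulas from Theorems~\ref{thm:asymmetric} and~\ref{thm:symmetric} with the channel reductions just established in Lemma~\ref{lem:dmax_channels}. Recall that by definition $\cbeta_\ve(\M,\N) = \inf_{\rho \in \D_{RA}} \cbeta_\ve(\idc\otimes\M(\rho) \,\|\, \idc\otimes\N(\rho))$, and similarly for $\cperr$. The strategy is simply to substitute the closed-form state expressions into these channel definitions and then push the infimum over input states $\rho$ through the monotone functions of $\Omega$ and $\Thomp$.

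First, for the asymmetric error, I would apply Theorem~\ref{thm:asymmetric} pointwise for each fixed input state $\rho \in \D_{RA}$, writing
\begin{equation}\begin{aligned}
  \cbeta_\ve(\M,\N) = \inf_{\rho \in \D_{RA}} \left( \frac{\ve}{1-\ve}\, \Omega(\idc\otimes\M(\rho) \| \idc\otimes\N(\rho)) + 1 \right)^{-1}.
\end{aligned}\end{equation}
The key observation is that the map $x \mapsto \left( \frac{\ve}{1-\ve} x + 1\right)^{-1}$ is monotonically decreasing in $x$ (since $\ve \in (0,1)$ makes the prefactor positive), so minimising the left-hand side over $\rho$ is equivalent to maximising $\Omega(\idc\otimes\M(\rho) \| \idc\otimes\N(\rho))$ over $\rho$. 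But that supremum is exactly the quantity $\Omega(\M\|\N)$ defined in Lemma~\ref{lem:dmax_channels}, which the lemma evaluates to $\Omega(J_\M \| J_\N)$. Substituting this back gives the claimed formula for $\cbeta_\ve(\M,\N)$.

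The symmetric case proceeds identically: applying Theorem~\ref{thm:symmetric} pointwise yields $\cperr(\M,\N \pbar p,q) = \inf_{\rho} \big(\Thomp(p\,\idc\otimes\M(\rho) \| q\,\idc\otimes\N(\rho)) + 1\big)^{-1}$, and since $x \mapsto (x+1)^{-1}$ is decreasing, the infimum over $\rho$ corresponds to the supremum of the Thompson metric over inputs. The only subtlety is that the scalars $p,q$ appear inside the Thompson metric, whereas Lemma~\ref{lem:dmax_channels} is stated for $\Thomp(\idc\otimes\M(\rho)\|\idc\otimes\N(\rho))$ without prefactors; I would handle this by noting that the scaling relation $D_{\max}(p\rho\|q\sigma) = \log\frac{p}{q} + D_{\max}(\rho\|\sigma)$ means the supremum-achieving input is unaffected by the constant rescaling, so the lemma's conclusion transfers directly and gives $\sup_\rho \Thomp(p\,\idc\otimes\M(\rho)\|q\,\idc\otimes\N(\rho)) = \Thomp(p J_\M \| q J_\N)$.

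I do not expect any serious obstacle here, since Lemma~\ref{lem:dmax_channels} has already done the genuine work of reducing the optimisation over all ancilla-assisted inputs to the single maximally entangled state. The entire argument is a routine matter of exchanging an infimum with a monotone scalar function and invoking the lemma; the mildest point of care is verifying that the decreasing monotonicity of the outer function correctly converts the $\inf$ over states into the $\sup$ defining $\Omega(\M\|\N)$ and $\Thomp(\M\|\N)$, and confirming that the prefactor rescaling in the symmetric case commutes with taking the supremum over inputs.
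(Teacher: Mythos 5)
Your proof is correct and takes essentially the same route as the paper, which likewise obtains the corollary by substituting Theorems~\ref{thm:asymmetric} and~\ref{thm:symmetric} into the channel definitions pointwise and then invoking Lemma~\ref{lem:dmax_channels} through the decreasing outer functions. Your treatment of the $p,q$ prefactors is actually more explicit than the paper's (which applies the lemma to $\Thomp(p J_\M \| q J_\N)$ without comment), though the cleanest justification is not that the optimising input is ``unaffected by rescaling'' but rather that $\sup_\rho$ commutes with the maximum defining $\Thomp$, and the lemma's proof shows both $D_{\max}$ branches are simultaneously maximised at the maximally entangled input, so $\sup_\rho \Thomp\big(p\,\idc\otimes\M(\rho)\,\big\|\,q\,\idc\otimes\N(\rho)\big) = \Thomp(p J_\M \| q J_\N)$ as you claim.
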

\end{boxed}


\subsection{Asymptotic channel hypothesis testing}

When given $n$ uses of a channel, the basic strategy is the so-called \emph{parallel channel discrimination}, which simply puts the channel to be discriminated in a tensor product $\M^{\otimes n}$ and optimises over all input states and measurements. Let us then define the $n$-copy parallel distinguishability errors as
\begin{equation}\begin{aligned}
  \cbetaPar (\M, \N \pbar n) &\coloneqq \cbeta_\ve (\M^{\otimes n}, \N^{\otimes n})\\
  &\phantom{:}= \inf_{\rho \in \D_{RA^{\otimes n}}} \cbeta_\ve ( \idc \otimes \M^{\otimes n} (\rho),\, \idc\otimes \N^{\otimes n}(\rho)),\\
  \cperrPar(\M, \N \pbar p, q \pbar n) &\coloneqq \cperr(\M^{\otimes n}, \N^{\otimes n} \pbar p, q)\\
 &\phantom{:}= \inf_{\rho \in \D_{RA^{\otimes n}}} \cperr ( \idc \otimes \M^{\otimes n} (\rho),\, \idc\otimes \N^{\otimes n}(\rho) \pbar p ,q ).
\end{aligned}\end{equation}
where the ancillary space $R$ is \textit{a priori} arbitrary. We stress that, although the channels are in a tensor product as $\M^{\otimes n}$ and $\N^{\otimes n}$, the input states $\rho$ may be entangled,  and the measurements at the output may be joint over all systems. Exact expressions for these quantities follow immediately from Corollary~\ref{cor:channel_oneshot}, showing also that it suffices to take the $n$-copy maximally entangled input state $\Phi_+^{\otimes n}$ in the above. In particular, no entanglement between different copies is required in the preparation of the optimal input states. Furthermore, due to the discussion in Sections~\ref{sec:asymmetric_oneshot} and~\ref{sec:symmetric_exp}, we know that also joint measurements are not necessary, and that an asymptotically optimal performance can be achieved by product strategies alone. These two facts, together, would considerably simplify any practical realisation of the above channel discrimination scheme.

To consider more general channel discrimination schemes, most works employ the formalism of \emph{adaptive} discrimination schemes~\cite{chiribella_2008-1,hayashi_2009,duan_2009,harrow_2010-1,cooney_2016,wilde_2020,pirandola_2019-2}. However, as we mentioned previously, it is known that broader types of channel manipulation protocols --- in particular, ones that use more exotic transformation structures such as superposition of causal orders --- can provide advantages over adaptive protocols~\cite{chiribella_2013,ebler_2018,quintino_2019,bavaresco_2021}. We will therefore aim to characterise the most general protocols allowed by the laws of quantum physics, without assuming \emph{anything whatsoever} about their structure.

To discuss a general $n$-channel manipulation protocol, let us consider an $n$-linear map $\Upsilon$ that takes $n$ channels as input and outputs a single quantum state. Specifically, using $(A \to B)$ to denote the space of all linear maps from $A$ to $B$, the map $\Upsilon$ can be considered as a map from the space $(A \to B)^{\times n}$ to $(\mathbb{C} \to R)$, where the latter is just the space of linear operators on some quantum system $R$ of arbitrary dimension. In order for $\Upsilon$ to be a valid channel transformation, we  assume that it satisfies
\begin{equation}\begin{aligned}\label{eq:upsilon}
  \M_1, \ldots, \M_n \in \rm{CP}(A \to B) \;\Rightarrow\; \mbox{$\Upsilon$}(\M_1, \ldots, \M_n) \geq 0,
\end{aligned}\end{equation}
with $\rm{CP}(A \to B)$ denoting the set of all completely positive maps between the two spaces. That is, the sole assumption we make is that the map $\Upsilon$ is positive: it maps any $n$-tuple of completely positive maps into a positive operator. We do not enforce any form of \emph{complete} positivity of this transformation, nor do we assume normalisation (trace preservation). By construction, any adaptive protocol is included under such a definition. We will use $\mathbb{P}_n$ to denote the set of all $n$-linear maps $\Upsilon : (A \to B)^{\times n} \to (\mathbb{C} \to R)$ satisfying~\eqref{eq:upsilon}.

 We then define
\begin{equation}\begin{aligned}
  \cbetaAny (\M, \N \pbar n) &\coloneqq \inf_{\Upsilon \in \mathbb{P}_n} \cbeta_\ve \!\left( \Upsilon(\M^{\times n}), \Upsilon(\N^{\times n}) \right)\\
  \cperrAny(\M, \N \pbar p, q \pbar n) &\coloneqq \inf_{\Upsilon \in \mathbb{P}_n} \cperr\!\left(\Upsilon(\M^{\times n}), \Upsilon(\N^{\times n}) \pbar p, q\right),
\end{aligned}\end{equation}
where $\M^{\times n}$ denotes the $n$-tuple $(\M, \ldots, \M)$. With these definitions in place, we are ready to state our main result on channel discrimination.

\begin{boxed}{white}
\begin{theorem}\label{thm:channel_asymptotic}
For all $n\in\NN$, we have that
\begin{equation}\begin{aligned}
  \cbetaAny (\M, \N \pbar n) &= \cbetaPar (\M, \N \pbar n),\\
  \cperrAny(\M, \N \pbar p, q \pbar n) &= \cperrPar(\M, \N \pbar p, q \pbar n).
\end{aligned}\end{equation}

In particular, the asymptotic exponent of asymmetric postselected channel discrimination is given by the Hilbert projective metric between the Choi operators of the channels, regardless of whether only parallel or any general discrimination schemes are employed: for all $\ve \in (0,1)$,
\begin{equation}\begin{aligned}
  \lim_{n\to\infty} - \frac1n \log \cbetaAny (\M, \N \pbar n) &= \lim_{n\to\infty} - \frac1n \log \cbetaPar (\M, \N \pbar n)\\
  &= \DD_\Omega(\M \| \N)\\
  &= \DD_\Omega(J_\M \| J_\N).
\end{aligned}\end{equation}

Similarly, for all types of discrimination protocols, the asymptotic exponent of symmetric postselected channel discrimination is given by the Thompson metric between the Choi operators of the channels. 
That is, for all priors $p,q=1-p\in(0,1)$,
\begin{equation}\begin{aligned}
   \lim_{n\to\infty} - \frac1n \log \cperrAny(\M, \N \pbar p, q \pbar n) &=  \lim_{n\to\infty} - \frac1n \log \cperrPar(\M, \N \pbar p, q \pbar n) \\
   &= \DD_\Thomp(\M \| \N)\\
   &= \DD_\Thomp(J_\M \| J_\N).
\end{aligned}\end{equation}
\end{theorem}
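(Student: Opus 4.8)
The plan is to reduce both equalities to a single operator inequality controlling how an arbitrary protocol propagates $D_{\max}$ comparability. Since Theorems~\ref{thm:asymmetric} and~\ref{thm:symmetric} express $\cbeta_\ve$ and $\cperr$ as strictly decreasing functions of $\Omega$ and $\Thomp$ respectively, and since the parallel scheme $(\M_1,\dots,\M_n)\mapsto \idc\otimes(\M_1\otimes\cdots\otimes\M_n)(\rho)$ is manifestly $n$-linear and maps tuples of completely positive (CP) maps to positive operators — hence lies in $\mathbb{P}_n$ for every choice of input $\rho$ — the inequalities $\cbetaAny\leq\cbetaPar$ and $\cperrAny\leq\cperrPar$ are immediate. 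All the content is in the reverse direction: I must show that \emph{every} $\Upsilon\in\mathbb{P}_n$ satisfies $\Omega(\Upsilon(\M^{\times n})\|\Upsilon(\N^{\times n}))\leq\Omega(J_\M\|J_\N)^n$ and the analogous Thompson bound, i.e.\ that no general protocol beats the parallel Choi-tensor-power value.

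The key step is to characterise $D_{\max}$ comparability through complete positivity. Writing $\lambda\coloneqq 2^{D_{\max}(J_\M\|J_\N)}$, the relation $J_\M\leq\lambda J_\N$ is, by Choi's theorem, equivalent to complete positivity of the difference map $\lambda\N-\M$; thus $\lambda\N=\M+\mathcal{P}$ for some CP map $\mathcal{P}$. I then expand, using $n$-linearity of $\Upsilon$,
\[
  \lambda^n\,\Upsilon(\N^{\times n}) = \Upsilon\big((\M+\mathcal{P})^{\times n}\big) = \sum_{S\subseteq\{1,\dots,n\}} \Upsilon_S,
\]
where $\Upsilon_S$ denotes $\Upsilon$ evaluated with $\M$ in the slots indexed by $S$ and $\mathcal{P}$ in the remaining slots. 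Each $\Upsilon_S$ is an evaluation of $\Upsilon$ on an $n$-tuple of CP maps, hence a positive operator by~\eqref{eq:upsilon}; isolating the single term $\Upsilon_{\{1,\dots,n\}}=\Upsilon(\M^{\times n})$ yields $\Upsilon(\M^{\times n})\leq\lambda^n\Upsilon(\N^{\times n})$, that is $D_{\max}(\Upsilon(\M^{\times n})\|\Upsilon(\N^{\times n}))\leq n\,D_{\max}(J_\M\|J_\N)$. Running the same argument with $\M$ and $\N$ exchanged gives $D_{\max}(\Upsilon(\N^{\times n})\|\Upsilon(\M^{\times n}))\leq n\,D_{\max}(J_\N\|J_\M)$.

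Adding the two bounds gives $\DD_\Omega(\Upsilon(\M^{\times n})\|\Upsilon(\N^{\times n}))\leq n\,\DD_\Omega(J_\M\|J_\N)$; by additivity of $D_{\max}$ this equals $\DD_\Omega(J_\M^{\otimes n}\|J_\N^{\otimes n})$, which by Lemma~\ref{lem:dmax_channels} applied to $\M^{\otimes n},\N^{\otimes n}$ (whose Choi operators are $J_\M^{\otimes n},J_\N^{\otimes n}$ up to a fixed system permutation that leaves $D_{\max}$ invariant) is exactly the value realised by the optimal parallel input. Monotonicity of $\cbeta_\ve$ then gives $\cbeta_\ve(\Upsilon(\M^{\times n}),\Upsilon(\N^{\times n}))\geq\cbetaPar$, and taking the infimum over $\Upsilon$ closes the asymmetric equality. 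For the symmetric case I feed the two one-sided $D_{\max}$ bounds into the maximum defining $\Thomp$, using $D_{\max}(p\,\omega\|q\,\tau)=\log(p/q)+D_{\max}(\omega\|\tau)$, to obtain $\DD_\Thomp(p\,\Upsilon(\M^{\times n})\|q\,\Upsilon(\N^{\times n}))\leq\DD_\Thomp(pJ_\M^{\otimes n}\|qJ_\N^{\otimes n})$, giving $\cperrAny=\cperrPar$ by Corollary~\ref{cor:channel_oneshot}.

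The asymptotic exponents then follow verbatim from Corollaries~\ref{cor:asymmetric_exp} and~\ref{cor:symmetric}: additivity gives $\Omega(J_\M^{\otimes n}\|J_\N^{\otimes n})=\Omega(J_\M\|J_\N)^n$, the two-sided sandwich of~\eqref{eq:two-sided} removes all sub-linear terms in the asymmetric case, and in the symmetric case the prior-dependent offsets $\log(p/q)$ become negligible inside the maximum as $n\to\infty$; the identifications $\DD_\Omega(\M\|\N)=\DD_\Omega(J_\M\|J_\N)$ and $\DD_\Thomp(\M\|\N)=\DD_\Thomp(J_\M\|J_\N)$ are the logarithmic form of Lemma~\ref{lem:dmax_channels}. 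The main obstacle, and the conceptual heart of the proof, is the very first move of the second paragraph: one cannot tame a general, possibly causally indefinite $\Upsilon$ by pushing a data-processing inequality through a single positive map, because the linear map that $\Upsilon$ induces on $J_\M^{\otimes n}$ is only guaranteed to be positive on \emph{separable} inputs, not on the generally entangled operator $\lambda^n J_\N^{\otimes n}-J_\M^{\otimes n}$. Recognising that $D_{\max}$ comparability is equivalent to complete positivity of a channel difference is precisely what forces the multilinear expansion to evaluate $\Upsilon$ only on CP tuples, where its positivity is guaranteed by assumption.
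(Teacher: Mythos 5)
Your proof is correct, and its overall route is the same as the paper's: parallel protocols belong to $\mathbb{P}_n$, which gives $\cbetaAny (\M,\N\pbar n) \leq \cbetaPar(\M,\N\pbar n)$ and $\cperrAny(\M,\N\pbar p,q\pbar n) \leq \cperrPar(\M,\N\pbar p,q\pbar n)$; the reverse direction rests on the single bound $D_{\max}(\Upsilon(\M^{\times n}) \| \Upsilon(\N^{\times n})) \leq n\, D_{\max}(J_\M \| J_\N)$ for all $\Upsilon \in \mathbb{P}_n$; and the one-shot equalities and asymptotic exponents then follow from additivity of $D_{\max}$, Corollary~\ref{cor:channel_oneshot}, and the limiting arguments already used for states. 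The one substantive difference is the treatment of that central bound: the paper imports it wholesale as a citation (Theorem~13 of Ref.~\cite{regula_2021-1}), whereas you prove it directly --- identifying $J_\M \leq \lambda J_\N$ (with $\lambda = 2^{D_{\max}(J_\M\|J_\N)}$) with complete positivity of $\mathcal{P} \coloneqq \lambda\N - \M$ via Choi's theorem, expanding $\lambda^n \Upsilon(\N^{\times n}) = \Upsilon((\M+\mathcal{P})^{\times n})$ by multilinearity into $2^n$ terms that are each positive by~\eqref{eq:upsilon}, and discarding all but $\Upsilon(\M^{\times n})$. This reconstruction is sound, including the symmetric-case bookkeeping of the $\log(p/q)$ offsets inside the Thompson maximum (where, incidentally, your version silently fixes a typo in the paper's displayed maximum, whose second term should read $2^{nD_{\max}(J_\N\|J_\M)}$), and your closing remark correctly identifies why a naive data-processing argument through a single positive map would fail: the map induced on Choi operators by a general $\Upsilon$ need not be positive on the entangled operator $\lambda^n J_\N^{\otimes n} - J_\M^{\otimes n}$, which is precisely why one must only ever evaluate $\Upsilon$ on tuples of CP maps. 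What your version buys is a self-contained proof whose only inputs are Theorems~\ref{thm:asymmetric} and~\ref{thm:symmetric}, Lemma~\ref{lem:dmax_channels}, and the definition~\eqref{eq:upsilon}; what the paper's buys is brevity, at the cost of deferring the conceptual core of the theorem to an external reference.
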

\end{boxed}

The result establishes a complete equivalence of parallel discrimination strategies and more general protocols, not only at the asymptotic level but also for any finite number of channel uses. The computation of the asymmetric (Stein) exponent is an analogue of a corresponding result in conventional discrimination of quantum channels~\cite{wilde_2020,wang_2019-4,fang_2020-2}, but strictly stronger: Theorem~\ref{thm:channel_asymptotic} applies even beyond adaptive discrimination strategies, and furthermore proves the strong converse property of postselected channel discrimination, which is not known to hold in the conventional setting. The evaluation of the symmetric (Chernoff) exponent has no known analogue in conventional hypothesis testing of quantum channels.

\begin{proof}
The inclusion of parallel protocols in general channel discrimination protocols implies that
\begin{equation}\begin{aligned}
\cbetaAny (\M, \N \pbar n) &\leq \cbetaPar (\M, \N \pbar n) = \left( \frac{\ve}{1-\ve}\, \Omega(J_\M^{\otimes n} \| J_\N^{\otimes n}) + 1\right)^{-1},\\
  \cperrAny(\M, \N \pbar p, q \pbar n) &\leq  \cperrPar(\M, \N \pbar p, q \pbar n) =  \left( \Thomp(p J_\M^{\otimes n} \| q J_\N^{\otimes n}) + 1\right)^{-1}.
\end{aligned}\end{equation}
The equalities in the above follow by Corollary~\ref{cor:channel_oneshot}.

For the other direction, we employ the result of~\cite[Theorem~13]{regula_2021-1}, which states that
\begin{equation}\begin{aligned}\label{eq:dmax_subadd_channels}
  D_{\max} \left(\Upsilon(\M^{\times n}) \| \Upsilon(\N^{\times n}) \right) \leq n\, D_{\max} (J_\M \| J_\N)
\end{aligned}\end{equation}
for every $\Upsilon \in \mathbb{P}_n$. 
To see that Eq.~\eqref{eq:dmax_subadd_channels} is true, and in particular that the assumptions of linearity and positivity of the channel transformations are sufficient, consider any feasible $\lambda$ such that $J_\M \leq \lambda J_\N$, i.e.\ $\lambda \N - \M$ is a completely positive map. We can then follow~\cite{regula_2021-1} by writing
\begin{align*}
  \lambda^n\, \Upsilon(\N,\ldots,\N) &= \Upsilon(\lambda \N,\ldots,\lambda \N)\\
  &= \Upsilon(\lambda \N - \M, \lambda \N, \ldots, \lambda \N) + \Upsilon(\M, \lambda \N, \ldots, \lambda \N)\\
  &= \Upsilon(\lambda \N - \M, \lambda \N, \ldots, \lambda \N) + \Upsilon(\M, \lambda \N - \M, \lambda \N, \ldots, \lambda \N) \tag{\stepcounter{equation}\theequation}\\
  &\quad +\Upsilon( \M, \M, \lambda \N, \ldots, \lambda \N)\\
  &\vdotswithin{=}\\
  &= \Upsilon(\lambda \N - \M, \lambda \N, \ldots, \lambda \N) + \ldots + \Upsilon(\M, \ldots, \M, \lambda \N - \M) + \Upsilon(\M, \ldots, \M),
\end{align*}
where we only used the $n$-linearity of $\Upsilon$. 
By the properties of $\mathbb{P}_n$ (Eq.~\eqref{eq:upsilon}), all of the terms on the right-hand side are positive operators, ensuring that $ \lambda^n\, \Upsilon(\N,\ldots,\N) \geq \Upsilon(\M, \ldots, \M)$ and thus $D_{\max} \left(\Upsilon(\M^{\times n}) \| \Upsilon(\N^{\times n}) \right) \leq n \log \lambda$. Optimising over all feasible $\lambda$ gives Eq.~\eqref{eq:dmax_subadd_channels}.

We thus get
\begin{equation}\begin{aligned}
  \cbetaAny (\M, \N \pbar n) &= \inf_{\Upsilon \in \mathbb{P}_n} \left( \frac{\ve}{1-\ve}\, \Omega\left( \Upsilon(\M^{\times n}) \| \Upsilon(\N^{\times n}) \right) + 1\right)^{-1}
  \\&\geq \left( \frac{\ve}{1-\ve}\, \Omega(J_\M \| J_\N)^n + 1\right)^{-1}\\
    &= \left( \frac{\ve}{1-\ve}\, \Omega(J_\M^{\otimes n} \| J_\N^{\otimes n}) + 1\right)^{-1}\\
  &= \cbetaPar (\M, \N \pbar n) 
\end{aligned}\end{equation}
where we used the additivity property
\begin{equation}\begin{aligned}
  D_{\max} \left( J_{\M^{\otimes n}} \| J_{\N^{\otimes n}} \right) = D_{\max} (J_\M^{\otimes n} \| J_{\N}^{\otimes n}) = n\, D_{\max} (J_\M \| J_{\N}).
\end{aligned}\end{equation}
Analogously,
\begin{equation}\begin{aligned}
  \cperrAny(\M, \N \pbar p, q \pbar n) &= \inf_{\Upsilon \in \mathbb{P}_n} \Big( \Thomp(p \Upsilon(\M^{\times n}) \| q \Upsilon(\N^{\times n})) + 1\Big)^{-1}
  \\&\geq \left( \max\left\{ \frac{p}{q} 2^{n D_{\max}(J_\M \| J_\N)},\,  \frac{q}{p} 2^{n D_{\max}(J_\M \| J_\N)} \right\} + 1\right)^{-1}\\
  &=  \left( \Thomp(p J_\M^{\otimes n} \| q J_\N^{\otimes n}) + 1\right)^{-1}\\
  &= \cperrPar(\M, \N \pbar p, q \pbar n).
\end{aligned}\end{equation}

The proof is concluded by taking the limits as in the state cases (Sections~\ref{sec:asymmetric} and~\ref{sec:symmetric}).
\end{proof}


\subsection{Composite channel hypothesis testing}

The case of composite channel discrimination can be considered very similarly. Given some subset of channels $\FF \subseteq \CPTP(A\to B)$, we define
\begin{equation}\begin{aligned}
  \cbeta_{\ve,\FF}(\M) \coloneqq& \inf_{\rho \in \DD_{RA}} \inf_{\substack{M_1, M_2 \geq 0\\M_1 + M_2 \leq \id}} \lset \sup_{\N \in \FF} \frac{\Tr M_1 (\idc \otimes \N(\rho))}{\Tr \left[(M_1+M_2) (\idc \otimes \N(\rho))\right]} \bar \frac{\Tr M_2 (\idc \otimes \M(\rho))}{\Tr \left[(M_1+M_2) (\idc\otimes \M(\rho))\right]} \leq \ve \rset.
\end{aligned}\end{equation}
Using the result for the state case (Theorem~\ref{thm:postselected_steins}), we immediately get that
\begin{equation}\begin{aligned}
  \cbeta_{\ve,\FF}(\M) = \left( \frac{\ve}{1-\ve}\, \Omega_\FF(\M)  + 1\right)^{-1},
\end{aligned}\end{equation}
where
\begin{equation}\begin{aligned}
  \Omega_\FF(\M) \coloneqq \sup_{\rho \in \DD_{RA}} \inf_{\N \in \FF} \,\Omega( \idc \otimes \M(\rho) \| \idc \otimes \N(\rho)).
\end{aligned}\end{equation}
The following Lemma can then be established in full analogy with Lemma~\ref{lem:dmax_channels}.
\begin{boxed}{white}
\begin{lemma}\label{lem:dmax_channels_f}
Let $\M: A \to B$ be a channel and $\FF$ a convex and closed set of quantum channels. Then
\begin{equation}\begin{aligned}
  \Omega_\FF(\M) &= \min_{\N \in \FF}\, \Omega ( J_\M \| J_\N ).
\end{aligned}\end{equation}
\end{lemma}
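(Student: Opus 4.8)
The plan is to reuse Lemma~\ref{lem:dmax_channels} as a black box and to squeeze $\Omega_\FF(\M)$, which by definition is the value of a $\sup$-$\inf$ problem, between two copies of the same quantity. Concretely, I aim to establish
\begin{equation}\begin{aligned}
  \inf_{\N\in\FF} \Omega(J_\M\|J_\N) \;\leq\; \Omega_\FF(\M) \;\leq\; \inf_{\N\in\FF} \Omega(J_\M\|J_\N),
\end{aligned}\end{equation}
so that all three are equal. The structural point that makes this work is that the maximally entangled input $\Phi_+$ already realises the optimum of the outer supremum, i.e.\ it is a saddle point of the $\sup$-$\inf$ problem.

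For the right-hand (upper) inequality I would invoke the universal max-min bound $\sup_\rho\inf_\N \leq \inf_\N\sup_\rho$, which requires no hypotheses, giving
\begin{equation}\begin{aligned}
  \Omega_\FF(\M) &= \sup_{\rho\in\D_{RA}}\inf_{\N\in\FF}\Omega(\idc\otimes\M(\rho)\,\|\,\idc\otimes\N(\rho))\\
  &\leq \inf_{\N\in\FF}\sup_{\rho\in\D_{RA}}\Omega(\idc\otimes\M(\rho)\,\|\,\idc\otimes\N(\rho)).
\end{aligned}\end{equation}
For each fixed $\N\in\FF$ the inner supremum is precisely $\Omega(\M\|\N)$, which Lemma~\ref{lem:dmax_channels} evaluates as $\Omega(J_\M\|J_\N)$; taking $\inf_\N$ then yields $\Omega_\FF(\M)\leq\inf_{\N\in\FF}\Omega(J_\M\|J_\N)$.

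For the left-hand (lower) inequality I would simply restrict the outer supremum to the single input $\rho=\Phi_+$. Since $\idc\otimes\M(\Phi_+)=J_\M$ and $\idc\otimes\N(\Phi_+)=J_\N$, this immediately gives $\Omega_\FF(\M)\geq\inf_{\N\in\FF}\Omega(J_\M\|J_\N)$, closing the squeeze. To upgrade the infimum to a minimum I would note that the Choi map $\N\mapsto J_\N$ is continuous, so $\{J_\N : \N\in\FF\}$ is compact in finite dimension (using that $\FF$ is closed), while $\Omega(J_\M\|\cdot)$ is lower semicontinuous, its sublevel sets $\{\sigma : J_\M\leq 2^c\sigma\}$ being closed; hence the infimum is attained.

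I expect no genuine obstacle: once Lemma~\ref{lem:dmax_channels} is available the argument is a two-line sandwich, and the only point demanding care is the lower-semicontinuity remark needed to write $\min$ rather than $\inf$. The subtlety one might fear — namely whether the inner infimum over the composite family $\FF$ interferes with the quasiconvexity-and-purification step underlying Lemma~\ref{lem:dmax_channels} — never arises, precisely because weak duality lets me split $\sup_\rho$ from $\inf_\N$ and then apply that lemma only to a \emph{fixed} pair $(\M,\N)$. The Thompson-metric statement is proved verbatim, replacing $\Omega$ by $\Thomp$ throughout and using the second identity $\Thomp(\M\|\N)=\Thomp(J_\M\|J_\N)$ of Lemma~\ref{lem:dmax_channels}.
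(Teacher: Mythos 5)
Your proof is correct, and it takes a genuinely different route from the paper. The paper proves this lemma ``in full analogy with Lemma~\ref{lem:dmax_channels}'', i.e.\ it reruns the whole argument with the composite family carried along: restrict to pure inputs via quasiconvexity of $\rho \mapsto \min_{\N\in\FF}\Omega(\idc\otimes\M(\rho)\|\idc\otimes\N(\rho))$, write $\ket{\psi} = (X_R\otimes\id_A)\ket{\Phi_+}$, and push the defining operator inequalities through the conjugation. You instead use Lemma~\ref{lem:dmax_channels} as a black box and sandwich the $\sup$-$\inf$ between $\inf_{\N}\Omega(J_\M\|J_\N)$ from both sides: the max-min inequality plus Lemma~\ref{lem:dmax_channels} applied at fixed $\N$ gives the upper bound, and plugging in $\rho = \Phi_+$ gives the lower bound --- in effect exhibiting $\Phi_+$ as a saddle point. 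This is shorter, avoids repeating the purification/conjugation argument, and as a bonus shows that the identity $\Omega_\FF(\M) = \inf_{\N\in\FF}\Omega(J_\M\|J_\N)$ holds without any convexity assumption on $\FF$ (convexity is only needed downstream, when Theorem~\ref{thm:postselected_steins} is invoked to relate $\cbeta_{\ve,\FF}$ to $\Omega_\FF$). Two small touch-ups to your attainment argument: closedness of $\FF$ alone does not give compactness in finite dimension --- you also need boundedness, which holds here because $\FF$ sits inside the compact set of all CPTP maps (equivalently, Choi states are density operators); and lower semicontinuity of $\Omega(J_\M\|\cdot)$ requires closedness of the sublevel sets of \emph{both} terms, $\{\sigma : J_\M \leq 2^c\sigma\}$ and $\{\sigma : \sigma \leq 2^c J_\M\}$, after which one uses that a sum of lower semicontinuous functions composed with $2^{(\cdot)}$ is lower semicontinuous. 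Both fixes are routine and do not affect the structure of your argument.
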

\end{boxed}
As a consequence, we obtain the following generalisation of quantum Stein's lemma for parallel quantum channel discrimination.

\begin{boxed}{white}
\begin{corollary}\label{cor:channel_asymptotic_f}
For all $\ve \in (0,1)$, every channel $\mathcal{M} : A \to B$, and every family $(\FF_n)_n$ of convex and closed sets $\FF_n \subseteq \CPTP(A^{\otimes n}\to B^{\otimes n})$, we have
\begin{equation}\begin{aligned}
  \liminf_{n\to\infty} - \frac1n \log \cbeta_{\ve,\FF} (\M^{\otimes n}) = \DD_{\Omega,\FF}^\infty (\M)
\end{aligned}\end{equation}
where
\begin{equation}\begin{aligned}
  \DD_{\Omega,\FF}^\infty (\M) 
  = \liminf_{n\to\infty} \frac1n \min_{\N \in \FF} \DD_{\Omega}\left(\left.J_\M^{\otimes n} \right\| J_\N^{\otimes n}\right).
\end{aligned}\end{equation}
If the sets $\FF_n$ are closed under tensor product, i.e.\ $\N_n \in \FF_n, \N_m \in \FF_m \Rightarrow \N_n \otimes \N_m \in \FF_{n+m}$, then $\liminf$ can be replaced with $\lim$ in the above.
\end{corollary}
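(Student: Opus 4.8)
The plan is to transfer the state-level argument of Theorem~\ref{thm:postselected_steins} almost verbatim to channels, using Lemma~\ref{lem:dmax_channels_f} as the bridge through the Choi correspondence. First I would combine the one-shot identity $\cbeta_{\ve,\FF}(\M) = \big(\frac{\ve}{1-\ve}\,\Omega_\FF(\M)+1\big)^{-1}$ with Lemma~\ref{lem:dmax_channels_f}, applied to the channel $\M^{\otimes n}$ and the set $\FF_n$, to obtain
\begin{equation}
  \cbeta_{\ve,\FF_n}(\M^{\otimes n})^{-1} = \frac{\ve}{1-\ve}\,\min_{\N_n \in \FF_n} \Omega\big(J_\M^{\otimes n} \,\big\|\, J_{\N_n}\big) + 1 .
\end{equation}
Here I would use that the Choi state of $\M^{\otimes n}$ coincides with $J_\M^{\otimes n}$ up to a permutation of the $R$ and $B$ tensor factors --- an operation that leaves $\Omega$ and $\DD_\Omega$ invariant, since both are unchanged under simultaneous unitary conjugation of their two arguments --- and that convexity and closedness of $\FF_n$ are inherited by the set of Choi states $\{J_{\N_n} : \N_n \in \FF_n\}$ under the linear bijection $\N_n \mapsto J_{\N_n}$, so that Lemma~\ref{lem:dmax_channels_f} genuinely applies.

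Next, exactly as in \eqref{eq:two-sided}, I would invoke $\Omega(\cdot\|\cdot)\geq 1$ to sandwich
\begin{equation}
  \frac{\ve}{1-\ve}\,\min_{\N_n \in \FF_n} \Omega\big(J_\M^{\otimes n} \,\big\|\, J_{\N_n}\big) \leq \cbeta_{\ve,\FF_n}(\M^{\otimes n})^{-1} \leq \frac{1}{1-\ve}\,\min_{\N_n \in \FF_n} \Omega\big(J_\M^{\otimes n} \,\big\|\, J_{\N_n}\big).
\end{equation}
Taking logarithms, dividing by $n$, and passing to the $\limsup$, the $\ve$-dependent prefactors contribute only $O(1/n)$ terms that vanish in the limit; since $\DD_\Omega=\log\Omega$ and the logarithm is increasing, $\log\min_{\N_n}\Omega(J_\M^{\otimes n}\|J_{\N_n}) = \min_{\N_n}\DD_\Omega(J_\M^{\otimes n}\|J_{\N_n})$. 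This delivers the claimed equality $\limsup_{n\to\infty} -\frac1n\log\cbeta_{\ve,\FF_n}(\M^{\otimes n}) = \DD_{\Omega,\FF}^\infty(\M)$.

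Finally, to upgrade the $\limsup$ to a limit under tensor-product closure, I would show that $n \mapsto \min_{\N_n \in \FF_n}\DD_\Omega(J_\M^{\otimes n}\|J_{\N_n})$ is subadditive. Given near-optimal $\N_m \in \FF_m$ and $\N_n \in \FF_n$, closure places $\N_m \otimes \N_n \in \FF_{m+n}$ with $J_{\N_m\otimes\N_n} = J_{\N_m}\otimes J_{\N_n}$ up to a permutation, so the subadditivity of the optimised Hilbert projective metric \cite[Theorem~1]{regula_2021-4} yields $\min_{\N\in\FF_{m+n}}\DD_\Omega(J_\M^{\otimes(m+n)}\|J_{\N}) \leq \min_{\N\in\FF_m}\DD_\Omega(J_\M^{\otimes m}\|J_{\N}) + \min_{\N\in\FF_n}\DD_\Omega(J_\M^{\otimes n}\|J_{\N})$, and Fekete's lemma \cite{fekete_1923} then promotes the $\limsup$ to a genuine limit. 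I expect no substantive obstacle beyond the state case: the only steps demanding care are the bookkeeping of the system permutations relating $J_{\M^{\otimes n}}$ and $J_{\N_m\otimes\N_n}$ to plain tensor powers, and the verification that the Choi image of a convex, closed, tensor-closed family of channels satisfies precisely the structural hypotheses required by Lemma~\ref{lem:dmax_channels_f} and \cite[Theorem~1]{regula_2021-4}.
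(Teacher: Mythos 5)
Your proposal is correct and follows essentially the same route as the paper: the paper obtains the corollary directly by combining the one-shot formula $\cbeta_{\ve,\FF}(\M) = \bigl(\tfrac{\ve}{1-\ve}\,\Omega_\FF(\M)+1\bigr)^{-1}$ with Lemma~\ref{lem:dmax_channels_f} applied to $\M^{\otimes n}$ and $\FF_n$, then sandwiching as in~\eqref{eq:two-sided} and invoking subadditivity plus Fekete's lemma for the tensor-closed case. Your additional care about the permutation relating $J_{\M^{\otimes n}}$ to $J_\M^{\otimes n}$ and about the Choi image of $\FF_n$ inheriting convexity and closedness is sound bookkeeping that the paper leaves implicit.
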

\end{boxed}


\section{Resource theory of asymmetric distinguishability}\label{sec:resource_theory}

The resource theory of asymmetric distinguishability was introduced in~\cite{matsumoto_2010,wang_2019} as a way to treat the distinguishability of quantum states operationally. Just as general resource theories~\cite{chitambar_2019} are concerned with transforming quantum states under some set of restricted, `free' operations, the theory of distinguishability studies the transformations of pairs of quantum states (`boxes') under channels that act as $(\rho, \sigma) \to (\M(\rho), \M(\sigma))$, and therefore do not increase the distinguishability between the two states. The standard resourceful box, representing a unit of `perfect distinguishability', is represented by the pair $(\proj{0}, \frac12 \id_2)$. In this context, the task of distinguishability distillation was defined as transforming a given box $(\rho,\sigma)$ into tensor powers of the standard box, $(\proj{0}^{\otimes m}, \frac{1}{2^m} \id_2^{\otimes m})$. An equivalent variant of this task can be defined by considering the transformation $(\rho, \sigma) \to (\proj{0}, \pi_{2^m})$ instead, where $\pi_k \coloneqq \frac{1}{k} \proj{0} + \frac{k-1}{k} \proj{1}$, with the advantage that the latter approach allows the parameter $m$ to vary continuously.

A result of~\cite{wang_2019} was to identify the one-shot distillable distinguishability with the hypothesis testing relative entropy $D^\ve_H$~\cite{wang_2012,buscemi_2010}, which is equal to the optimised type II error in hypothesis testing: $D^\ve_H(\rho \| \sigma) \coloneqq - \log \beta_\ve(\rho\|\sigma)$. 
A natural extension of this result to the asymptotic case where $(\rho^{\otimes n}, \sigma^{\otimes n}) \to (\proj{0}^{\otimes rn}, \frac{1}{2^{rn}} \id_2^{\otimes rn})$, coupled with the quantum Stein's lemma, shows that the value of asymptotically distillable distinguishability can be identified with the relative entropy $D(\rho\|\sigma)$, recovering and extending a finding of~\cite{matsumoto_2010} (cf.\ also~\cite{buscemi_2019}).

We can obtain an analogous result in our setting, giving an operational interpretation in the resource theory of asymmetric distinguishability to the \emph{postselected hypothesis testing relative entropy}
\begin{equation}\begin{aligned}
  \cDH^\ve (\rho \| \sigma) \coloneqq - \log \cbeta_\ve (\rho \| \sigma).
\end{aligned}\end{equation}
To do so, the transformations under quantum channels need to be replaced with transformations under probabilistic maps.

\begin{corollary}
Define the probabilistically distillable distinguishability as
\begin{equation}\begin{aligned}
    D^\ve_{d,\rm prob} (\rho , \sigma) \coloneqq \log \sup_{\E \in \rm CPTNI} \lsetr k \barr F\!\left( \frac{\E(\rho)}{\Tr \E(\rho)}, \proj{0} \right) \geq 1-\ve,\; \frac{\E(\sigma)}{\Tr \E(\sigma)} = \pi_k \rsetr,
\end{aligned}\end{equation}
where $F(\omega,\tau) = \norm{\sqrt{\omega}\sqrt{\tau}}{1}^2$ is the fidelity and $\rm CPTNI$ denotes all completely positive and trace--non-increasing maps. Then
\begin{equation}\begin{aligned}
  D^\ve_{d,\rm prob} (\rho , \sigma) = \cDH^\ve  (\rho \| \sigma).
\end{aligned}\end{equation}
\end{corollary}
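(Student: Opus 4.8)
The plan is to prove the two inequalities $D^\ve_{d,\rm prob}(\rho,\sigma)\geq\cDH^\ve(\rho\|\sigma)$ and $D^\ve_{d,\rm prob}(\rho,\sigma)\leq\cDH^\ve(\rho\|\sigma)$ separately, using the fact that $\cDH^\ve(\rho\|\sigma)=\log\left(1/\cbeta_\ve(\rho,\sigma)\right)$, so the claim reduces to showing that the optimal distillation parameter is exactly $k=1/\cbeta_\ve(\rho,\sigma)$. The bridge between the measurement picture of $\cbeta_\ve$ and the channel picture of $D^\ve_{d,\rm prob}$ is the observation that a three-outcome postselected POVM $\{M_1,M_2,M_?\}$ induces a measure-and-prepare map $\E(\cdot)\coloneqq\Tr[M_1\,\cdot]\proj{0}+\Tr[M_2\,\cdot]\proj{1}$, which is completely positive and, because $M_1+M_2\leq\id$, trace-non-increasing; conversely, the pair $(\proj{0},\pi_k)$ is tailor-made for a computational-basis hypothesis test, since $\bra{0}\pi_k\ket{0}=1/k$.

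For achievability I would take the optimal measurement constructed in the achievability discussion following Theorem~\ref{thm:asymmetric}, which satisfies $\calpha(M)=\ve$ and $\cbeta(M)=\cbeta_\ve(\rho,\sigma)$, and feed it into the map $\E$ above. Renormalising the outputs gives $\frac{\E(\rho)}{\Tr\E(\rho)}=(1-\calpha)\proj{0}+\calpha\proj{1}=(1-\ve)\proj{0}+\ve\proj{1}$ and $\frac{\E(\sigma)}{\Tr\E(\sigma)}=\cbeta(M)\proj{0}+(1-\cbeta(M))\proj{1}$. The first of these has fidelity exactly $1-\ve$ with $\proj{0}$, meeting the constraint, while the second is exactly $\pi_k$ for the choice $k=1/\cbeta_\ve(\rho,\sigma)$. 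This exhibits a feasible $\E\in\rm CPTNI$ with $\log k=-\log\cbeta_\ve(\rho,\sigma)=\cDH^\ve(\rho\|\sigma)$, giving the lower bound. When $\cbeta_\ve(\rho,\sigma)=0$ one instead takes a sequence of near-optimal measurements driving $k\to\infty$, so that both sides equal $+\infty$.

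For the converse, let $\E\in\rm CPTNI$ be any feasible map, write $\omega\coloneqq\frac{\E(\rho)}{\Tr\E(\rho)}$ and recall $\frac{\E(\sigma)}{\Tr\E(\sigma)}=\pi_k$. Since CPTNI maps are positive, the data-processing inequality of Proposition~\ref{prop:asymmetric_prop}(v) gives $\cbeta_\ve(\rho,\sigma)\leq\cbeta_\ve(\omega,\pi_k)$. I would then upper-bound the right-hand side by exhibiting the feasible test $M_1=\proj{0}$, $M_2=\id-\proj{0}$: the fidelity constraint $F(\omega,\proj{0})=\bra{0}\omega\ket{0}\geq 1-\ve$ forces its conditional type~I error $1-\bra{0}\omega\ket{0}$ to be at most $\ve$, while its conditional type~II error equals $\bra{0}\pi_k\ket{0}=1/k$. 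Hence $\cbeta_\ve(\omega,\pi_k)\leq 1/k$, and chaining the two inequalities yields $k\leq 1/\cbeta_\ve(\rho,\sigma)$, i.e.\ $\log k\leq\cDH^\ve(\rho\|\sigma)$; taking the supremum over feasible $\E$ completes the converse.

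The individual steps are short, so the point I would treat most carefully is the renormalisation in the data-processing step: one must verify that $\Tr\E(\rho),\Tr\E(\sigma)>0$ for any feasible map — guaranteed by the well-posedness of the fidelity and $\pi_k$ constraints — so that Proposition~\ref{prop:asymmetric_prop}(v) applies verbatim, and that the computational-basis test stays feasible even when the output lives in a space larger than a qubit, which it does because $M_1+M_2=\id$ regardless of dimension.
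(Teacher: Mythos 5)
Your proposal is correct, but it takes a genuinely different route from the paper. The paper's proof is a two-line citation argument: it invokes an external result (Corollary~14 of the reference \texttt{regula\_2021-4}) which directly gives $D^\ve_{d,\rm prob}(\rho,\sigma) = \log\!\left(\frac{\ve}{1-\ve}\,\Omega(\rho\|\sigma)+1\right)$, and then identifies the right-hand side with $\cDH^\ve(\rho\|\sigma)$ via Theorem~\ref{thm:asymmetric}. You instead prove the operational identity $D^\ve_{d,\rm prob} = -\log\cbeta_\ve$ directly and self-containedly, never passing through the $\Omega$ formula: for achievability, the measure-and-prepare map $\E(\cdot)=\Tr[M_1\,\cdot]\proj{0}+\Tr[M_2\,\cdot]\proj{1}$ built from a (near-)optimal postselected POVM is CPTNI precisely because $M_1+M_2\leq\id$, and its normalised outputs land exactly on $\bigl((1-\ve)\proj{0}+\ve\proj{1},\,\pi_k\bigr)$ with $k=1/\cbeta(M)$; for the converse, the data-processing inequality of Proposition~\ref{prop:asymmetric_prop}(v) combined with the computational-basis test on the pair $(\omega,\pi_k)$ gives $\cbeta_\ve(\rho,\sigma)\leq\cbeta_\ve(\omega,\pi_k)\leq 1/k$. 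Both directions check out, including the edge cases you flag (the $\cbeta_\ve=0$ case handled by near-optimal measurements, positivity of $\Tr\E(\rho)$ and $\Tr\E(\sigma)$, and output dimensions larger than a qubit, where $M_1=\proj{0}$, $M_2=\id-\proj{0}$ remains a valid test and $F(\omega,\proj{0})=\bra{0}\omega\ket{0}$ still holds). What each approach buys: the paper's proof is shorter and leans on prior work, while yours is self-contained within this paper's own results (only the achievability construction after Theorem~\ref{thm:asymmetric} and Proposition~\ref{prop:asymmetric_prop}(v) are used) and makes the measurement-to-channel correspondence underlying the corollary completely explicit --- essentially re-deriving the cited external result in this special case.
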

We remark that the fidelity can be replaced with the trace distance $\frac12 \norm{\frac{\E(\rho)}{\Tr \E(\rho)} - \proj{0}}{1}$ in the above without affecting the result.
\begin{proof}
Applying~\cite[Corollary~14]{regula_2021-4} gives that
\begin{equation}\begin{aligned}
    D^\ve_{d,\rm prob}(\rho, \sigma) = \log \left( \frac{\ve}{1-\ve} \Omega(\rho \| \sigma) + 1 \right).
\end{aligned}\end{equation}
But in Theorem~\ref{thm:asymmetric} we have identified the right-hand side with $\cDH^\ve (\rho \| \sigma)$, and so the stated result follows.
\end{proof}

An asymptotic extension of this result can be used to show that the rate of distinguishability distillation using probabilistic (CPTNI) operations is given by $\DD_\Omega(\rho \| \sigma)$. This was also recently shown in~\cite{regula_2022-1} through a more general approach applicable to broader resource theories.

The work~\cite{wang_2019} studied the reverse task of distinguishability dilution, that is, the transformation from the maximal box to a given pair of quantum states. In this context, a connection with the max-relative entropy $D_{\max}$ was obtained. In an analogous way, in the conceptually related resource theory of \emph{symmetric} distinguishability~\cite{salzmann_2021}, the corresponding task of dilution was connected with the Thompson metric $\DD_\Thomp$. In contrast, in our probabilistic setting, the task of dilution trivialises --- any state can be obtained from a pure state probabilistically~\cite{regula_2021-4}.


\section{General probabilistic theories}\label{sec:gpts}

The framework of general probabilistic theories (GPTs)~\cite{ludwig_1985,hartkamper_1974,davies_1970} encompasses both quantum and classical probability theory, providing a natural way to study the properties of very general physical theories, and in particular to formalise the similarities and differences between them. Although concepts such as states and measurements can be defined in any such setting, studying the optimal errors in the asymptotic regime did not appear possible --- the optimal error exponents encountered in classical and quantum theory involve quantities such as the quantum relative entropy $D(\cdot\|\cdot)$ and the Chernoff divergence $\xi(\cdot\|\cdot)$, both of which require the properties of classical or quantum theory even to be defined. Because of this, no results whatsoever have been known about asymptotic hypothesis testing in broader GPTs. Our framework, however, can be straightforwardly adapted also to such general theories, and we will see that our findings on state discrimination in Sections~\ref{sec:asymmetric} and~\ref{sec:symmetric} hold in arbitrary GPTs verbatim.

\subsection{Definitions}\label{sec:gpt_def}

To define a GPT, one first identifies the 
compact and convex set $\S$ of \emph{states} within a real vector space $\V$, here assumed to be finite dimensional. In analogy with the geometry of quantum theory, the affine hull of the set $\S$ is required to have co-dimension $1$ within $\V$, and to not contain the zero vector.\footnote{This seemingly ad hoc set of axioms is actually well justified from a foundational perspective. We refer the interested reader to~\cite[Chapters~1--2]{lami_2017-1} for more details on these points as well as complete derivations.} The cone generated by $\S$, namely $\C \coloneqq \lset \lambda \rho \bar \lambda \in \RR_+,\, \rho \in \S \rset$, is used to induce a partial order on the space $\V$ as $x \leq_\C y \iff y-x \in \C$. The compactness of $\S$, which implies that $\C$ is pointed, or $\C \cap (-\C) = \{0\}$, indeed means that this is a partial order. Furthermore, the fact that the affine hull of $\S$ is full dimensional entails that $\C$ is generating, i.e.\ $\operatorname{span}(\C) = \V$.

The dual space $\V\*$, i.e.\ the space of continuous linear functionals $M \colon \V \to \RR$, then has a partial order $X \leq_{\C\*} Y \iff Y -X \in {\C\*}$ induced by the dual cone $\C\*\coloneqq \lset M \in \V\* \bar \< M, \rho \> \geq 0 \; \forall \rho \in \S \rset$, where we write $\< M, x \> = M(x)$ for every $x \in \V$, $M \in \V\*$. We use the notation $X <_{\C\*} Y$ to denote $Y - X \in \operatorname{int}(\C\*)$.

In order to define measurements in this theory, we also need to identify the set of \emph{effects}, that is, elements of the dual space which yield physical measurement outcomes. First, a fixed \emph{unit effect} $U >_{\C\*} 0$ is defined as the unique element of $\V\*$ such that $\< U, \rho \> = 1 \; \forall \rho \in \S$. The so-called no restriction hypothesis~\cite{ludwig_1985,barrett_2007,chiribella_2010} then allows us to assume that every element $M \in \V\*$ such that $\< M, \rho \> \in [0,1] \; \forall \rho \in \S$ is a valid effect, and thus a valid measurement operator; this can be equivalently understood as $0 \leq_{\C\*} M  \leq_{\C\*} U$. A measurement is then a collection of effects $\{M_i\}_i$ such that $\sum_{i=1}^n M_i = U$ --- in other words, a collection such that the measurement outcome probabilities $\< M_i, \rho\>$ sum to unity for every state $\rho$.

We remark that both quantum and classical probability theory are special cases of the above.
Although the Hilbert space underlying the setting of quantum mechanics is complex, the space where density operators live --- corresponding to $\V$ here --- is nevertheless a real vector space, which is why the general assumptions of GPTs encompass quantum mechanics. Specifically, in the quantum case, $\V$ is the space of self-adjoint operators acting on some complex Hilbert space, $\C$ and $\C\*$ are both cones of postive semidefinite operators, and the unit effect $U$ is the identity.

We will use the tuple $(\V, \C, U)$ to denote a GPT as above, as the theory is uniquely determined by the three choices. Given two GPTs, $(\V_A, \C_A, U_A)$ and $(\V_B, \C_B, U_B)$, we then wish to define a bipartite structure that combines the two. Under natural assumptions including the so-called local tomography principle, which states that bipartite states should be fully determined by the statistics of local measurements, it holds that $\V_{AB}$ is isomorphic to $\V_A \otimes \V_B$~\cite{klay_1987,wilce_1992}, which means that we can identify $\V_{AB} = \V_A \otimes \V_B$ and $U_{AB} = U_A \otimes U_B$ without loss of generality. Interestingly, this does \emph{not} uniquely determine a bipartite cone $\C_{AB}$; the only thing that can be concluded is that~\cite{namioka_1969}
\begin{equation}\begin{aligned}\label{eq:minmax_cones}
  \C_A \omin \C_B \subseteq \C_{AB} \subseteq \C_A \omax \C_B
\end{aligned}\end{equation}
where the \emph{minimal tensor product} and the \emph{maximal tensor product} are defined, respectively, as
\begin{equation}\begin{aligned}
  \C_A \omin \C_B &\coloneqq \operatorname{conv} (\C_A \otimes \C_B),\\
  \C_A \omax \C_B &\coloneqq \Big( \C^\aster_A \omin \C^\aster_B\Big)\*.
\end{aligned}\end{equation}
For every pair of non-classical GPTs, it holds that $\C_A \omin \C_B \neq  \C_A \omax \C_B$~\cite{aubrun_2021}, and the difference between these two cones is paramount to understanding properties of multipartite GPTs, in particular the existence of phenomena such as entanglement~\cite{namioka_1969,aubrun_2022}.

Despite the fact that many aspects of GPTs crucially depend on the specific choice of a multipartite cone $\C_{AB}$, our results will apply to \emph{any} choice of $\C_{AB}$ as in~\eqref{eq:minmax_cones}, making them universally applicable. We remark in particular that $\C_A \otimes \C_B \subseteq \C_{AB}$ and $\C^\aster_A \otimes \C^\aster_B \subseteq \C^\aster_{AB}$ for all valid choices of $\C_{AB}$.

The extension of the above to the tensor product of more than two GPTs is immediate. 

\subsection{Extension of results}

The task of state discrimination, in both the conventional and postselected settings, can be formulated in exactly the same way as we have done it in Section~\ref{sec:intro}; specifically,
\begin{equation}\begin{aligned}
  \cbeta_\ve (\rho, \sigma) &= \inf_{M_1,M_2 \geq_{\C\*} 0} \lset \frac{\< M_1, \sigma \>}{\< M_1+M_2, \sigma \>} \bar M_1 + M_2 \leq_{\C\*} U,\; \frac{\< M_2, \rho \>}{ \< M_1+M_2,  \rho \>} \leq \ve \rset\\
  \cperr(\rho,\sigma \pbar p,q) &= \inf_{\substack{M_1, M_2 \geq_{\C\*} 0,\\ M_1 + M_2 \leq_{\C\*} U}} \, \frac{p \< M_2, \rho \> + q \< M_1, \sigma \>}{ p  \< M_1 + M_2, \rho \> + q \< M_1 + M_2, \sigma \>},
\end{aligned}\end{equation}
where $\rho, \sigma \in \S$ and $p, q = 1-p \in (0,1)$ as before. The equivalent of the max-relative entropy can also be defined in the same way,
\begin{equation}\begin{aligned}
  D_{\max} (\rho \| \sigma) \coloneqq \log \inf \lset \lambda \in \RR_+ \bar \rho \leq_{\C} \lambda \sigma \rset,
\end{aligned}\end{equation}
with the definitions of $\DD_\Omega$ and $\DD_\Thomp$ naturally extending to the GPT formalism using the above quantity.

The results of our Theorem~\ref{thm:asymmetric} (asymmetric error) and Theorem~\ref{thm:symmetric} (symmetric error) then apply to any GPT \emph{exactly as stated} under the replacement $\Tr (M \cdot) \mapsto \<M, \cdot \>$ --- there is no need to adjust any of the proofs, as we have not made use of any properties of quantum mechanics, with our proofs resting on the linearity of the trace and convex (Lagrange) duality, both of which hold for arbitrary GPTs defined as in Section~\ref{sec:gpt_def}.

This correspondence is not too surprising, given that many results on one-shot state discrimination can be mapped between quantum mechanics and GPTs~\cite{kimura_2010,bae_2016-1}. What is more remarkable about the setting of postselected hypothesis testing is that the asymptotic results --- namely, Corollary~\ref{cor:asymmetric_exp} and Corollary~\ref{cor:symmetric} --- are also valid in arbitrary GPTs, with no other assumptions needed. This is because of the easily verified additivity of $D_{\max}$, which we formalise below.

\begin{lemma}
Given a local GPT $(\V, \C, U)$, consider 
any $n$-copy GPT 
$(\V_n, \C_n, U_n)$, where $\V_n = \bigotimes_{i=1}^n \V$, $U_n = \bigotimes_{i=1}^n U$, and $\underset{\smash{\min\phantom{_{i=1}^n}}}{\bigotimes_{i=1}^n} \C \subseteq \C_n \subseteq \underset{\smash{\max\phantom{_{i=1}^n}}}{\bigotimes_{i=1}^n} \C$. Then, for all $\rho,\sigma \in \S$,
\begin{equation}\begin{aligned}
  D_{\max}(\rho^{\otimes n} \| \sigma^{\otimes n}) = n D_{\max} (\rho \| \sigma).
\end{aligned}\end{equation}
\end{lemma}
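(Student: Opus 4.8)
The plan is to split the claimed additivity into the two inequalities $D_{\max}(\rho^{\otimes n}\|\sigma^{\otimes n})\le n\,D_{\max}(\rho\|\sigma)$ and $D_{\max}(\rho^{\otimes n}\|\sigma^{\otimes n})\ge n\,D_{\max}(\rho\|\sigma)$, each of which uses exactly one of the two cone inclusions in the hypothesis: the lower inclusion $\underset{\min}{\bigotimes}\,\C\subseteq\C_n$ drives the (easy) achievability bound, while the upper inclusion $\C_n\subseteq\underset{\max}{\bigotimes}\,\C$ drives the (harder) converse. Throughout I would use that, since $\S$ is compact and convex with its affine hull avoiding the origin, the cone $\C=\cone(\S)$ is closed, pointed, and generating; closedness gives the bipolar identity $(\C\*)\*=\C$ and ensures that, whenever $D_{\max}(\rho\|\sigma)<\infty$, the infimum defining it is attained, i.e.\ $\lambda\sigma-\rho\in\C$ for $\lambda=2^{D_{\max}(\rho\|\sigma)}$.

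For the upper bound, assuming $\lambda\coloneqq 2^{D_{\max}(\rho\|\sigma)}<\infty$ so that $\lambda\sigma-\rho\in\C$, I would invoke the telescoping identity
\[
  (\lambda\sigma)^{\otimes n}-\rho^{\otimes n}=\sum_{k=0}^{n-1}(\lambda\sigma)^{\otimes k}\otimes(\lambda\sigma-\rho)\otimes\rho^{\otimes(n-1-k)}.
\]
Every summand is a single product $c_1\otimes\cdots\otimes c_n$ of elements of $\C$ (namely $\lambda\sigma$, $\lambda\sigma-\rho$, and $\rho$), hence lies in $\underset{\min}{\bigotimes}\,\C\subseteq\C_n$; since $\C_n$ is a convex cone, the whole sum lies in $\C_n$. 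This gives $\rho^{\otimes n}\le_{\C_n}\lambda^n\sigma^{\otimes n}$, whence $D_{\max}(\rho^{\otimes n}\|\sigma^{\otimes n})\le n\log\lambda=n\,D_{\max}(\rho\|\sigma)$. The case $D_{\max}(\rho\|\sigma)=\infty$ renders this bound vacuous.

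For the lower bound I would pass to the dual. Using $(\C\*)\*=\C$, the condition $\lambda\sigma-\rho\in\C$ is equivalent to $\lambda\langle M,\sigma\rangle\ge\langle M,\rho\rangle$ for all $M\in\C\*$, and a short separation argument then yields the dual formula $2^{D_{\max}(\rho\|\sigma)}=\sup\{\langle M,\rho\rangle/\langle M,\sigma\rangle : M\in\C\*,\ \langle M,\sigma\rangle>0\}$ whenever the left-hand side is finite. For any such near-optimal $M$, the product functional $M^{\otimes n}$ lies in $\C_n\*$: this is the $n$-fold version of the inclusion $\C\*\otimes\C\*\subseteq\C\*_{AB}$ noted in the text, valid for every admissible bipartite cone and hence for $\C_n$ (equivalently, it follows from $\C_n\subseteq\underset{\max}{\bigotimes}\,\C$). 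Pairing $M^{\otimes n}\in\C_n\*$ against any feasible witness $\mu\sigma^{\otimes n}-\rho^{\otimes n}\in\C_n$ gives $\mu\,\langle M,\sigma\rangle^n\ge\langle M,\rho\rangle^n$, i.e.\ $\mu\ge(\langle M,\rho\rangle/\langle M,\sigma\rangle)^n$; taking the supremum over $M$ and the infimum over $\mu$ yields $2^{D_{\max}(\rho^{\otimes n}\|\sigma^{\otimes n})}\ge 2^{n\,D_{\max}(\rho\|\sigma)}$. The infinite case falls out of the same pairing: a witness $M\in\C\*$ with $\langle M,\sigma\rangle=0<\langle M,\rho\rangle$ forces $\langle M^{\otimes n},\mu\sigma^{\otimes n}-\rho^{\otimes n}\rangle<0$ for every finite $\mu$, so no finite $\mu$ is feasible and both sides are $+\infty$.

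The main obstacle is the converse bound, and within it two points deserve care: establishing the dual characterisation of $D_{\max}$ (which rests on closedness of $\C$ via the bipolar theorem, and on isolating the support-type condition separating the finite and infinite regimes), and verifying $M^{\otimes n}\in\C_n\*$. The latter is precisely where the inclusion $\C_n\subseteq\underset{\max}{\bigotimes}\,\C$ enters, and it is what guarantees that a \emph{single product} witness $M^{\otimes n}$ suffices simultaneously for \emph{every} admissible choice of the bipartite cone $\C_n$. By contrast, the achievability direction is essentially the telescoping computation above and needs only $\underset{\min}{\bigotimes}\,\C\subseteq\C_n$.
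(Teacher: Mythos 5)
Your proof is correct and follows essentially the same route as the paper's: the achievability direction uses the inclusion $\underset{\min}{\bigotimes}\,\C \subseteq \C_n$ together with a telescoping decomposition of $(\lambda\sigma)^{\otimes n} - \rho^{\otimes n}$ into tensor products of cone elements (the paper does the $n=2$ case and induction, you write the general sum directly), while the converse direction uses the dual formulation $2^{D_{\max}(\rho\|\sigma)} = \sup_{M \in \C\*} \<M,\rho\>/\<M,\sigma\>$ and the fact that $\bigotimes_{i=1}^n \C\* \subseteq \C_n^{\,\aster}$ (a consequence of $\C_n \subseteq \underset{\max}{\bigotimes}\,\C$) to promote a single-copy witness to the product witness $M^{\otimes n}$. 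Your additional care about closedness of $\C$, the bipolar theorem behind the dual formula, and the infinite case is a welcome elaboration of details the paper leaves implicit.
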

\begin{proof}
Consider any feasible $\lambda$ such that $\rho \leq_\C \lambda \sigma$. Then, $\rho^{\otimes n} \leq_{\C_n} (\lambda\sigma)^{\otimes n} = \lambda^n \sigma^{\otimes n}$; this can be seen by using the fact that $\C \omin \C \subseteq \C_2$ to show that
\begin{equation}\begin{aligned}
  \lambda^2 (\sigma \otimes \sigma) - \rho \otimes \rho =  (\lambda \sigma - \rho) \otimes \lambda \sigma + \rho \otimes (\lambda \sigma - \rho) \geq_{\C_2} 0
\end{aligned}\end{equation}
and extending to arbitrary $n$ by induction. 
This implies that $D_{\max}(\rho^{\otimes n}) \leq n \log\lambda$, and hence $D_{\max}(\rho^{\otimes n} \| \sigma^{\otimes n}) \leq n D_{\max} (\rho \| \sigma)$.

For the other direction, we write $D_{\max}$ in its dual form as
\begin{equation}\begin{aligned}
  D_{\max}(\rho\|\sigma) = \log \sup_{W \geq_{\C\*} 0} \frac{\< W, \rho \>}{\< W, \sigma \>}.
\end{aligned}\end{equation}
The fact that $\bigotimes_{i=1}^n \C\* \subseteq \C^\aster_n$ immediately implies that any feasible solution $W$ to $D_{\max}(\rho\|\sigma)$ gives a feasible solution to $D_{\max}(\rho^{\otimes n}\|\sigma^{\otimes n})$ as $W^{\otimes n}$, concluding the proof.
\end{proof}

The above lemma tells us that $\DD_\Omega(\rho^{\otimes n}\|\sigma^{\otimes n}) = n \DD_\Omega(\rho\|\sigma)$ and $\DD_\Thomp(\rho^{\otimes n} \| \sigma^{\otimes n} ) = n \DD_\Thomp(\rho \| \sigma)$ for all normalised $\rho, \sigma \in \S$. The asymptotic results of Corollaries~\ref{cor:asymmetric_exp} and~\ref{cor:symmetric} thus follow directly; namely, we have that
\begin{equation}\begin{aligned}
  \lim_{n\to\infty} - \frac1n \log \cbeta_{\ve}(\rho^{\otimes n},\sigma^{\otimes n}) = \DD_{\Omega} (\rho \| \sigma)
\end{aligned}\end{equation}
and
\begin{equation}\begin{aligned}
 \lim_{n\to\infty} - \frac1n \log \cperr(\rho^{\otimes n}, \sigma^{\otimes n} \pbar p,q) = \DD_{\Thomp} (\rho \| \sigma)
\end{aligned}\end{equation}
for all $\rho, \sigma \in \S$, all $\ve \in (0,1)$, and all $p, q=1-p \in (0,1)$.

Theorem~\ref{thm:postselected_steins}, which establishes a postselected generalisation of the quantum Stein's lemma in the composite setting, also immediately holds in every GPT. The proof is, once again, noticed not to use any assumptions save for linearity and convex duality.

We thus see that virtually all of our results on state discrimination immediately extend to GPTs, without requiring any restrictive assumptions on the latter. On the other hand, although we expect at least some of our channel discrimination results to also be generalisable to GPTs under some reasonable assumptions, we should note that this is far from immediate --- even defining  a `channel' is no trivial task in a framework so general, and tools such as the Choi--Jamiołkowski isomorphism might no longer be available to us. We thus leave the question of investigating channel discrimination in this setting as an open problem.


\section{Discussion}

The approach of our work is inherently different from conventional quantum hypothesis testing, yet we hope that it can ultimately prove helpful in shedding light also on the latter. The distinguishing feature of our results, and perhaps an unexpected finding, is how significantly the whole framework simplifies in the postselected setting: the optimal probabilities can be given straightforward closed-form expressions; the case of composite hypothesis testing reduces to a simple extension of the i.i.d.\ case; all terms but those linear in $n$ disappear from the asymptotic expansions; and, importantly, the task of channel discrimination not only admits efficiently computable error probabilities and exponents, but in fact reduces all discrimination strategies to the much simpler case of parallel protocols. This allowed us to essentially characterise completely the task of postselected quantum hypothesis testing with just a few concise proofs --- something that, in the conventional setting, has been a multi-decade endeavour requiring significant effort and the development of highly specialised methods.

An underlying property of our framework is the need to condition on a conclusive outcome of the measurement. This is an easily realisable operation: simply repeat the measurement protocol, disregarding measurement data corresponding to inconclusive outcomes. It is, however, conceivable that the conclusive outcome may have a low probability of occurrence, meaning that many repetitions of the measurement protocol would be needed to successfully discriminate the states or channels. Such a difference indeed makes direct comparisons between $n$-copy distinguishability protocols in the postselected and conventional settings difficult to perform. However, as we argued in~\cite{regula_2022-1}, allowing postselection is much more easily justified in the limit of infinitely many i.i.d.\ copies, where the definition of a rate may be adjusted depending on what `cost function' we wish to employ:  if we are interested in error rates per number of copies of states \emph{generated}, then the conventional setting appears more appropriate; if, however, we instead focus on how many copies of states need to be \emph{manipulated} at once, then any protocol may be repeated without incurring an additional cost, and a postselected setting is thus perfectly natural. The latter assumption is not necessarily an extravagant one --- generating many copies of states is far easier than coherently manipulating them with current technologies.

The idea of postselection has previously appeared in a range of contexts in quantum information, including --- besides the aforementioned state discrimination~\cite{ivanovic_1987,dieks_1988,peres_1988} --- quantum computing~\cite{aaronson_2005}, metrology~\cite{fiurasek_2006,gendra_2012,combes_2014,combes_2015,arvidsson-shukur_2020}, connections with spacetime geometry~\cite{lloyd_2011,lloyd_2011b,brun_2012}, and manipulation of quantum entanglement~\cite{gisin_1996,kent_1998,horodecki_1999-1} as well as more general quantum resources~\cite{reeb_2011,regula_2021-4,regula_2022-1}.
In such settings, our results can be viewed as ultimate limits that cannot be exceeded even if postselection is permitted: the optimal discrimination error bounds shown in our work cannot be beat by any measurement strategy, postselected or not, even in permissive frameworks such as ones allowing the increased distinguishing power of postselected closed timelike curves~\cite{lloyd_2011,brun_2012}.

Ultimately, although it might be contentious to claim that it has directly led to practical advancements, postselection has unquestionably contributed to the understanding of the foundations of quantum mechanics~\cite{aharonov_1988,pusey_2014,oreshkov_2015}, and even formed the conceptual basis of the first proposals for quantum supremacy experiments~\cite{harrow_2017}.
In a similar manner, we hope that our results can find use both in the formalisation of the foundations of quantum information, as well as in the study of the limits of practical state and channel discrimination protocols.


\subsection*{Acknowledgments}

We thank Yonglong Li and Marco Tomamichel for discussions on related settings studied in classical hypothesis testing~\cite{gutman_1989}. 
B.R.\ was partially supported by the Japan Society for the Promotion of Science (JSPS) KAKENHI Grant No.\ 21F21015 and the JSPS Postdoctoral Fellowship for Research in Japan. L.L.\ was supported by the Alexander von Humboldt Foundation. M.M.W.\ acknowledges support from the NSF under grant no.~1907615.

 \bibliographystyle{apsc}
 \bibliography{main}


\end{document}